\newtheorem{theorem}{Theorem}
\newtheorem{lemma}{Lemma}
\newtheorem{proposition}{Proposition}
\newtheorem*{remark}{Remark}
\begin{document}

\title{\LARGE Probabilistically Robust SWIPT for Secrecy MISOME Systems}


\author{Muhammad R. A. Khandaker, Kai-Kit Wong, Yangyang Zhang,\\and Zhongbin Zheng\thanks{M. R. A. Khandaker and K.-K. Wong are with the Department of Electronic and Electrical Engineering, University College London, WC1E 7JE, United Kingdom (e-mail: $\rm m.khandaker@ucl.ac.uk$; $\rm kai\text{-}kit.wong@ucl.ac.uk$).}\thanks{Y. Zhang is with Kuang-Chi Institute of Advanced Technology, Shenzhen, China (e-mail: $\rm yangyang.zhang@kuang\text{-}chi.org$).}\thanks{Z. Zheng is with East China Institute of Telecommunications, China Academy of Information and Communications Technology, Shanghai, China (e-mail: $\rm ben@ecit.org.cn$).}\thanks{This work is supported by EPSRC under grant EP/K015893/1.}}

\maketitle
\begin{abstract}
This paper considers simultaneous wireless information and power transfer (SWIPT) in a multiple-input single-output (MISO) downlink system consisting of one multi-antenna transmitter, one single-antenna information receiver (IR), multiple multi-antenna eavesdroppers (Eves) and multiple single-antenna energy-harvesting receivers (ERs). The main objective is to keep the probability of the legitimate user's achievable secrecy rate outage as well as the ERs' harvested energy outage caused by channel state information (CSI) uncertainties below some prescribed thresholds. As is well known, the secrecy rate outage constraints present a significant analytical and computational challenge. Incorporating the energy harvesting (EH) outage constraints only intensifies that challenge. In this paper, we address this challenging issue using convex restriction approaches which are then proved to yield rank-one optimal beamforming solutions. Numerical results reveal the effectiveness of the proposed schemes.
\end{abstract}

\section{Introduction}
\IEEEPARstart{R}{esearchers} have long been investigating conventional ambient energy resources, such as solar and wind, for energising low-power electronic devices. However, the sporadic and unpredictable nature of these ambient sources makes energy harvesting critical for applications where quality-of-service (QoS) is of priority, and most of these conventional harvesting technologies are only applicable in certain environments and/or weather conditions \cite{krikidis_swipt_mag}. On the other hand, wireless power transfer via magnetic induction is not yet a viable means for widespread applications due to its extremely short distance of power transfer.

Recently, simultaneous wireless information and power transfer (SWIPT) has received enormous interest triggered by the prospects of powering energy-limited wireless devices via RF energy harvesting (EH) due to reductions in power requirements of electronics \cite{swipt_bc, swipt_archi, swipt_oppor, swipt_robust}. As RF signals transport information and energy sumultaneously, mobile users are actually blessed with access to both energy and data at the same time through SWIPT. Surprisingly, this fact did not attract much attention until as late as the last decade.

A practical challenge for SWIPT is that information and energy receiver (ER) circuits operate at very different power sensitivity level (e.g., $-10${\rm dBm} for ERs versus $-60${\rm dBm} for information receivers (IRs)). To store a useful amount of energy from harvesting, the ERs need access to a higher receive power. Due to this incompatibility between the two forms of receivers, two practical schemes, namely, time switching (TS) and power splitting (PS) have been proposed in the literature in order to enable SWIPT \cite{swipt_bc, swipt_archi, jrnl_swipt}. In particular, the authors of \cite{swipt_bc} investigated both schemes in a multiple-input multiple-output (MIMO) broadcasting scenario from a base station (BS) to two mobile receivers decoding information and harvesting energy at the same time. Although the TS scheme simplifies the receiver design, it compromises the efficiencies of the SWIPT technology since the receivers decode information and harvest energy in alternating time slots. Thus, in \cite{swipt_archi}, only the PS-based receiver architecture has been rigorously studied for SWIPT in a point-to-point system. The PS-based schemes have also been considered for multiple-input single-output (MISO) SWIPT multicasting systems with perfect as well as imperfect channel state information (CSI) in \cite{jrnl_swipt} and for MIMO multicasting systems in \cite{conf_swipt}.

Nevertheless, due to channel fading, wireless power transfer efficiency decays drastically as the transmitter-receiver distance increases. Since multiple-antenna techniques provide additional degrees of freedom (DoF) exploiting spatial diversity, installing multiple antennas can help combat channel fading in order to improve wireless power transfer efficiency \cite{swipt_bc}.

In order to further improve EH efficiency, a receiver-location based {\em near-far} scheduling scheme has also been proposed in the SWIPT literature for information and energy transmissions \cite{swipt_bc, swipt_mu}, in which the receivers only in closer vicinity to the transmitter are scheduled for harvesting energy. Although the scheme apparently seems to be beneficial for EH, actually it gives rise to an undesired security vulnerability for transmitting secret information in scenarios where ERs are supposed to be kept in the dark about the secret message. Specifically, ERs in the scheme have better fading channels than IRs and thus have higher probability to successfully decode the information sent to the IRs \cite{rui_secrecy_swipt, jrnl_secrecy}. On the other hand, the transmitter often needs to apply higher power in order to satisfy ERs' EH requirements which they normally do not need for information-only transmission. This makes secret messages susceptible to external eavesdropping attack as well. Information secrecy can be further degraded if the ERs start cooperating in order to perform joint decoding in an attempt to improve their interception. Thus, the SWIPT systems need to be carefully designed in order to be able to successfully transfer secret messages keeping the potential eavesdroppers ignorant of the secret message to the IR.


To make sure that the message is delivered secretly to the IR in a SWIPT system taking possible eavesdropping by the ERs into consideration, MISO secrecy beamforming schemes were proposed in \cite{rui_secrecy_swipt, jrnl_secrecy, jrnl_secrecy_sinr}. It was assumed in  \cite{rui_secrecy_swipt} that the ERs do not collude to perform joint decoding and that the CSI of all the nodes, including the eavesdroppers, was perfectly known at the transmitter. Nonetheless, in order to guarantee maximum information secrecy, it would be more meaningful to consider the worst-case scenario, where the ERs collude together to attempt to decode the data jointly such that the eavesdropping rate is maximized. Also, obtaining the eavesdroppers'  CSI perfectly is practically impossible. Hence the authors in \cite{jrnl_secrecy, jrnl_secrecy_sinr} considered robust design based on deterministic channel uncertainty models for SWIPT in scenarios where the ERs may collude together to perform joint decoding in an attempt to improve their interception. In \cite{sec_swipt_ofdma, sec_swipt_ofdma2}, physical layer security was studied for SWIPT in OFDMA networks.

Other works with secrecy in SWIPT either considered worst-case robust approaches in which the CSI errors are assumed to be within a bounded set, or correlation-based approaches in which the channel statistics is available. A delay-limited secrecy SWIPT system was considered in \cite{rui_fading_secrecy} with single-antenna nodes, while a randomization-guided rank-one suboptimal solution was proposed in \cite{wc_swipt_lb} for worst-case MISO secrecy SWIPT systems, and in \cite{wc_swipt_mimo} for MIMO SWIPT systems. Worst-case based MISO secrecy SWIPT optimization has also been considered in \cite{zheng_chu_swipt} for norm-bounded channel uncertainty model based on successive convex approximation approach.

Unfortunately, due to inaccurate channel estimation methods, it may not always be possible that the legitimate transmitter obtains these deterministic models perfectly \cite{arman_tit_sinr}. In such cases, secrecy as well as EH outage may occur. Hence, this paper considers robust secrecy optimization problems with probabilistic secrecy rate and EH constraints for MISO systems with multiple multi-antenna eavesdroppers (MISOME). Two probabilistic secrecy beamforming design problems have been considered namely (1) minimizing the transmit power subject to probabilistic QoS constraints, and (2) maximizing the secrecy rate subject to the total transmit power and secrecy and EH outage constraints. In contrast to the deterministic (worst- or average-case) models \cite{rui_secrecy_swipt, jrnl_secrecy, jrnl_secrecy_sinr, rui_fading_secrecy, wc_swipt_lb, wc_swipt_mimo, zheng_chu_swipt}, this approach offers a safe performance, guaranteeing a certain chance of successful QoS deliveries.

Our main objective is to maintain the probability of the legitimate user's achievable secrecy rate outage as well as the ERs' harvested energy outage caused by CSI uncertainties below given thresholds. As is already well known, the secrecy rate outage constraints present a significant analytical and computational challenge. Inclusion of the EH outage constraints only proliferates that challenge. Unfortunately, quadratic chance constraints generally do not have closed-form expressions, and are unlikely to be tractable. Therefore, it is quite common in the robust optimization literature to develop safe tractable approximations of the outage-based QoS constraints that are computationally efficient and are good in accuracies. Motivated by this, in this paper, we present three conservative approximation approaches namely, Bernstein-type inequality (BTI), $\mathcal{S}$-procedure, and large deviation inequality (LDI) \cite{bti, ben_tal_sproc, wk_ma_outage, zheng_chu_miso_out} based approaches in order to transform the probabilistic constraints into safe and tractable ones.

Applying semidefinite relaxation (SDR) technique, we show that the safe approximation results always yield rank-one optimal transmit covariance solution for the IR. Our simulation results also reveal that the BTI-based restriction approach is the best and the conventional $\mathcal{S}$-procedure based approach is comparatively the worst strategy in terms of secrecy rates, whereas the LDI-based approach appears to be a good compromise. Note that \cite{secrecy_swipt_rob} considered secure robust transmit beamforming design for minimizing the total transmit power under QoS constraints including SINR and SINR outage constraints at the eavesdroppers. The nonconvex probabilistic constraint was replaced with a tractable convex deterministic constraint by bounding the radius of the uncertainty region (comparable to our ${\mathcal S}$-procedure based approach). However, \cite{secrecy_swipt_rob} did not explicitly consider the secrecy rate or EH outage constraints, which are much more challenging to deal with but of great importance to SWIPT systems.

The rest of this paper is organized as follows. In Section~\ref{sec_sys}, the system model of a secret MISO SWIPT network is introduced. The secrecy rate constrained (SRC) power minimization problem is discussed in Section~\ref{sec_rpm} for the imperfect CSI case whereas in Section~\ref{sec_srm}, solutions to the secrecy rate maximization (SRM) problem are derived. Section~\ref{sec_sim} presents the simulation results that justify the significance of the proposed algorithms under various scenarios. Concluding remarks are provided in Section~\ref{sec_con}.

\emph{Notations}---Throughout this paper, we use the following notations. Boldface lowercase and uppercase letters are used to represent vectors and matrices, respectively. The symbol ${\bf I}_n$ denotes an $n\times n$ identity matrix, $\bf 0$ is a zero vector or matrix. Also, ${\bf A}^H$, ${\rm tr}({\bf A})$, ${\rm rank}({\bf A})$, and $|{\bf A}|$ represent the Hermitian (conjugate) transpose, trace, rank and determinant of a matrix ${\bf A}$; ${\rm Pr}[\cdot]$ represents the probability of an event;  $\|\cdot\|$ and $\|\cdot\|_{\rm F}$ represent the Euclidean norm and Frobenius norm, respectively; ${\bf A}\succeq {\bf 0}\, ({\bf A}\succ {\bf 0})$ means that ${\bf A}$ is a Hermitian positive semidefinite (definite) matrix. The notation ${\bf x}\sim \mathcal{CN}(\boldsymbol{\mu}, {\boldsymbol\Sigma})$ means that ${\bf x}$ is a random vector following a complex circularly symmetric Gaussian distribution with mean vector $\boldsymbol{\mu}$ and covariance matrix ${\boldsymbol\Sigma}$.

\begin{figure}[ht]
\centering
\includegraphics*[width=7cm]{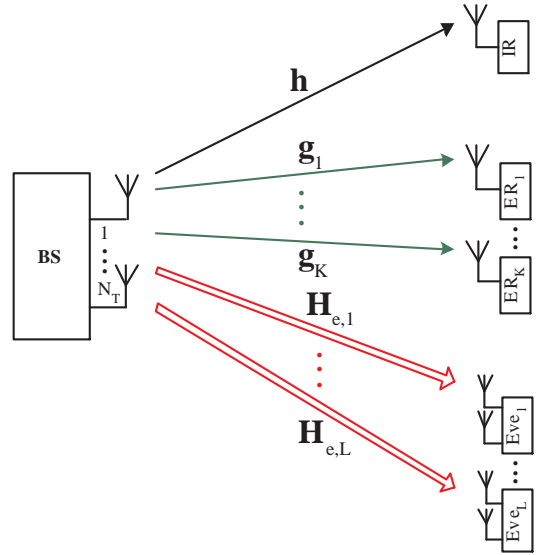}
\caption{A MISO SWIPT system with multiple-antenna eavesdroppers.}\label{sysmod}
\end{figure}

\section{System Model}\label{sec_sys}
A MISO downlink system is considered for SWIPT with $K + L + 1$ receivers as shown in Fig.~\ref{sysmod}. The BS transmits information to a legitimate user and energy to $K$ EH receivers keeping the information as secure as possible from $L$ eavesdroppers (Eves). It is assumed that the BS performs necessary user selection operation before transmission begins for decoding information. Note that similar assumption has also been made in numerous existing works in the secrecy literature for SWIPT, e.g., \cite{rui_secrecy_swipt, jrnl_secrecy, jrnl_secrecy_sinr, rui_fading_secrecy, wc_swipt_lb, wc_swipt_mimo, zheng_chu_swipt}. The transmitter or BS has $N_{\rm T} > 1$ transmitting antennas and each legitimate receiver (IR or ER) has single receiving antenna. Eves are equipped with $N_{{\rm e},i}$ antennas for $i = 1, \dots, L$. The BS performs linear transmit beamforming to send secret information to the IR. We assume that the ERs are also legitimate users of the network authorized for EH only and do not intend to overhear the message destined to the IR.\footnote{The results of this paper are readily extendible to study the impacts of untrusted ERs on the SWIPT system performance.} By letting ${\bf x}$ be the transmit signal vector, the received signals at the IR, the $k$th ER, and the $i$th Eve can be modeled, respectively, as
\begin{align}
y_{\rm d}&={\bf h}^H{\bf x} + n_{\rm d},\\
y_{{\rm h},k}&={\bf g}_{k}^H{\bf x} + n_{{\rm h},k}, ~\mbox{for }k=1,\dots, K,\\
{\bf y}_{{\rm e},i}&={\bf H}_{{\rm e},i}^H{\bf x} + {\bf n}_{{\rm e},i}, ~\mbox{for }i=1,\dots, L,
\end{align}
where ${\bf h}$, ${\bf g}_k$, and ${\bf H}_{{\rm e},i}$ are the conjugated complex channel vector (matrix) between the BS and the IR, the $k$th ER, and the $i$th Eve, respectively, $n_{\rm d}\sim\mathcal{CN}(0,\sigma_{\rm d}^2)$, $n_{{\rm h},k}\sim\mathcal{CN}(0,\sigma_{\rm h}^2)$, and ${\bf n}_{{\rm e},l}\sim\mathcal{CN}({\bf 0},\sigma_{\rm e}^2{\bf I}_{N_l})$ are the additive Gaussian noises at the IR, the $k$th ER, and the $i$th Eve, respectively. For notational simplicity, the path loss factors $P_{f,d} = \sqrt{L_cd_{\rm I}^{-\bar\kappa}}$, $P_{f,hk} = \sqrt{L_cd_{{\rm h},k}^{-\bar\kappa}}$, and $P_{f,ei} = \sqrt{L_cd_{{\rm e},i}^{-\bar\kappa}}$, with the path loss constant $L_c = G_TG_R\left(\frac{c}{4\pi f}\right)^2$ (in which $G_T$ and $G_R$ are the antenna gains of the transmitter and the receivers, respectively, $c$ is the speed of light, and $f$ is the carrier frequency), $d_{\rm I}$, $d_{{\rm h},k}$, and $d_{{\rm e},i}$ indicating the distance of the IR, $k$th ER, and $i$th Eve, respectively, from the BS, and $\bar\kappa$ (typically between $2.7$ and $3.5$) being the path loss exponent, are all assumed to be absorbed in the corresponding channel gains. The BS chooses ${\bf x}$ as ${\bf x}={\bf b}_{\rm I}s_{\rm I}$ where $s_{\rm I}\sim\mathcal{CN}(0,1)$ is the confidential information-bearing signal for the IR and ${\bf b}_{\rm I}$ is the transmit beamforming vector. Hence, ${\bf x}$ is a beamformed version of the message.

Note that multi-antenna eavesdroppers may interchangeably mean that single-antenna eavesdroppers located in a favourable location to cooperate may collude together to improve their interception. For ease of exposition, we further assume that all the Eves are colluding into multiple groups. In particular, Eves are assumed to perform joint maximum signal-to-noise ratio (SNR) receive beamforming. By denoting ${\bf Q}_{\rm I}\triangleq {\bf b}_{\rm I}{\bf b}_{\rm I}^H$ as the transmit covariance matrix, the mutual information (MI) between the BS and the IR is given by
\begin{equation}\label{C_I}
C_{\rm I}\left({\bf Q}_{\rm I}\right) = \log \left(1 + \frac{1}{\sigma_{\rm d}^2}{\bf h}^H{\bf Q}_{\rm I}{\bf h}\right),
\end{equation}
and that between the BS and the colluded Eves is given by
\begin{equation}\label{C_E}
C_{{\rm e},i}\left({\bf Q}_{\rm I}\right) = \log \left|{\bf I}_{N_{{\rm e},i}} + \frac{1}{\sigma_{\rm e}^2}{\bf H}_{{\rm e},i}^H{\bf Q}_{\rm I}{\bf H}_{{\rm e},i}\right|, ~\mbox{for }i = 1, \dots, L.
\end{equation}
Given ${\bf Q}_{\rm I}$, the achievable secrecy rate is given by \cite{goel_an}
\begin{equation}\label{cs}
C_{\rm s} = \min_i \left\{C_{\rm I}\left({\bf Q}_{\rm I}\right) - C_{{\rm e},i}\left({\bf Q}_{\rm I}\right)\right\}^+,
\end{equation}
where $\{a\}^+=\max(0,a)$. 
The harvested power at the $k$th ER is given by
\begin{equation}
E_k = \xi_k{\bf g}_{k}^H{\bf Q}_{\rm I}{\bf g}_{k},\label{eh}
\end{equation}
where $\xi_k\in(0,1]$ is the energy conversion efficiency of the energy transducers at the $k$th ER. 
For simplicity, it is assumed that the harvested energy due to the background noise at the EH receivers in \eqref{eh} is negligible and as a consequence can be ignored \cite{swipt_bc}.

In most of the existing works with secrecy for SWIPT, it is assumed that the instantaneous CSI of all the receivers is available at the transmitter. However, in practical wireless communication systems, perfect CSI is likely not available and an important issue is how to robustify a secure transmit design in the presence of imperfect CSI. As a consequence, our next exertion is to develop convex optimization algorithms that satisfy given chance constraints exploiting only imperfect CSI knowledge.

\section{Robust Power Minimization with Probabilistic Constraints}\label{sec_rpm}
In this section, we develop probabilistically robust algorithm for the SRC problem with secrecy rate and EH outage constraints. The interest here is in active Eves cases, where the Eves themselves are also users of the network and the transmitter aims to provide different services to different types of users. That is, the Eves are legitimate users for utilities other than the particular information destined to the IR. For such active eavesdroppers, the CSI can be estimated from the eavesdroppers' transmission. Thus we assume that the BS has incomplete knowledge of the ERs' as well as Eves' channels while the IR's channel is perfectly known. The perfect IR's CSI assumption is quite widely exercised in the existing literature since the legitimate IR's CSI may be obtained at very high precision through secure control channels \cite{rui_secrecy_swipt, jrnl_secrecy, wc_swipt_mimo}.

We consider the commonly used Gaussian channel error model for the imperfect CSI. To model the imperfect CSI, we assume that the actual channels ${\bf g}_{k}$, for $k = 1, \dots, K,$ lie in the neighbourhood of the estimated channels $\hat{\bf g}_{k}$, for $k = 1, \dots, K,$ available at the transmitter. The channel error vectors are assumed to have circularly symmetric complex Gaussian (CSCG) distribution. Thus, the actual channels are modeled as
\begin{equation}
{\bf g}_{k} = \hat{\bf g}_{k} + {\boldsymbol\delta}_{{\rm g},k}, ~\mbox{for }k = 1,\dots, K,
\end{equation}
where $\hat{\bf g}_{k} \in\mathbb{C}^{N_{\rm T}\times 1}$ is the estimated CSI of the $k$th ER and ${\boldsymbol\delta}_{{\rm g},k} \in\mathbb{C}^{N_{\rm T}\times 1}$, for $k = 1, \dots, K,$ represent the channel uncertainties such that ${\boldsymbol\delta}_{{\rm g},k} \sim {\mathcal {CN}}\left({\bf 0},{\bf R}_{{\rm g},k}\right)$, in which ${\bf R}_{{\rm g},k}$ is a positive semidefinite matrix. Similarly, the Eves' channel uncertainty model can be represented by
\begin{equation}
{\bf H}_{{\rm e},i} = \hat{\bf H}_{{\rm e},i} + {\boldsymbol\Delta}_{{\rm H},i}, ~\mbox{for }i = 1,\dots, L,
\end{equation}
where $\hat{\bf H}_{{\rm e},i} \in\mathbb{C}^{N_{\rm T}\times N_{{\rm e},i}}$ is the estimated CSI of the $i$th Eve and ${\boldsymbol\Delta}_{{\rm H},i} \in\mathbb{C}^{N_{\rm T}\times N_{{\rm e},i}}$, for $i = 1, \dots, L,$ represent the channel uncertainties such that ${\boldsymbol\delta}_{{\rm H},i} \triangleq {\rm vec}\left({\boldsymbol\Delta}_{{\rm H},i}\right) \sim {\mathcal {CN}}\left({\bf 0},{\bf R}_{{\rm H},i}\right)$, where ${\bf R}_{{\rm H},i}$ is a positive semidefinite matrix. Thus, the probabilistically robust SRC power minimization problem can be formulated as
\begin{subequations}\label{PrConst1}
\begin{eqnarray}
\min_{{\bf Q}_{\rm I}} \!\!\!& &\!\!\! {\rm tr} \left({\bf Q}_{\rm I}\right)\label{PrConst1_o}\\
{\rm s.t.} \!\!\!& &\!\!\!\!\! {\rm Pr}\!\left[\min_i \left\{\!C_{\rm I}\!\left({\bf Q}_{\rm I}\right) \!-\! \hat{C}_{{\rm e},i}\!\left({\bf Q}_{\rm I}\right)\!\right\}^+\!\!\!\ge\!\! R\right]\!\!\ge\! 1 \!-\! p,\!\forall i,\label{PrConst1_c1}\\
\!\!\!& &\!\!\! {\rm Pr}\left[\min_k \hat E_k \geq \eta_k\right] \ge 1 - q, \forall k,\label{PrConst1_c2}\\
\!\!\!& &\!\!\! {\bf Q}_{\rm I} \succeq {\bf 0}, \label{PrConst1_c3}
\end{eqnarray}
\end{subequations}
where $\hat C_{{\rm e},i}\left({\bf Q}_{\rm I}\right) = \log \left|{\bf I}_{N_{{\rm e},i}} + \frac{1}{\sigma_{\rm e}^2} {\bf H}_{{\rm e},i}^H{\bf Q}_{\rm I} {\bf H}_{{\rm e},i}\right|$ is the $i$th Eves' MI with ${\bf H}_{{\rm e},i} = \hat{\bf H}_{{\rm e},i} + {\boldsymbol\Delta}_{{\rm H},i}, ~\mbox{for }i = 1,\dots, L, {\boldsymbol\delta}_{{\rm H},i} \triangleq {\rm vec}\left({\boldsymbol\Delta}_{{\rm H},i}\right) \sim {\mathcal {CN}}\left(0,{\bf R}_{{\rm H},i}\right)$ and $\hat E_k = \xi_k{\bf g}_{k}^H{\bf Q}_{\rm I}{\bf g}_{k}$ is the average energy harvested with ${\bf g}_{k} = \hat{\bf g}_{k} + {\boldsymbol\delta}_{{\rm g},k}$, for $k = 1,\dots, K, {\boldsymbol\delta}_{{\rm g},k} \sim {\mathcal {CN}}\left({\bf 0},{\bf R}_{{\rm g},k}\right)$. The problem formulation in \eqref{PrConst1} guarantees that the IR can successfully decode its message at least $(1 - p)\times 100\%$ of the time. Similarly, the ERs can harvest the minimum required amount of power at least $(1 - q)\times 100\%$ of the time.
As the desired outage probability decreases, the size of the feasible sets described by \eqref{PrConst1_c1} and \eqref{PrConst1_c2} decreases. Hence, one might expect an increase in the required transmit power with decreasing outage probabilities.

Notice that the rank constraint on ${\bf Q}_{\rm I}$ has been relaxed in problem \eqref{PrConst1}. 
An important issue that arises from the relaxation is the rank of the resulting solution. The removal of the rank constraint ${\rm rank}({\bf Q}_{\rm I}) = 1$ means that the solution obtained through solving problem \eqref{PrConst1} may have rank higher than one. A common practice of overcoming this is to apply some rank approximation procedure (e.g., randomization) to the optimal ${\bf Q}_{\rm I}^*$ to find a feasible beamforming solution ${\bf b}_{\rm I}$. However, in this paper, we aim to prove the tightness of the rank relaxation.

The problem is still nonconvex due to the probabilistic constraints involving $\log\det$ functions. To make those constraints more tractable, we introduce the following lemma.

\begin{lemma}[\cite{qli_sdp}]\label{lem_tr}
For any positive semidefinite matrix ${\bf A}$, the following inequality holds
\begin{equation}\label{lem_det}
|{\bf I}+{\bf A}|\geq 1+{\rm tr}({\bf A})
\end{equation}
and the equality in \eqref{lem_det} holds if and only if ${\rm rank}({\bf A}) \leq 1$.
\end{lemma}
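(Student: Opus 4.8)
The plan is to prove the determinant inequality $|\mathbf{I}+\mathbf{A}|\geq 1+\operatorname{tr}(\mathbf{A})$ for positive semidefinite $\mathbf{A}$ by diagonalizing $\mathbf{A}$ and reducing the matrix statement to an elementary scalar inequality about its eigenvalues. First I would note that since $\mathbf{A}\succeq\mathbf{0}$ is Hermitian, it admits an eigendecomposition $\mathbf{A}=\mathbf{U}\boldsymbol{\Lambda}\mathbf{U}^H$ with $\mathbf{U}$ unitary and $\boldsymbol{\Lambda}=\operatorname{diag}(\lambda_1,\dots,\lambda_n)$, $\lambda_j\geq 0$. Both sides of the claimed inequality are unitarily invariant: the determinant satisfies $|\mathbf{I}+\mathbf{A}|=|\mathbf{U}(\mathbf{I}+\boldsymbol{\Lambda})\mathbf{U}^H|=\prod_{j=1}^{n}(1+\lambda_j)$, and the trace satisfies $\operatorname{tr}(\mathbf{A})=\sum_{j=1}^{n}\lambda_j$. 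Thus the matrix inequality is equivalent to the scalar inequality
\begin{equation}\label{scalar_ineq}
\prod_{j=1}^{n}(1+\lambda_j)\geq 1+\sum_{j=1}^{n}\lambda_j,\quad \lambda_j\geq 0.
\end{equation}

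The core of the argument is then to establish \eqref{scalar_ineq}, which I would do by induction on $n$ (or, equivalently, by expanding the product). The expansion of $\prod_{j}(1+\lambda_j)$ is the sum over all subsets of the $\lambda_j$'s of the corresponding products; this equals $1+\sum_{j}\lambda_j$ plus all the higher-order terms $\sum_{i<j}\lambda_i\lambda_j+\cdots+\lambda_1\cdots\lambda_n$. Because every $\lambda_j\geq 0$, each of these remaining terms is nonnegative, so dropping them can only decrease the product, which gives \eqref{scalar_ineq} immediately. Translating back through the unitary invariance established above yields the matrix inequality \eqref{lem_det}.

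For the equality condition, I would track exactly when the discarded terms vanish. Equality holds in \eqref{scalar_ineq} if and only if \emph{all} the higher-order symmetric terms are zero, i.e.\ $\lambda_i\lambda_j=0$ for every pair $i\neq j$. This is precisely the condition that at most one eigenvalue is nonzero, which is equivalent to $\operatorname{rank}(\boldsymbol{\Lambda})\leq 1$, and hence $\operatorname{rank}(\mathbf{A})\leq 1$. The forward direction ($\operatorname{rank}(\mathbf{A})\leq 1\Rightarrow$ equality) is the easy check: with a single nonzero eigenvalue $\lambda_1$ both sides equal $1+\lambda_1$. The reverse direction requires the slightly more careful observation that with two or more strictly positive eigenvalues the product $\lambda_i\lambda_j$ for that pair is strictly positive, making the inequality strict.

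I do not anticipate a genuine obstacle here, since the result is a standard consequence of eigenvalue majorization; the only point demanding care is the \emph{if and only if} in the equality condition, where one must argue both that rank at most one forces equality and that equality forces rank at most one. The cleanest way to secure the nontrivial direction is the term-by-term nonnegativity bookkeeping described above, which makes the vanishing of all cross terms—and thus the rank-one characterization—transparent rather than relying on a separate argument.
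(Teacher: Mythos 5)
Your proposal is correct and follows essentially the same route as the paper: diagonalize $\mathbf{A}$, write $|\mathbf{I}+\mathbf{A}|=\prod_i(1+\lambda_i)$, expand the product, and observe that the dropped cross terms are nonnegative, with equality precisely when at most one eigenvalue is nonzero. Your treatment of the equality case is in fact slightly more careful than the paper's one-line remark, but there is no substantive difference in approach.
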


\begin{proof}
Let $r_A = {\rm rank}({\bf A})$. While the case of $r_A = 0$ is trivial, for $r_A \ge 1$, let $\lambda_1 \ge \lambda_2 \ge \cdots \lambda_{r_A} > 0$ denote the nonzero eigenvalues of ${\bf A}$. Accordingly, we have that
\begin{align}
|{\bf I}+{\bf A}| = \prod_{i=1}^{r_A}(1+\lambda_{i}) & = 1+\sum_{i=1}^{r_A}\lambda_{i}+\sum_{i\neq k}\lambda_{i}\lambda_{k} +\ldots \nonumber\\
& \geq 1+\sum_{i=1}^{r_A}\lambda_{i}=1+{\rm Tr}({\bf A}). \nonumber
\end{align}
Clearly, the above equality holds if and only if $r_A = 1$. 
\end{proof}

Now, by applying \emph{Lemma~1}, the secrecy rate outage constraint \eqref{PrConst1_c1} can be relaxed as
\begin{multline}
{\rm Pr}\left[\log \left(1 + \frac{1}{\sigma_{\rm d}^2}{\bf h}^H{\bf Q}_{\rm I}{\bf h}\right) - \log \left(1 + \frac{1}{\sigma_{\rm e}^2}{\rm tr}\left( {\bf H}_{{\rm e},i}^H{\bf Q}_{\rm I}\right.\right.\right.\\
\left.\left.\left.\times {\bf H}_{{\rm e},i}\right)\right) \ge R\right] \ge 1 - p,\forall i, \nonumber
\end{multline}
which is equivalent to
\begin{multline}
{\rm Pr}\left[\left(1 + \frac{1}{\sigma_{\rm d}^2}{\bf h}^H{\bf Q}_{\rm I}{\bf h}\right)\ge {2^R}\left(1 + \frac{1}{\sigma_{\rm e}^2}{\rm tr}\left( {\bf H}_{{\rm e},i}^H{\bf Q}_{\rm I}\right.\right.\right.\\
\left.\left.\left.\times {\bf H}_{{\rm e},i}\right)\right)\right]
\ge 1 - p,\forall i. \nonumber
\end{multline}
Rearranging the terms in the above equation yields
\begin{multline}
{\rm Pr}\left[{\rm tr}\left( {\bf H}_{{\rm e},i}^H{\bf Q}_{\rm I} {\bf H}_{{\rm e},i}\right) \le \frac{\sigma_{\rm e}^2}{2^R}\left(1 + \frac{1}{\sigma_{\rm d}^2}{\bf h}^H{\bf Q}_{\rm I}{\bf h}\right) - {\sigma_{\rm e}^2}\right] \\
\ge 1 - p,\forall i. \label{sec_out_prob}
\end{multline}
Replacing ${\bf H}_{{\rm e},i} = \hat{\bf H}_{{\rm e},i} + {\boldsymbol\Delta}_{{\rm H},i}$, and then performing some mathematical manipulations, we finally obtain from \eqref{sec_out_prob}

\begin{multline}
{\rm Pr}\left[{\rm tr}\left({\boldsymbol\Delta}_{{\rm H},i}^H{\bf Q}_{\rm I}{\boldsymbol\Delta}_{{\rm H},i} + {\boldsymbol\Delta}_{{\rm H},i}^H{\bf Q}_{\rm I}\hat {\bf H}_{{\rm e},i} + \hat {\bf H}_{{\rm e},i}^H{\bf Q}_{\rm I}{\boldsymbol\Delta}_{{\rm H},i}\right.\right.\\
\left.\left. + \hat {\bf H}_{{\rm e},i}^H{\bf Q}_{\rm I}\hat {\bf H}_{{\rm e},i}\right) \le \frac{\sigma_{\rm e}^2}{2^R}\left(1 + \frac{1}{\sigma_{\rm d}^2}{\bf h}^H{\bf Q}_{\rm I}{\bf h}\right) - {\sigma_{\rm e}^2}\right]\\
\ge 1 - p,\forall i. \label{sec_out_prob2}
\end{multline}

Recall that the relaxation \eqref{sec_out_prob} is in fact tight according to \emph{Lemma~\ref{lem_tr}} if ${\rm rank}\big({\bf Q}_{\rm I}\big) \leq 1$. Our goal is to reformulate problem \eqref{PrConst1} as a tractable convex problem and then prove that the relaxation in \eqref{sec_out_prob} is indeed tight for the chance-constrained secrecy problem by proving the rank-one structure of ${\bf Q}_{\rm I}$. To make the robust problem \eqref{PrConst1} more tractable to analyze and solve, we first transform the robust constraints in (\ref{PrConst1_c1}) and (\ref{PrConst1_c2}) into convex inequalities using advanced matrix inequality results in the optimization literature.

Note that the probability term in \eqref{sec_out_prob2} does not have a closed-form expression. Now we apply the following matrix identities to reformulate the secrecy outage constraint
\begin{subequations}\label{mat_id}
\begin{align}
{\rm vec}\left({\bf AXB}\right) & = \left({\bf B}^H \otimes{\bf A}\right){\rm vec}\left({\bf X}\right),\label{mat_id1}\\
{\rm tr}\left({\bf A}^H{\bf B}\right) & = {\rm vec}\left({\bf A}\right)^H{\rm vec}\left({\bf B}\right),\label{mat_id2}\\
\left({\bf A}\otimes{\bf B}\right)^H & = {\bf A}^H\otimes{\bf B}^H.\label{mat_id3}
\end{align}
\end{subequations}
Applying the identities in \eqref{mat_id}, we can express \eqref{sec_out_prob2} as
\begin{multline}\label{sec_out_prob3}
{\rm Pr}\left[{\boldsymbol\delta}_{{\rm H},i}^H\left({\bf I}_{N_{{\rm e},i}}\!\otimes\!{\bf Q}_{\rm I}\right){\boldsymbol\delta}_{{\rm H},i} + 2\Re\left\{{\boldsymbol\delta}_{{\rm H},i}^H\left({\bf I}_{N_{{\rm e},i}}\!\otimes\!{\bf Q}_{\rm I}\right)\hat {\bf h}_{{\rm e},i}\right\}\right.\\
\left. + \hat {\bf h}_{{\rm e},i}^H\left({\bf I}_{N_{{\rm e},i}}\otimes{\bf Q}_{\rm I}\right)\hat {\bf h}_{{\rm e},i}\le \frac{\sigma_{\rm e}^2}{2^R}\left(1 + \frac{1}{\sigma_{\rm d}^2}{\bf h}^H{\bf Q}_{\rm I}{\bf h}\right) - {\sigma_{\rm e}^2}\right]\\
\ge 1 \! - \! p,\forall i,
\end{multline}
where $\hat {\bf h}_{{\rm e},i} \triangleq {\rm vec}(\hat {\bf H}_{{\rm e},i})$. Since ${\boldsymbol\delta}_{{\rm H},i} \sim {\mathcal {CN}}\left({\bf 0},{\bf R}_{{\rm H},i}\right)$, ${\boldsymbol\delta}_{{\rm H},i}$ can be reexpressed as ${\boldsymbol\delta}_{{\rm H},i} = {\bf R}_{{\rm H},i}^{\frac{1}{2}}{\bf v}_{{\rm H},i}$ such that ${\bf v}_{{\rm H},i} \sim {\mathcal {CN}}\left({\bf 0},{\bf I}_{N_{\rm T}N_{{\rm e},i}}\right)$. Thus, \eqref{sec_out_prob3} can be reexpressed as
\begin{multline}
{\rm Pr}\left[{\bf v}_{{\rm H},i}^H\left[-{\bf R}_{{\rm H},i}^{\frac{1}{2}}\left({\bf I}_{N_{{\rm e},i}}\!\otimes\!{\bf Q}_{\rm I}\right){\bf R}_{{\rm H},i}^{\frac{1}{2}}\right]{\bf v}_{{\rm H},i} + 2\Re\left\{{\bf v}_{{\rm H},i}^H\right.\right.\\
\left.\times\left[-{\bf R}_{{\rm H},i}^{\frac{1}{2}}\left({\bf I}_{N_{{\rm e},i}}\!\otimes\!{\bf Q}_{\rm I}\right)\hat {\bf h}_{{\rm e},i}\right]\right\} + \frac{\sigma_{\rm e}^2}{2^R}\left(1 + \frac{1}{\sigma_{\rm d}^2}{\bf h}^H{\bf Q}_{\rm I}{\bf h}\right)\\
\left.- {\sigma_{\rm e}^2} - \hat {\bf h}_{{\rm e},i}^H\left({\bf I}_{N_{{\rm e},i}}\otimes{\bf Q}_{\rm I}\right)\hat {\bf h}_{{\rm e},i} \ge 0\right] \ge 1- p,\forall i. \label{sec_out_prob4}
\end{multline}
Similarly, the EH outage constraint can be expressed as
\begin{multline}\label{eh_out1}
{\rm Pr}\left[{\bf u}_{{\rm g},k}^H{\bf R}_{{\rm g},k}^{\frac{1}{2}}{\bf Q}_{\rm I}{\bf R}_{{\rm g},k}^{\frac{1}{2}}{\bf u}_{{\rm g},k} + 2\Re\left\{{\bf u}_{{\rm g},k}^H{\bf R}_{{\rm g},k}^{\frac{1}{2}}{\bf Q}_{\rm I}\hat {\bf g}_{k}\right\}\right.\\
\left.+ \hat {\bf g}_{k}^H{\bf Q}_{\rm I}\hat {\bf g}_{k} - \frac{\eta_k}{\xi_k} \ge 0\right] \ge 1 - q,\forall i,
\end{multline}
where ${\boldsymbol\delta}_{{\rm g},k} = {\bf R}_{{\rm g},k}^{\frac{1}{2}}{\bf u}_{{\rm g},k}$ with ${\bf u}_{{\rm g},k} \sim {\mathcal {CN}}\left({\bf 0},{\bf I}_{N_{\rm T}}\right)$. In the following, we tackle these probabilistic constraints pursuing some convex restriction approaches. Clearly, the constraints \eqref{sec_out_prob4} and \eqref{eh_out1} are of the form:
$${\rm Pr}\{{\bf e}^{H} {\bf Q} {\bf e} + 2 {\Re}\{{\bf e}^{H} {\bf r}\} + s \geq 0 \} \geq 1 - \rho.$$
If we can find a convex function $f({\bf Q}, {\bf r},s)$, such that
$${\rm Pr}\{{\bf e}^{H} {\bf Q} {\bf e} + 2 {\Re}\{{\bf e}^{H} {\bf r}\} + s \geq 0 \} \le f(\bf {Q}, \bf {r},s),$$
then we will readily have the implication \cite{wk_ma_outage}
\begin{multline}
f({\bf Q}, {\bf r},s) \leq \rho \\
\Longrightarrow {\rm Pr}\{{\bf e}^{H} {\bf Q} {\bf e} + 2 {\Re}\{{\bf e}^{H} {\bf r}\} + s \geq 0 \} \geq 1 - \rho. \label{impli_bti0}
\end{multline}
Hence, the L.H.S. of the implication in \eqref{impli_bti0} gives a safe approximation, which is convex, of the generally intractable probabilistic constraint in the R.H.S. In the following, we attempt to derive the convex restrictions $f(\bf {Q}, \bf {r},s)$ for tackling the probabilistic constraints. The derived restriction methods differ in terms of both computational complexity and tightness.

\subsection{Robust Optimization Based on BTI}
The relaxation step alone does not provide a convex approximation of the original problem. The semidefinite probabilistic constraints remain intractable. In order to make the secrecy and EH outage constraints more tractable, we consider BTI in this subsection. The Bernstein-type concentration inequalities play a central role to transform the probabilistic constraints into more tractable form based on large deviation inequality for complex Gaussian quadratic vector functions. For completeness, the inequality theorem is presented in {\em Lemma \ref{lemm_bti}} below.

\begin{lemma}[\bfseries BTI] \label{lemm_bti}
Consider the chance constraint
\begin{eqnarray}
{\rm Pr}\left[{\bf x}^{H}{\bf A}{\bf x}+2\Re\left\{{\bf x}^{H}{\bf r}\right\} + \theta \ge 0\right]\ge 1 - \rho,\label{chance_const}
\end{eqnarray}
where ${\bf x}$ is a standard complex Gaussian random vector, i.e., ${\bf x} \sim {\mathcal {CN}}\left({\bf 0},{\bf I}_n\right)$, the $3$-tuple $({\bf A}, {\bf r}, \theta)$ (${{\bf A} \in {\mathbb H}^{n\times n}}$ is a complex hermitian matrix, ${\bf r} \in {\mathbb C}^{n},$ ${b} \in {\mathbb R}$) forms a set of deterministic optimization variables, and $\rho \in (0,\, 1]$ is fixed. The following implication holds \cite{bti}:

\begin{subequations}\label{impli_bti}
${\rm Pr}\left[{\bf x}^{H}{\bf A}{\bf x}+2\Re\left\{{\bf x}^{H}{\bf r}\right\} + \theta \ge 0\right] \ge 1 - \rho$
\begin{empheq}[left={\Longleftarrow}\empheqlbrace]{align}
{\rm tr}\left({\bf A}\right) - \sqrt{-2\ln(\rho)}\psi + \ln(\rho)\omega + \theta & \ge 0,\\
\left\|\left[\begin{array}{c}{\rm vec}\left({\bf A}\right)\\ \sqrt{2}{\bf r}\end{array}\right]\right\| & \le \psi,\\
\omega{\bf I}_n + {\bf A} \succeq {\bf 0},\quad
\psi, \omega & \ge 0,
\end{empheq}
\end{subequations}
where $\psi, \omega \in \mathbb{R}$ are slack variables.
\end{lemma}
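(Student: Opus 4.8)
The plan is to deduce the implication from one underlying large-deviation bound for complex Gaussian quadratic forms and then to encode its deterministic sufficient condition as the three conditions in \eqref{impli_bti} through the slack variables $\psi,\omega$. The analytic core is the Bernstein-type inequality of \cite{bti}: for ${\bf x}\sim\mathcal{CN}({\bf 0},{\bf I}_n)$, any Hermitian ${\bf A}$, any ${\bf r}\in\mathbb{C}^{n}$, and any $t\ge 0$,
\[
{\rm Pr}\!\left[{\bf x}^{H}{\bf A}{\bf x}+2\Re\{{\bf x}^{H}{\bf r}\}\ge {\rm tr}({\bf A})-\sqrt{2t}\,\nu-t\,s^{+}\right]\ge 1-e^{-t},
\]
where $\nu\triangleq\sqrt{\|{\bf A}\|_{\rm F}^{2}+2\|{\bf r}\|^{2}}$ and $s^{+}\triangleq\max\{\lambda_{\max}(-{\bf A}),0\}$. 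To establish this from scratch I would diagonalize ${\bf A}={\bf U}{\boldsymbol\Lambda}{\bf U}^{H}$, note that ${\bf z}={\bf U}^{H}{\bf x}$ is again standard so the form decouples into independent terms $\lambda_{i}|z_{i}|^{2}+2\Re\{\bar z_{i}\tilde r_{i}\}$ with $\tilde{\bf r}={\bf U}^{H}{\bf r}$, apply a Chernoff bound to $\exp(-t(\cdot))$ using the closed-form moment generating function of each term, and control the exponent with the elementary estimate $-\ln(1-u)-u\le u^{2}/[2(1-|u|)]$.

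Next I would set $t=-\ln\rho$, so that $e^{-t}=\rho$ and $-t\,s^{+}=\ln(\rho)\,s^{+}$. The displayed bound then guarantees ${\bf x}^{H}{\bf A}{\bf x}+2\Re\{{\bf x}^{H}{\bf r}\}\ge {\rm tr}({\bf A})-\sqrt{-2\ln\rho}\,\nu+\ln(\rho)\,s^{+}$ with probability at least $1-\rho$. Consequently, a sufficient condition for the chance constraint \eqref{chance_const} is that this deterministic lower bound, shifted by $\theta$, be nonnegative:
\[
{\rm tr}({\bf A})-\sqrt{-2\ln\rho}\,\nu+\ln(\rho)\,s^{+}+\theta\ge 0 .
\]

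Finally I would relax the two data-dependent quantities $\nu$ and $s^{+}$ upward into the slacks, which is safe precisely because of the signs. The second inequality in \eqref{impli_bti} enforces $\psi\ge\|[{\rm vec}({\bf A});\sqrt{2}{\bf r}]\|=\nu$, and since $\sqrt{-2\ln\rho}\ge 0$, replacing $\nu$ by $\psi$ only lowers the left-hand side above. Likewise, $\rho\in(0,1]$ gives $\ln\rho\le 0$, so replacing $s^{+}$ by any $\omega\ge s^{+}$ again only lowers it; and $\omega{\bf I}_{n}+{\bf A}\succeq{\bf 0}$ together with $\omega\ge 0$ is exactly $\omega\ge\max\{\lambda_{\max}(-{\bf A}),0\}=s^{+}$. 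Hence the first inequality in \eqref{impli_bti} implies the sufficient condition of the previous step, which implies \eqref{chance_const}, proving the claim.

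The only genuinely difficult step is the core tail bound, where the sharp constants $\sqrt{2t}\,\nu$ and $t\,s^{+}$ must be extracted from a careful moment-generating-function estimate; for this I would rely on \cite{bti}. The remaining manipulations are a routine but sign-sensitive conversion of the resulting sufficient condition into the semidefinite-representable form \eqref{impli_bti}.
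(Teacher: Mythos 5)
Your proposal is correct. Note that the paper itself gives no proof of this lemma --- it is quoted directly from the cited reference \cite{bti} (see also \cite{wk_ma_outage}) --- and your reconstruction follows exactly the derivation used there: the decoupled Bernstein-type tail bound with deviation terms $\sqrt{2t}\,\nu$ and $t\,s^{+}$, the substitution $t=-\ln\rho$, and the sign-safe upward relaxation of $\nu=\|[\,{\rm vec}({\bf A});\sqrt{2}{\bf r}\,]\|$ and $s^{+}=\max\{\lambda_{\max}(-{\bf A}),0\}$ into the slacks $\psi$ and $\omega$. The sign bookkeeping (using $\sqrt{-2\ln\rho}\ge 0$ and $\ln\rho\le 0$ so that enlarging $\psi$ and $\omega$ only weakens the deterministic lower bound) is exactly the point that makes the three conditions a \emph{safe} restriction, and you have it right.
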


Note that \eqref{impli_bti} is jointly convex in $\bf A$, $\psi$, and $\omega$ and represents an efficiently computable convex restriction of  the chance constraint \eqref{chance_const}. Thus using BTI, we can derive closed-form upper bounds on the violation probability to construct an efficiently computable convex function.

Indeed, for each constraint in \eqref{sec_out_prob4}, the following correspondence can be shown for all $i$:
\begin{subequations}
\begin{align}
{\bf A}_i & = -{\bf R}_{{\rm H},i}^{\frac{1}{2}}\left({\bf I}_{N_{{\rm e},i}}\otimes{\bf Q}_{\rm I}\right){\bf R}_{{\rm H},i}^{\frac{1}{2}},\\
{\bf r}_i & = -{\bf R}_{{\rm H},i}^{\frac{1}{2}}\left({\bf I}_{N_{{\rm e},i}}\otimes{\bf Q}_{\rm I}\right)\hat {\bf h}_{{\rm e},i},\\
\theta_i & \!=\! \frac{\sigma_{\rm e}^2}{2^R}\!\left(\!1 \!+\! \frac{1}{\sigma_{\rm d}^2}{\bf h}^H{\bf Q}_{\rm I}{\bf h}\!\right) \!-\! {\sigma_{\rm e}^2} \!-\! \hat {\bf h}_{{\rm e},i}^H\!\left({\bf I}_{N_{{\rm e},i}}\!\otimes\!{\bf Q}_{\rm I}\right)\!\hat {\bf h}_{{\rm e},i}.
\end{align}
\end{subequations}
Based on {\em Lemma \ref{lemm_bti}}, a convex safe approximation of the secrecy rate constraint in \eqref{sec_out_prob4} can be equivalently represented by
\begin{subequations}\label{impli_bti_sec}
\begin{align}
&{\rm tr}\left({\bf R}_{{\rm H},i}^{\frac{1}{2}}\left({\bf I}_{N_{{\rm e},i}}\otimes{\bf Q}_{\rm I}\right){\bf R}_{{\rm H},i}^{\frac{1}{2}}\right) + \hat {\bf h}_{{\rm e},i}^H\left({\bf I}_{N_{{\rm e},i}}\otimes{\bf Q}_{\rm I}\right)\hat {\bf h}_{{\rm e},i}- \frac{\sigma_{\rm e}^2}{2^R} \nonumber\\
& \!\!+\! \sqrt{-2\ln(p)}\psi_i \!-\! \ln(p)\omega_i
 - \frac{\sigma_{\rm e}^2}{2^R\sigma_{\rm d}^2}{\bf h}^H{\bf Q}_{\rm I}{\bf h} + {\sigma_{\rm e}^2} \le 0,\\
& \left\|\left[\begin{array}{c}{\rm vec}\left({\bf R}_{{\rm H},i}^{\frac{1}{2}}\left({\bf I}_{N_{{\rm e},i}}\otimes{\bf Q}_{\rm I}\right){\bf R}_{{\rm H},i}^{\frac{1}{2}}\right)\\ \sqrt{2}{\bf R}_{{\rm H},i}^{\frac{1}{2}}\left({\bf I}_{N_{{\rm e},i}}\otimes{\bf Q}_{\rm I}\right)\hat {\bf h}_{{\rm e},i}\end{array}\right]\right\| \le \psi_i,\\
& \omega_i{\bf I}_{N_{\rm T}N_{{\rm e},i}}\!\! - {\bf R}_{{\rm H},i}^{\frac{1}{2}}\left({\bf I}_{N_{{\rm e},i}}\!\otimes{\bf Q}_{\rm I}\right){\bf R}_{{\rm H},i}^{\frac{1}{2}} \succeq {\bf 0}, \psi_i, \omega_i \!\ge\! 0, \!\!\forall i,
\end{align}
\end{subequations}
and that of the EH outage constraint \eqref{eh_out1} can be equivalently recast into
\begin{subequations}\label{impli_bti_eh}
\begin{align}
&{\rm tr}\left({\bf R}_{{\rm g},k}^{\frac{1}{2}}{\bf Q}_{\rm I}{\bf R}_{{\rm g},k}^{\frac{1}{2}}\right) + \hat {\bf g}_{k}^H{\bf Q}_{\rm I}\hat {\bf g}_{k} + \sqrt{-2\ln(q)}\nu_k - \ln(q)\varphi_k\nonumber\\
&\qquad\qquad\qquad\qquad\qquad\qquad\qquad\qquad - \frac{\eta_k}{\xi_k} \ge 0,\\ 
& \left\|\left[\begin{array}{c}{\rm vec}\left({\bf R}_{{\rm g},k}^{\frac{1}{2}}{\bf Q}_{\rm I}{\bf R}_{{\rm g},k}^{\frac{1}{2}}\right)\\ \sqrt{2}{\bf R}_{{\rm g},k}^{\frac{1}{2}}{\bf Q}_{\rm I}\hat {\bf g}_{k}\end{array}\right]\right\| \le \nu_k,\\
& \varphi_k{\bf I}_{N_{\rm T}} + {\bf R}_{{\rm g},k}^{\frac{1}{2}}{\bf Q}_{\rm I}{\bf R}_{{\rm g},k}^{\frac{1}{2}} \succeq {\bf 0}, ~~ \nu_k, \varphi_k \ge 0, \forall k,
\end{align}
\end{subequations}
where $q \in (0,\, 1]$ is fixed, and $\nu, \varphi \in \mathbb{R}$ are slack variables. Incorporating \eqref{impli_bti_sec} and \eqref{impli_bti_eh}, the power minimization problem \eqref{PrConst1} can be equivalently reformulated as
\begin{subequations}\label{PrConst2}
\begin{align}
& \min_{{\bf Q}_{\rm I}, \{\psi_i\}, \{\omega_i\}, \{\nu_k\}, \{\varphi_k\}} ~~ ~~ {\rm tr} \left({\bf Q}_{\rm I}\right) \quad {\rm s.t.} \label{PrConst2_o}\\
~~ & ~~ {\rm tr}\left({\bf R}_{{\rm H},i}^{\frac{1}{2}}\left({\bf I}_{N_{{\rm e},i}}\!\otimes\!{\bf Q}_{\rm I}\right){\bf R}_{{\rm H},i}^{\frac{1}{2}}\right) + \hat {\bf h}_{{\rm e},i}^H\left({\bf I}_{N_{{\rm e},i}}\otimes{\bf Q}_{\rm I}\right)\hat {\bf h}_{{\rm e},i}\nonumber\\
& \!+\! \sqrt{\!-2\ln(p)}\psi_i \!-\! \ln(p)\omega_i \!-\! \frac{\sigma_{\rm e}^2}{2^R\sigma_{\rm d}^2}{\bf h}^H{\bf Q}_{\rm I}{\bf h} \!-\! \frac{\sigma_{\rm e}^2}{2^R} \!+\! {\sigma_{\rm e}^2} \!\le\! 0,\label{PrConst2_c1}\\
& ~~ \left\|\left[\begin{array}{c}{\rm vec}\left({\bf R}_{{\rm H},i}^{\frac{1}{2}}\left({\bf I}_{N_{{\rm e},i}}\!\otimes\!{\bf Q}_{\rm I}\right){\bf R}_{{\rm H},i}^{\frac{1}{2}}\right)\\ \sqrt{2}{\bf R}_{{\rm H},i}^{\frac{1}{2}}\left({\bf I}_{N_{{\rm e},i}}\!\otimes\!{\bf Q}_{\rm I}\right)\hat {\bf h}_{{\rm e},i}\end{array}\right]\right\|  \le \psi_i,\label{PrConst2_c2}\\
& ~~ \omega_i{\bf I}_{N_{\rm T}N_{{\rm e},i}} - {\bf R}_{{\rm H},i}^{\frac{1}{2}}\left({\bf I}_{N_{{\rm e},i}}\!\otimes\!{\bf Q}_{\rm I}\right){\bf R}_{{\rm H},i}^{\frac{1}{2}} \succeq {\bf 0},\label{PrConst2_c3}\\
& ~~ {\rm tr}\left({\bf R}_{{\rm g},k}^{\frac{1}{2}}{\bf Q}_{\rm I}{\bf R}_{{\rm g},k}^{\frac{1}{2}}\right) + \hat {\bf g}_{k}^H{\bf Q}_{\rm I}\hat {\bf g}_{k} - \sqrt{-2\ln(q)}\nu_k\nonumber\\
&\qquad\qquad\qquad\qquad\qquad + \ln(q)\varphi_k - \frac{\eta_k}{\xi_k} \ge 0,\label{PrConst2_c4}\\
& ~~ \left\|\left[\begin{array}{c}{\rm vec}\left({\bf R}_{{\rm g},k}^{\frac{1}{2}}{\bf Q}_{\rm I}{\bf R}_{{\rm g},k}^{\frac{1}{2}}\right)\\ \sqrt{2}{\bf R}_{{\rm g},k}^{\frac{1}{2}}{\bf Q}_{\rm I}\hat {\bf g}_{k}\end{array}\right]\right\| \le \nu_k,\label{PrConst2_c5}\\
& ~~ \varphi_k{\bf I}_{N_{\rm T}} + {\bf R}_{{\rm g},k}^{\frac{1}{2}}{\bf Q}_{\rm I}{\bf R}_{{\rm g},k}^{\frac{1}{2}} \succeq {\bf 0} ,\label{PrConst2_c6}\\
& ~~ {\bf Q}_{\rm I} \succeq {\bf 0}, \quad \psi_i, \omega_i \ge 0, \forall i, \quad \nu_k, \varphi_k \ge 0, \forall k.\label{PrConst2_c7}
\end{align}
\end{subequations}
The problem \eqref{PrConst2} is convex and can be efficiently solved using interior-point based solvers \cite{cvx}. 

\begin{theorem}\label{thm_rank_bti}
Suppose that the relaxed problem \eqref{PrConst2} is feasible for $R>0$. The optimal solution must satisfy ${\rm rank}\big({\bf Q}_{\rm I}\big) = 1$.
\end{theorem}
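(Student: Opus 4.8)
The plan is to establish the rank-one property through a Lagrange-duality/KKT analysis of the convex program \eqref{PrConst2}, exploiting the fact that the only term in the optimality conditions that couples $\mathbf{Q}_{\rm I}$ to a rank-one matrix is the IR channel term $\mathbf{h}\mathbf{h}^H$. First I would form the Lagrangian, attaching a multiplier $\lambda_i \ge 0$ to each scalar secrecy constraint \eqref{PrConst2_c1}, a positive semidefinite multiplier $\mathbf{M}_i \succeq \mathbf{0}$ to the LMI \eqref{PrConst2_c3}, $\mu_k \ge 0$ to the EH constraint \eqref{PrConst2_c4}, $\mathbf{N}_k \succeq \mathbf{0}$ to the LMI \eqref{PrConst2_c6}, second-order-cone multipliers to the norm constraints \eqref{PrConst2_c2} and \eqref{PrConst2_c5}, and $\mathbf{Z} \succeq \mathbf{0}$ to $\mathbf{Q}_{\rm I} \succeq \mathbf{0}$. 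Writing the stationarity condition $\partial \mathcal{L}/\partial \mathbf{Q}_{\rm I} = \mathbf{0}$, the goal is to collect all Eve- and EH-related contributions into a single Hermitian matrix and isolate the IR term.

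The main computational step is differentiating the Kronecker-structured terms. Using the identity $\mathrm{tr}\big((\mathbf{I}\otimes\mathbf{Q}_{\rm I})\mathbf{X}\big)$ and its gradient, the derivative of each term such as $\mathrm{tr}(\mathbf{R}_{{\rm H},i}^{1/2}(\mathbf{I}\otimes\mathbf{Q}_{\rm I})\mathbf{R}_{{\rm H},i}^{1/2})$ or $\hat{\mathbf{h}}_{{\rm e},i}^H(\mathbf{I}\otimes\mathbf{Q}_{\rm I})\hat{\mathbf{h}}_{{\rm e},i}$ reduces to a partial trace over the $N_{{\rm e},i}$-dimensional factor of a positive semidefinite matrix, which is again positive semidefinite. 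After collecting terms I expect the stationarity condition to take the form $\mathbf{Z} = \mathbf{I} + \boldsymbol{\Phi} - \beta\,\mathbf{h}\mathbf{h}^H$, where $\boldsymbol{\Phi}$ gathers the partial traces of $\mathbf{R}_{{\rm H},i}$, of $\hat{\mathbf{h}}_{{\rm e},i}\hat{\mathbf{h}}_{{\rm e},i}^H$ and of $\mathbf{R}_{{\rm H},i}^{1/2}\mathbf{M}_i\mathbf{R}_{{\rm H},i}^{1/2}$, together with the EH and second-order-cone dual contributions, and $\beta = \big(\sum_i \lambda_i\big)\sigma_{\rm e}^2/(2^R\sigma_{\rm d}^2) \ge 0$ is the coefficient of the single rank-one IR term.

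With this structure in hand the rank count is short. Provided $\boldsymbol{\Phi} \succeq \mathbf{0}$, the matrix $\mathbf{I} + \boldsymbol{\Phi} \succ \mathbf{0}$ has full rank $N_{\rm T}$, and since $\beta\,\mathbf{h}\mathbf{h}^H$ has rank at most one, the inequality $\mathrm{rank}(\mathbf{A}-\mathbf{B}) \ge \mathrm{rank}(\mathbf{A}) - \mathrm{rank}(\mathbf{B})$ gives $\mathrm{rank}(\mathbf{Z}) \ge N_{\rm T} - 1$. Complementary slackness $\mathbf{Z}\mathbf{Q}_{\rm I} = \mathbf{0}$ then forces $\mathrm{range}(\mathbf{Q}_{\rm I}) \subseteq \mathrm{null}(\mathbf{Z})$, so $\mathrm{rank}(\mathbf{Q}_{\rm I}) \le N_{\rm T} - \mathrm{rank}(\mathbf{Z}) \le 1$. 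Finally, $\mathbf{Q}_{\rm I} = \mathbf{0}$ would yield $C_{\rm I}(\mathbf{Q}_{\rm I}) = 0$ and hence a secrecy rate of zero, violating \eqref{PrConst2_c1} for $R > 0$; therefore $\mathbf{Q}_{\rm I} \neq \mathbf{0}$ and $\mathrm{rank}(\mathbf{Q}_{\rm I}) = 1$.

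The hard part will be verifying that $\boldsymbol{\Phi}$ is genuinely positive semidefinite. While the partial traces of the positive semidefinite Eve and LMI terms are manifestly positive semidefinite, the EH constraint contributes a term of the opposite sign, of the form $-\mu_k(\mathbf{R}_{{\rm g},k} + \hat{\mathbf{g}}_k\hat{\mathbf{g}}_k^H)$, and the second-order-cone duals must be pinned down through complementary slackness on \eqref{PrConst2_c2}, \eqref{PrConst2_c3}, \eqref{PrConst2_c5} and \eqref{PrConst2_c6}. The crux is to show that these remaining contributions net out to a positive semidefinite matrix, so that $\mathbf{I} + \boldsymbol{\Phi}$ stays positive definite and the only rank-reducing term in $\mathbf{Z}$ is the lone rank-one matrix $\beta\,\mathbf{h}\mathbf{h}^H$. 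I also note that $\beta > 0$ follows from the argument rather than being assumed: were $\beta = 0$, stationarity would give $\mathbf{Z} = \mathbf{I} + \boldsymbol{\Phi} \succ \mathbf{0}$ and hence $\mathbf{Q}_{\rm I} = \mathbf{0}$, again contradicting feasibility for $R > 0$.
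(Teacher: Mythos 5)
Your setup is the right one and matches the paper's: form the Lagrangian, write stationarity as $\mathbf{Z}=\mathbf{X}-\beta\,\mathbf{h}\mathbf{h}^H$ with $\mathbf{X}=\mathbf{I}+\boldsymbol{\Phi}$, show $\beta>0$ by a duality/contradiction argument, and finish with ${\rm rank}(\mathbf{A}-\mathbf{B})\ge{\rm rank}(\mathbf{A})-{\rm rank}(\mathbf{B})$ plus complementary slackness. You have also correctly located the crux: the EH constraints inject negative-signed terms $-\mu_k(\mathbf{R}_{{\rm g},k}+\hat{\mathbf{g}}_k\hat{\mathbf{g}}_k^H)-\mathbf{R}_{{\rm g},k}^{1/2}\mathbf{D}_k\mathbf{R}_{{\rm g},k}^{1/2}$ into $\boldsymbol{\Phi}$. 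But you leave that crux open (``the hard part will be verifying that $\boldsymbol{\Phi}$ is genuinely positive semidefinite\ldots the crux is to show that these remaining contributions net out''), and the route you sketch for closing it is the wrong one: there is no reason the EH dual contributions are dominated by the positive Eve-related terms, so $\boldsymbol{\Phi}\succeq\mathbf{0}$ is not something you can expect to establish by bookkeeping, and attaching second-order-cone multipliers to \eqref{PrConst2_c2} and \eqref{PrConst2_c5} only adds further sign-indefinite terms to chase.

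The paper closes the gap differently, and this is the idea your proposal is missing. First, it disposes of the SOC constraints \eqref{PrConst2_c2} and \eqref{PrConst2_c5} before forming the Lagrangian, by bounding them and invoking Nemirovski's lemma to rewrite them as LMIs that do not contain $\mathbf{Q}_{\rm I}$ at all, so their duals never enter the stationarity condition. Second, and crucially, it never proves $\boldsymbol{\Phi}\succeq\mathbf{0}$; it proves only that $\mathbf{X}$ is nonsingular, by contradiction: if $\boldsymbol{\pi}$ is a null vector of $\mathbf{X}$, then $\boldsymbol{\pi}^H\mathbf{Z}\boldsymbol{\pi}=-\beta|\mathbf{h}^H\boldsymbol{\pi}|^2\le 0$, and dual feasibility $\mathbf{Z}\succeq\mathbf{0}$ together with $\beta>0$ forces $\mathbf{h}^H\boldsymbol{\pi}=0$ and $\mathbf{Z}\boldsymbol{\pi}=\mathbf{0}$; the null space of $\mathbf{X}$ would then lie in the null space of $\mathbf{h}\mathbf{h}^H$, no information could reach the IR, and the positive secrecy rate requirement $R>0$ would be violated. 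Hence ${\rm rank}(\mathbf{X})=N_{\rm T}$, and the rank count you wrote goes through. Without this step (or an equivalent one), your argument is conditional on an unproved and, in general, false premise about $\boldsymbol{\Phi}$.
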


\begin{proof}
See Appendix~\ref{proof_thm_rank_bti}.
\end{proof}

\begin{figure*}[t]
\setcounter{equation}{30}
\begin{equation} \label{sec_sproc_lmi}
{\boldsymbol\Theta}_i\left({\bf Q}_{\rm I},\mu_{{\rm H},i}\right) \triangleq \left[\begin{array}{cc}\mu_{{\rm H},i}{\bf I}_{N_{\rm T}N_{{\rm e},i}} -{\bf R}_{{\rm H},i}^{\frac{1}{2}}\left({\bf I}_{N_{{\rm e},i}}\!\otimes\!{\bf Q}_{\rm I}\right){\bf R}_{{\rm H},i}^{\frac{1}{2}} & -{\bf R}_{{\rm H},i}^{\frac{1}{2}}\left({\bf I}_{N_{{\rm e},i}}\!\otimes\!{\bf Q}_{\rm I}\right)\hat {\bf h}_{{\rm e},i}\\
-\hat {\bf h}_{{\rm e},i}^H\left({\bf I}_{N_{{\rm e},i}}\!\otimes\!{\bf Q}_{\rm I}\right){\bf R}_{{\rm H},i}^{\frac{1}{2}} & \tau_{{\rm H},i} - \mu_{{\rm H},i}\gamma_{{\rm e}}^2\end{array}\right]\succeq{\bf 0},
\end{equation}

\begin{equation} \label{eh_sproc_lmi}
{\boldsymbol\Upsilon}_k\left({\bf Q}_{\rm I},\mu_{{\rm g},k}\right) \triangleq \left[\begin{array}{cc}\mu_{{\rm g},k}{\bf I}_{N_{\rm T}} + {\bf R}_{{\rm g},k}^{\frac{1}{2}}{\bf Q}_{\rm I}{\bf R}_{{\rm g},k}^{\frac{1}{2}} & {\bf R}_{{\rm g},k}^{\frac{1}{2}}{\bf Q}_{\rm I}\hat {\bf g}_{k}\\
\hat {\bf g}_{k}^H{\bf Q}_{\rm I}{\bf R}_{{\rm g},k}^{\frac{1}{2}} & \hat {\bf g}_{k}^H{\bf Q}_{\rm I}\hat {\bf g}_{k} - \frac{\eta_k}{\xi_k} - \mu_{{\rm g},k}\gamma_{{\rm e}}^2\end{array}\right]\succeq{\bf 0},
\end{equation}
\hrulefill \normalsize
\end{figure*}
\setcounter{equation}{24}

\subsection{${\mathcal S}$-Procedure Based Approach}\label{sub_sec_sproc}
In this subsection, we develop a convex restriction approach in a conservative fashion for robust optimization. The main idea is to choose a set for the channel uncertainty region satisfying the probabilistic restriction. In contrast to norm-bounded CSI errors, we have the freedom to choose the set arbitrarily in this method according to the maximum tolerable outage probability. Towards this end, the following lemma is useful.
\begin{lemma}\label{lemm_sph_bound}
Consider an arbitrary set $\mathcal{A} \subset \mathbb{C}^{N_{\rm T}\times 1}$ satisfying ${\rm Pr}\left\{{\bf x} \in \cal{A}\right\} \ge 1 - \rho$. The following implication holds \cite{wk_ma_outage}:
\begin{multline}
{\bf x}^{H} {\bf A} {\bf x} + 2 {\Re}\{{\bf x}^{H} {\bf r} \} + \theta \geq 0, \forall {\bf x} \in {\cal A},\\
\Longrightarrow {\rm Pr}\{{\bf x}^{H} {\bf Q} {\bf x} + 2 {\Re}\{{\bf x}^{H} {\bf r}\} + \theta \geq 0 \} \geq 1 - \rho. \label{sph_bound}
\end{multline}
\end{lemma}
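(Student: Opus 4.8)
The plan is to prove the implication by a direct set-inclusion argument that exploits only the monotonicity of the probability measure; no concentration inequality or quadratic estimate is needed. First I would fix the deterministic triple $({\bf A}, {\bf r}, \theta)$ and introduce the quadratic function $g({\bf x}) \triangleq {\bf x}^{H}{\bf A}{\bf x} + 2\Re\{{\bf x}^{H}{\bf r}\} + \theta$ together with its nonnegativity event $\mathcal{G} \triangleq \{{\bf x} : g({\bf x}) \ge 0\}$. The deterministic premise on the left-hand side of \eqref{sph_bound} asserts exactly that $g({\bf x}) \ge 0$ for every ${\bf x} \in \mathcal{A}$, which is the set-theoretic containment $\mathcal{A} \subseteq \mathcal{G}$.

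Next I would invoke monotonicity of the measure: since $\mathcal{A} \subseteq \mathcal{G}$, any probability distribution assigns at least as much mass to $\mathcal{G}$ as to $\mathcal{A}$, so ${\rm Pr}\{{\bf x} \in \mathcal{G}\} \ge {\rm Pr}\{{\bf x} \in \mathcal{A}\}$. Combining this with the standing hypothesis ${\rm Pr}\{{\bf x} \in \mathcal{A}\} \ge 1 - \rho$ yields ${\rm Pr}\{g({\bf x}) \ge 0\} \ge 1 - \rho$, which is precisely the chance constraint on the right-hand side of \eqref{sph_bound}. This single chain of inequalities completes the argument.

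There is genuinely no computational obstacle here, and I do not expect to grind through any calculation: the content of the lemma is merely that enforcing the quadratic inequality robustly over a set of probability at least $1-\rho$ is a \emph{sufficient} condition for the corresponding chance constraint to hold at level $1-\rho$. The only point requiring care is bookkeeping, namely that the Hermitian matrix appearing in the deterministic premise and in the probabilistic conclusion must be the same object (the statement should carry ${\bf A}$ in both places), so that the event $\mathcal{G}$ whose probability we bound is literally defined by the inequality assumed over $\mathcal{A}$. In the intended application this matrix is the ${\bf A}_i$ of \eqref{sec_out_prob4} for the secrecy constraint and its analogue for the EH constraint \eqref{eh_out1}. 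The substantive work that this lemma merely enables—choosing $\mathcal{A}$ concretely (for instance a ball $\{{\bf x} : \|{\bf x}\| \le \gamma_{\rm e}\}$ calibrated so that ${\rm Pr}\{\|{\bf x}\| \le \gamma_{\rm e}\} \ge 1-\rho$) and then converting the resulting semi-infinite robust inequality into a tractable linear matrix inequality via the ${\mathcal S}$-procedure, as in \eqref{sec_sproc_lmi} and \eqref{eh_sproc_lmi}—lies outside the statement being proved and is deferred to the subsequent development.
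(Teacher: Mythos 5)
Your proof is correct: the premise is exactly the containment $\mathcal{A}\subseteq\{{\bf x}: g({\bf x})\ge 0\}$, and monotonicity of the probability measure immediately gives the chance constraint, which is the standard one-line argument behind this lemma (the paper itself offers no proof and simply defers to the cited reference). Your observation that the matrix in the conclusion should read ${\bf A}$ rather than ${\bf Q}$ for the events to coincide is also a genuine and necessary correction to the statement as printed.
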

That is, the worst-case robust constraint on the L.H.S.~of \eqref{sph_bound} is a safe approximation of the probabilistic constraint on the R.H.S. Based on {\em Lemma \ref{lemm_sph_bound}}, given the following deterministic quadratic constraint (from \eqref{sec_out_prob4})
\begin{multline}
{\bf v}_{{\rm H},i}^H\left[-{\bf R}_{{\rm H},i}^{\frac{1}{2}}\left({\bf I}_{N_{{\rm e},i}}\otimes{\bf Q}_{\rm I}\right){\bf R}_{{\rm H},i}^{\frac{1}{2}}\right]{\bf v}_{{\rm H},i}
+ 2\Re\left\{{\bf v}_{{\rm H},i}^H\left[-{\bf R}_{{\rm H},i}^{\frac{1}{2}}\right.\right.\\
\left.\left.\times\left({\bf I}_{N_{{\rm e},i}}\otimes{\bf Q}_{\rm I}\right)\right]\hat {\bf h}_{{\rm e},i}\right\} + \frac{\sigma_{\rm e}^2}{2^R}\left(1 + \frac{1}{\sigma_{\rm d}^2}{\bf h}^H{\bf Q}_{\rm I}{\bf h}\right) - {\sigma_{\rm e}^2}\\
- \hat {\bf h}_{{\rm e},i}^H\left({\bf I}_{N_{{\rm e},i}}\otimes{\bf Q}_{\rm I}\right)\hat {\bf h}_{{\rm e},i} \ge 0,\forall i, \label{sec_out_sproc1}
\end{multline}
choosing the following set for the channel uncertainty region
\begin{eqnarray}
\mathcal{A} = \{{\bf v}_{{\rm H},i} \in \mathbb{C}^{N_{\rm T}\times 1}| {\rm Pr}\left({\bf v}_{{\rm H},i}^H{\bf v}_{{\rm H},i} \le \gamma_{{\rm e}}^2\right) \ge 1- p\}, \forall i, \label{uncert_sec}
\end{eqnarray}
is sufficient to guarantee the probabilistic constraint in \eqref{sec_out_prob4}. Interestingly, with ${\bf v}_{{\rm H},i}$ defined as ${\bf v}_{{\rm H},i} \sim {\mathcal {CN}} \left(0, {\bf I}_{N_{\rm T}N_{{\rm e},i}}\right)$, it can be easily verified that $\|{\bf v}_{{\rm H},i}\|^2$ is a Chi-square ($\chi^2$) random variable with $2N_{\rm T}N_{{\rm e},i}$ degrees of freedom. The channel uncertainty region in \eqref{uncert_sec} always holds for $\gamma_{{\rm e}} = \sqrt{\frac{\mathcal{F}_{\chi_{m}^2}^{-1}(1-p)}{2}}$, where $\mathcal{F}_{\chi_{m}^2}^{-1}(a)$ is the inverse cumulative distribution function (CDF) of the Chi-square random variable $a$ with $m = 2N_{\rm T}N_{{\rm e},i}$ degrees of freedom. In fact, $\gamma_{{\rm e},i}$ can be interpreted as the radius of the ball $\mathcal{A}$ defining the channel uncertainty region. Thus the probabilistic constraint \eqref{sec_out_prob4} can be equivalently expressed by the following set of inequalities:
\begin{subequations}\label{sec_out_sproc2}
\begin{align}
&{\bf v}_{{\rm H},i}^H\left[-{\bf R}_{{\rm H},i}^{\frac{1}{2}}\left({\bf I}_{N_{{\rm e},i}}\!\otimes\!{\bf Q}_{\rm I}\right){\bf R}_{{\rm H},i}^{\frac{1}{2}}\right]{\bf v}_{{\rm H},i}
+ 2\Re\left\{{\bf v}_{{\rm H},i}^H\left[-{\bf R}_{{\rm H},i}^{\frac{1}{2}}\right.\right.\nonumber\\
&\left.\left.\times\left({\bf I}_{N_{{\rm e},i}}\!\otimes\!{\bf Q}_{\rm I}\right)\right]\hat {\bf h}_{{\rm e},i}\right\} + \frac{\sigma_{\rm e}^2}{2^R}\left(1 + \frac{1}{\sigma_{\rm d}^2}{\bf h}^H{\bf Q}_{\rm I}{\bf h}\right)- {\sigma_{\rm e}^2}\nonumber\\
&\qquad\qquad - \hat {\bf h}_{{\rm e},i}^H\left({\bf I}_{N_{{\rm e},i}}\otimes{\bf Q}_{\rm I}\right)\hat {\bf h}_{{\rm e},i} \ge 0,\forall i,\\
&\qquad\qquad\quad\qquad-{\bf v}_{{\rm H},i}^H{\bf v}_{{\rm H},i} + \gamma_{{\rm e}}^2 \ge 0,\forall i.
\end{align}
\end{subequations}

At this point, we apply the so-called ${\mathcal S}$-procedure \cite{boyd} to transform the constraint \eqref{sec_out_sproc2} into a more tractable linear matrix inequality (LMI). The ${\mathcal S}$-procedure is presented in {\em Lemma \ref{lemm_sproc}} below.

\begin{lemma}[\bfseries$\boldsymbol{\mathcal S}$-Procedure] \label{lemm_sproc}
Let $f_{i}({\bf x}), i = 1, 2,$ be defined as
\begin{equation}
f_{i}({\bf x})={\bf x}^{H}{\bf A}_{i}{\bf x}+2\Re\left\{{\bf b}_{i}^{H}{\bf x}\right\}+c_{i},
\end{equation}
where ${{\bf A}_{i} \in {\mathbb C}^{n\times n}}, {\bf b}_{i} \in {\mathbb C}^{n}, {c}_{i} \in {\mathbb R}$. The implication $f_{1}({\bf x}) \leq 0 \Rightarrow f_{2}({\bf x}) \leq 0$ holds if and only if there exists $\mu \geq 0$ such that
\begin{equation}
\mu\left[\begin{array}{cc}
{\bf A}_{1}&{\bf b}_{1}\\{\bf b}_{1}^{H}&c_{1}
\end{array}\right]-\left[\begin{array}{cc}
{\bf A}_{2}&{\bf b}_{2}\cr{\bf b}_{2}^{H}&c_{2}
\end{array}\right]\succeq{\bf 0}
\end{equation}
provided that there exists a point $\hat{{\bf x}}$ such that $f_1(\hat{{\bf x}})<0$.
\end{lemma}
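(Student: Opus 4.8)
The plan is to prove the two directions of the equivalence separately: the ``if'' part by a direct quadratic-form computation, and the ``only if'' part by homogenizing the two quadratics and appealing to semidefinite duality, with the one non-routine ingredient being the exactness of a semidefinite relaxation. I would first homogenize. Writing ${\bf M}_i \triangleq \left[\begin{smallmatrix} {\bf A}_i & {\bf b}_i \\ {\bf b}_i^H & c_i \end{smallmatrix}\right]$ for $i=1,2$ and lifting ${\bf x}$ to $\tilde{\bf x} \triangleq \left[\begin{smallmatrix} {\bf x} \\ 1 \end{smallmatrix}\right]$, a one-line check gives $f_i({\bf x}) = \tilde{\bf x}^H {\bf M}_i \tilde{\bf x}$. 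The ``if'' direction is then immediate: if $\mu \ge 0$ obeys $\mu{\bf M}_1 - {\bf M}_2 \succeq {\bf 0}$, then $\tilde{\bf x}^H(\mu{\bf M}_1 - {\bf M}_2)\tilde{\bf x} \ge 0$, i.e. $f_2({\bf x}) \le \mu f_1({\bf x})$ for every ${\bf x}$, so that $f_1({\bf x})\le 0$ together with $\mu\ge 0$ forces $f_2({\bf x})\le 0$.

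For the ``only if'' direction, I would read the implication as the statement that the quadratically constrained program $\sup\{f_2({\bf x}) : f_1({\bf x})\le 0\}$ has optimal value at most $0$. Replacing $\tilde{\bf x}\tilde{\bf x}^H$ by a general ${\bf X}\succeq{\bf 0}$ and dropping the rank constraint yields the semidefinite relaxation $\sup_{{\bf X}\succeq{\bf 0}}\{{\rm tr}({\bf M}_2{\bf X}) : {\rm tr}({\bf M}_1{\bf X})\le 0,\ {\rm tr}({\bf E}{\bf X})=1\}$, where ${\bf E}$ pins the lower-right entry of ${\bf X}$ to one. Its Lagrangian dual, with multiplier $\mu\ge 0$ on the inequality and $\lambda\in\mathbb{R}$ on the pin, is $\min\{\lambda : \mu{\bf M}_1 + \lambda{\bf E} - {\bf M}_2 \succeq {\bf 0},\ \mu\ge 0\}$. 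The Slater point $f_1(\hat{\bf x})<0$ makes the relaxation strictly feasible, so strong duality holds with dual attainment at some $(\mu,\lambda^\star)$ and the common value equals $\lambda^\star$. Provided the relaxation is exact, this value is also the value of the original program, whence $\lambda^\star\le 0$. Since $\lambda^\star\le 0$ and ${\bf E}\succeq{\bf 0}$ give $-\lambda^\star{\bf E}\succeq{\bf 0}$, writing $\mu{\bf M}_1 - {\bf M}_2 = (\mu{\bf M}_1 + \lambda^\star{\bf E} - {\bf M}_2) + (-\lambda^\star{\bf E})$ exhibits the matrix as a sum of two positive semidefinite terms, delivering the required LMI with this $\mu\ge 0$.

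The one genuinely substantive step, and the place I expect the difficulty, is justifying that the semidefinite relaxation is exact, since without the relaxed optimum itself being at most $0$ the dual bound does not transfer back to the original implication. This is the hidden convexity of single-constraint \emph{complex} quadratic problems. I would secure it via the convexity of the joint numerical range $\{(\tilde{\bf x}^H{\bf M}_1\tilde{\bf x},\,\tilde{\bf x}^H{\bf M}_2\tilde{\bf x}):\tilde{\bf x}\in\mathbb{C}^{n+1}\}$ of the two Hermitian forms, which over $\mathbb{C}$ follows from the Toeplitz--Hausdorff convexity of the numerical range of ${\bf M}_1 + \sqrt{-1}\,{\bf M}_2$ and holds in every dimension, unlike the real case, or equivalently by a rank-one decomposition of an optimal ${\bf X}^\star$ that preserves both functionals ${\rm tr}({\bf M}_1\cdot)$ and ${\rm tr}({\bf M}_2\cdot)$ and thereby recovers a rank-one, hence genuinely feasible, maximizer. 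Once exactness is in hand the Slater assumption is exactly what rules out a duality gap and the degenerate multiplier, completing the argument.
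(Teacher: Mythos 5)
The paper does not actually prove this lemma: it is quoted verbatim as a standard tool, with a citation to Boyd and Vandenberghe, and is then only \emph{applied} (to convert \eqref{sec_out_sproc2} into the LMI \eqref{sec_sproc_lmi}). So there is no in-paper proof to compare against; what you have written is an outline of how the complex $\mathcal{S}$-lemma is established in the optimization literature, and that outline is sound. Your homogenization, the trivial ``if'' direction, the identification of the ``only if'' direction with $\sup\{f_2({\bf x}):f_1({\bf x})\le 0\}\le 0$, the SDP relaxation and its dual, and the recovery of the multiplier $\mu$ from $\lambda^\star\le 0$ are all correct, and you have correctly located the one non-routine ingredient in the exactness of the relaxation, for which either the convexity of the conic joint numerical range of two Hermitian forms or the complex rank-one decomposition of Huang--Zhang is the right tool. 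Two details deserve tightening if you were to write this out in full. First, strict feasibility of the SDP requires a point with ${\bf X}\succ{\bf 0}$, not merely the rank-one lift of $\hat{\bf x}$; perturbing $\tilde{\hat{\bf x}}\tilde{\hat{\bf x}}^H$ by $\epsilon{\bf I}$ and renormalizing to ${\rm tr}({\bf E}{\bf X})=1$ fixes this while keeping ${\rm tr}({\bf M}_1{\bf X})<0$. Second, your relaxation carries three linear functionals (${\bf M}_1$, ${\bf M}_2$, and the pinning matrix ${\bf E}$), while the decomposition you invoke matches only two; the repair is to note that if every rank-one piece $ {\bf x}_j{\bf x}_j^H$ of ${\bf X}^\star$ had vanishing last coordinate then ${\rm tr}({\bf E}{\bf X}^\star)$ would be $0$ rather than $1$, so some piece can be rescaled to have last coordinate equal to one and hence yields a genuine point ${\bf x}$ with $f_1({\bf x})\le 0$ and $f_2({\bf x})>0$, the desired contradiction. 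With those two patches your argument is complete; it is also more informative than what the paper offers, since the paper treats the lemma as a black box.
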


\setcounter{equation}{32}

According to {\em Lemma \ref{lemm_sproc}}, \eqref{sec_out_sproc2} holds if and only if there exists $\mu_{{\rm H},i} \geq 0, \forall i,$ such that ${\boldsymbol\Theta}_i\left({\bf Q}_{\rm I},\mu_{{\rm H},i}\right) \succeq{\bf 0}$ (defined in \eqref{sec_sproc_lmi} at the top of the page)
where $\tau_{{\rm H},i} \triangleq \frac{\sigma_{\rm e}^2}{2^R}\left(1 + \frac{1}{\sigma_{\rm d}^2}{\bf h}^H{\bf Q}_{\rm I}{\bf h}\right) - {\sigma_{\rm e}^2} - \hat {\bf h}_{{\rm e},i}^H\left({\bf I}_{N_{{\rm e},i}}\otimes{\bf Q}_{\rm I}\right)\hat {\bf h}_{{\rm e},i}, \forall i$.

Similarly, the EH outage constraint \eqref{eh_out1} can be transformed to the LMI in \eqref{eh_sproc_lmi}
where $\mu_{{\rm g},k} \geq 0, \forall k$. By exploiting ${\mathcal S}$-Procedure, the power minimization problem \eqref{PrConst1} can be equivalently reformulated as
\begin{subequations}\label{PrConstS}
\begin{align}
\min_{{\bf Q}_{\rm I}, \{\mu_{{\rm H},i}\}, \{\mu_{{\rm g},k}\}} ~~ & ~ {\rm tr} \left({\bf Q}_{\rm I}\right)\label{PrConstS_o}\\
{\rm s.t.} ~~ & ~ {\boldsymbol\Theta}_i\left({\bf Q}_{\rm I},\mu_{{\rm H},i}\right)\succeq{\bf 0}, \forall i, \label{PrConstS_c1}\\
& ~ {\boldsymbol\Upsilon}_k\left({\bf Q}_{\rm I},\mu_{{\rm g},k}\right)\succeq{\bf 0}, \forall k,\label{PrConstS_c2}\\
& ~{\bf Q}_{\rm I} \!\succeq\! {\bf 0}, ~ \mu_{{\rm H},i} \!\ge\! 0, \forall i, ~ \mu_{{\rm g},k} \ge 0, \forall k. \label{PrConstS_c3}
\end{align}
\end{subequations}
The SDP problem \eqref{PrConstS} is convex and can be efficiently solved using interior-point based solvers \cite{cvx}. Interestingly, the following theorem states that the relaxed problem \eqref{PrConstS} always yields a rank-one transmit beamforming solution.

\begin{theorem}\label{thm_rank_sproc}
Suppose that the relaxed problem \eqref{PrConstS} is feasible for $R>0$. The optimal solution must satisfy ${\rm rank}\big({\bf Q}_{\rm I}\big) = 1$.
\end{theorem}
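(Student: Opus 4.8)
The plan is to attack the rank property through the Karush--Kuhn--Tucker (KKT) conditions of the convex program \eqref{PrConstS}. Since \eqref{PrConstS} is an SDP that is assumed feasible and (after trivially enlarging the slack variables) admits a strictly feasible interior point, Slater's condition holds, strong duality applies, and the KKT system is necessary and sufficient for optimality. I would introduce dual variables ${\boldsymbol\Phi}_i\succeq{\bf 0}$ for the secrecy LMIs ${\boldsymbol\Theta}_i\succeq{\bf 0}$ in \eqref{sec_sproc_lmi}, ${\boldsymbol\Psi}_k\succeq{\bf 0}$ for the EH LMIs ${\boldsymbol\Upsilon}_k\succeq{\bf 0}$ in \eqref{eh_sproc_lmi}, and ${\bf Z}\succeq{\bf 0}$ for the constraint ${\bf Q}_{\rm I}\succeq{\bf 0}$. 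The target is to show ${\rm rank}({\bf Z})\ge N_{\rm T}-1$. Then complementary slackness ${\bf Z}{\bf Q}_{\rm I}={\bf 0}$ confines the range of ${\bf Q}_{\rm I}$ to the (at most one-dimensional) null space of ${\bf Z}$, so ${\rm rank}({\bf Q}_{\rm I})\le 1$; and because $R>0$ forces ${\bf h}^H{\bf Q}_{\rm I}{\bf h}>0$, we have ${\bf Q}_{\rm I}\neq{\bf 0}$, hence ${\rm rank}({\bf Q}_{\rm I})=1$.

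The core computation is the stationarity condition $\nabla_{{\bf Q}_{\rm I}}\mathcal{L}={\bf 0}$, which I would evaluate using two sandwich identities that tame the Kronecker structure. Writing $\tilde{\bf R}_i\triangleq[\,{\bf R}_{{\rm H},i}^{\frac{1}{2}}\ \ \hat{\bf h}_{{\rm e},i}\,]$, the ${\bf Q}_{\rm I}$-dependent part of ${\boldsymbol\Theta}_i$ other than the desired-signal term equals $-\tilde{\bf R}_i^H({\bf I}_{N_{{\rm e},i}}\otimes{\bf Q}_{\rm I})\tilde{\bf R}_i$, so that ${\rm tr}({\boldsymbol\Phi}_i{\boldsymbol\Theta}_i)$ contributes $-{\bf B}_i$ to the gradient, where ${\bf B}_i\triangleq{\rm Tr}_i\big(\tilde{\bf R}_i{\boldsymbol\Phi}_i\tilde{\bf R}_i^H\big)$ is the block (partial) trace summing the $N_{\rm T}\times N_{\rm T}$ diagonal blocks over the $N_{{\rm e},i}$-fold factor; as ${\boldsymbol\Phi}_i\succeq{\bf 0}$ this partial trace satisfies ${\bf B}_i\succeq{\bf 0}$. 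For the ERs, with ${\bf G}_k\triangleq[\,{\bf R}_{{\rm g},k}^{\frac{1}{2}}\ \ \hat{\bf g}_k\,]$ one checks ${\boldsymbol\Upsilon}_k={\bf G}_k^H{\bf Q}_{\rm I}{\bf G}_k+(\text{terms free of }{\bf Q}_{\rm I})$, so ${\rm tr}({\boldsymbol\Psi}_k{\boldsymbol\Upsilon}_k)$ contributes ${\bf E}_k\triangleq{\bf G}_k{\boldsymbol\Psi}_k{\bf G}_k^H\succeq{\bf 0}$. Crucially, the IR channel enters \eqref{PrConstS} only through $\frac{\sigma_{\rm e}^2}{2^R\sigma_{\rm d}^2}{\bf h}^H{\bf Q}_{\rm I}{\bf h}$ inside the scalar corner $\tau_{{\rm H},i}$ of ${\boldsymbol\Theta}_i$, yielding a single rank-one contribution $c\,{\bf h}{\bf h}^H$ with $c=\frac{\sigma_{\rm e}^2}{2^R\sigma_{\rm d}^2}\sum_i[{\boldsymbol\Phi}_i]_{22}\ge 0$, where $[{\boldsymbol\Phi}_i]_{22}$ is the nonnegative bottom-right entry. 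Collecting everything, stationarity becomes
\[{\bf Z}+c\,{\bf h}{\bf h}^H+\sum_k{\bf E}_k={\bf I}_{N_{\rm T}}+\sum_i{\bf B}_i\succ{\bf 0}.\]

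The desired-signal direction is already controlled by this identity: the right-hand side is positive definite, so subtracting the single rank-one matrix $c\,{\bf h}{\bf h}^H$ leaves rank at least $N_{\rm T}-1$. The main obstacle I anticipate is the energy-harvesting term $\sum_k{\bf E}_k$, which sits on the \emph{same} side as ${\bf Z}$ rather than reinforcing ${\bf I}_{N_{\rm T}}$, and a crude bound would let it erode the null space of ${\bf Z}$ and admit ${\rm rank}({\bf Q}_{\rm I})\ge 2$; indeed a pure power-minimization-under-EH problem genuinely can have high-rank optima, and the naive rank-one replacement ${\bf Q}_{\rm I}\mapsto{\bf Q}_{\rm I}{\bf h}{\bf h}^H{\bf Q}_{\rm I}/({\bf h}^H{\bf Q}_{\rm I}{\bf h})$ fails precisely because it can lower the harvested energy $\hat{\bf g}_k^H{\bf Q}_{\rm I}\hat{\bf g}_k$. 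To close this gap I would post-multiply the stationarity identity by ${\bf Q}_{\rm I}$ and use ${\bf Z}{\bf Q}_{\rm I}={\bf 0}$ to get $({\bf I}_{N_{\rm T}}+\sum_i{\bf B}_i){\bf Q}_{\rm I}=c\,{\bf h}{\bf h}^H{\bf Q}_{\rm I}+\sum_k{\bf E}_k{\bf Q}_{\rm I}$, and then invoke the EH complementary slackness ${\boldsymbol\Psi}_k{\boldsymbol\Upsilon}_k={\bf 0}$ to show that ${\bf E}_k{\bf Q}_{\rm I}$ introduces no direction outside the span already contributed by ${\bf h}$; since ${\bf I}_{N_{\rm T}}+\sum_i{\bf B}_i$ is invertible, the left-hand side then has the same rank as ${\bf Q}_{\rm I}$ while the right-hand side has rank at most one, forcing ${\rm rank}({\bf Q}_{\rm I})\le 1$. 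The technical heart of the whole argument is therefore this bookkeeping of the EH LMI's complementary slackness against the block-trace/sandwich structure; once it is in place, positivity of ${\bf I}_{N_{\rm T}}+\sum_i{\bf B}_i$ and the single-rank-one desired-signal term finish the proof routinely.
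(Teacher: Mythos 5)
Your setup is the same as the paper's: KKT conditions of the SDP, the sandwich/partial-trace decomposition of ${\boldsymbol\Theta}_i$ and ${\boldsymbol\Upsilon}_k$, and the stationarity identity ${\bf Z}+c\,{\bf h}{\bf h}^H+\sum_k{\bf E}_k={\bf I}_{N_{\rm T}}+\sum_i{\bf B}_i$, which matches the paper's ${\bf Z}={\bf X}_{\rm S}-[{\bf 0}~~{\bf h}]{\bf T}[{\bf 0}~~{\bf h}]^H$ term for term. You also correctly identify the crux: the EH dual term $\sum_k{\bf E}_k={\bf G}_k{\boldsymbol\Psi}_k{\bf G}_k^H$ enters with the wrong sign and blocks the naive conclusion ${\rm rank}({\bf Z})\ge N_{\rm T}-1$. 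But your resolution of that crux is a genuine gap. You assert that the EH complementary slackness ${\boldsymbol\Psi}_k{\boldsymbol\Upsilon}_k={\bf 0}$ forces ${\bf E}_k{\bf Q}_{\rm I}$ to lie in the span already contributed by ${\bf h}$, so that the right-hand side of $({\bf I}_{N_{\rm T}}+\sum_i{\bf B}_i){\bf Q}_{\rm I}=c\,{\bf h}{\bf h}^H{\bf Q}_{\rm I}+\sum_k{\bf E}_k{\bf Q}_{\rm I}$ has rank at most one. Nothing supports this: ${\boldsymbol\Psi}_k{\boldsymbol\Upsilon}_k={\bf 0}$ only yields ${\boldsymbol\Psi}_k{\bf G}_k^H{\bf Q}_{\rm I}{\bf G}_k=-{\boldsymbol\Psi}_k{\boldsymbol\Lambda}_{{\rm g},k}$, a relation internal to the $k$th ER's LMI that involves ${\bf G}_k=[{\bf R}_{{\rm g},k}^{1/2}~~\hat{\bf g}_k]$ and says nothing about the column space of ${\bf G}_k{\boldsymbol\Psi}_k{\bf G}_k^H{\bf Q}_{\rm I}$ relative to ${\bf h}$; the ER channels are generically independent of ${\bf h}$. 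Since this unproven step is exactly the content of the theorem, the proposal as written does not close.

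The paper resolves the crux differently, and you would need to import that argument (or find a real substitute). First, it shows via strong duality that the secrecy dual weight is strictly positive (if all ${\bf V}_{{\rm H},i}={\bf 0}$, the dual optimal value would be nonpositive, contradicting ${\rm tr}({\bf Q}_{\rm I})>0$); your $c\ge 0$ is not enough, since $c=0$ would make the ${\bf h}{\bf h}^H$ term vanish. Second, rather than trying to tame ${\bf E}_k{\bf Q}_{\rm I}$, it proves ${\bf X}_{\rm S}={\bf I}_{N_{\rm T}}+\sum_i{\bf B}_i-\sum_k{\bf E}_k\succ{\bf 0}$ by contradiction: for any null vector ${\boldsymbol\pi}$ of ${\bf X}_{\rm S}$ one has ${\boldsymbol\pi}^H{\bf Z}{\boldsymbol\pi}=-c\,|{\bf h}^H{\boldsymbol\pi}|^2\le 0$, and ${\bf Z}\succeq{\bf 0}$ together with $c>0$ forces ${\bf h}^H{\boldsymbol\pi}=0$, which is then argued to be incompatible with delivering a positive secrecy rate $R>0$. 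Once ${\bf X}_{\rm S}\succ{\bf 0}$, the inequality ${\rm rank}({\bf A}-{\bf B})\ge{\rm rank}({\bf A})-{\rm rank}({\bf B})$ applied to ${\bf Z}={\bf X}_{\rm S}-c\,{\bf h}{\bf h}^H$ gives ${\rm rank}({\bf Z})\ge N_{\rm T}-1$, and your closing argument via ${\bf Z}{\bf Q}_{\rm I}={\bf 0}$ then finishes as you describe.
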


\begin{proof}
See Appendix~\ref{proof_thm_rank_sproc}.
\end{proof}

\begin{remark}
At this point, we would like to comment that the BTI-based approach has higher computational complexity compared to the $\mathcal{S}$-procedure based approach as the former involves a more compound mixture of different types of constraints \cite{wk_ma_outage}. However, we will perform a full complexity analysis later in this section.
\end{remark}

\subsection{LDI Based Approach}
Since the BTI and $\mathcal{S}$-procedure based approaches transform the chance-constrained optimization problem \eqref{PrConst1} into SDPs, the resulting safe designs are polynomial-time solvable \cite{boyd}. The SDPs can, however, be very expensive to solve if the size of the LMI constraints in \eqref{PrConst2} and \eqref{PrConstS} is sufficiently large. Hence in this subsection, our endeavour is to develop convex restrictions involving simpler conic constraints. The method follows from the decomposition-based LDI \cite{ldi} for complex Gaussian quadratic functions as defined in the following lemma.

\begin{lemma}\label{lemm_ldi}
\cite[Lemma~2]{wk_ma_outage}
Let ${\bf x} \sim {\mathcal {CN}}\left({\bf 0},{\bf I}_n\right)$ be a standard complex Gaussian random vector, and let ${{\bf A} \in {\mathbb H}^{n\times n}}$ and ${\bf r} \in {\mathbb C}^{n}$ be given. Then, for any $v > \frac{1}{\sqrt{2}}$ and $\zeta > 0$, we have
\begin{align}\,\,\,{\rm Pr}& \left \{{\bf x}^{H} {\bf A} {\bf x} + 2 {\rm Re}\{{\bf x}^{H} {\bf r}\} \leq {\rm tr} ({\bf A}) - \zeta \right \}\nonumber\\
& \leq \begin{cases}\exp \left (- {\frac{\zeta ^{2}} {4T^{2}}} \right)& \!\!for ~ 0 < \zeta \leq 2\bar {v}vT,\\
 \exp \left (- {\frac{\bar {v }v\zeta } {T}} + (\bar {v }v)^{2} \right) & \!\!for ~ \zeta > 2\bar {v }vT,\end{cases}
\end{align}
where $\bar {v} = 1- {\frac{1} {2v^{2}}}$ and $T = v\Vert {\bf A}\Vert _{F} + {\frac{1}{\sqrt {2}}}\Vert {\bf r}\Vert$.
\end{lemma}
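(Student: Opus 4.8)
The plan is to prove this large-deviation bound by the classical exponential-moment (Chernoff) method, exploiting the fact that the quadratic-plus-linear form $Y \triangleq \mathbf{x}^H\mathbf{A}\mathbf{x} + 2\Re\{\mathbf{x}^H\mathbf{r}\}$ has mean exactly $\mathbb{E}[Y] = \mathrm{tr}(\mathbf{A})$, so that the event in question is a lower deviation of $Y$ below its mean by $\zeta$. First I would introduce a free parameter $t>0$ and write, via Markov's inequality applied to $e^{-tY}$,
\[
\mathrm{Pr}\{Y \le \mathrm{tr}(\mathbf{A}) - \zeta\} \le e^{t\,\mathrm{tr}(\mathbf{A}) - t\zeta}\,\mathbb{E}\big[e^{-tY}\big].
\]
Because $\mathbf{A}$ is Hermitian, I would diagonalize it as $\mathbf{A} = \mathbf{U}\,\mathrm{diag}(\lambda_1,\dots,\lambda_n)\,\mathbf{U}^H$ and substitute $\mathbf{z} = \mathbf{U}^H\mathbf{x}$ (still standard complex Gaussian by unitary invariance) together with $\mathbf{s} = \mathbf{U}^H\mathbf{r}$. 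This decouples $Y$ into independent scalar contributions $\lambda_i|z_i|^2 + 2\Re\{\bar z_i s_i\}$, so the moment-generating function factorizes over the coordinates.

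Second, I would evaluate each scalar factor in closed form. Splitting $z_i$ into its real and imaginary parts and completing the square gives, for every $t$ with $1+t\lambda_i>0$,
\[
\mathbb{E}\Big[e^{-t(\lambda_i|z_i|^2 + 2\Re\{\bar z_i s_i\})}\Big] = \frac{1}{1+t\lambda_i}\exp\!\Big(\frac{t^2|s_i|^2}{1+t\lambda_i}\Big).
\]
Taking the product over $i$ and inserting it into the Chernoff bound, the overall exponent becomes $-t\zeta + K(t)$, where the cumulant function is $K(t) = \sum_i\big[t\lambda_i - \ln(1+t\lambda_i)\big] + \sum_i \tfrac{t^2|s_i|^2}{1+t\lambda_i}$.

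The crux of the argument, and the step I expect to be the main obstacle, is to bound $K(t)$ by the clean quadratic surrogate $K(t)\le T^2 t^2$ on an interval $t\in(0,\,\bar v v/T]$, with $T = v\|\mathbf{A}\|_F + \tfrac{1}{\sqrt2}\|\mathbf{r}\|$. Restricting $t$ so that $t\|\mathbf{A}\|_F \le \bar v = 1-\tfrac{1}{2v^2}$ (which follows from $tT\le \bar v v$ since $T\ge v\|\mathbf{A}\|_F$) keeps every $1+t\lambda_i \ge 1-\bar v = \tfrac{1}{2v^2}$ bounded away from zero, using $|\lambda_i|\le\|\mathbf{A}\|_F$; this both guarantees finiteness of the moment-generating function and lets one control the quadratic sum through the power-series estimate for $x-\ln(1+x)$ and the linear sum through the lower bound on $1+t\lambda_i$. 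The delicate point is that a naive variance proxy would combine $\|\mathbf{A}\|_F$ and $\|\mathbf{r}\|$ in a Pythagorean (root-sum-square) fashion, whereas the stated $T$ combines them additively and carries the free parameter $v$; producing exactly this additive, $v$-parametrized surrogate requires carefully tuned elementary inequalities, and the hypothesis $v>\tfrac{1}{\sqrt2}$ is precisely the condition under which the governing scalar inequality stays valid (both near the origin and at the extreme $t\lambda_i=-\bar v$).

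Finally, with $K(t)\le T^2 t^2$ established, I would optimize the Chernoff exponent $-t\zeta + T^2 t^2$ over the admissible range $(0,\,\bar v v/T]$. Its unconstrained minimizer is $t^\star=\zeta/(2T^2)$, which is feasible precisely when $\zeta\le 2\bar v v T$; substituting it yields the sub-Gaussian bound $\exp(-\zeta^2/(4T^2))$. When $\zeta> 2\bar v v T$ the minimizer saturates the boundary $t=\bar v v/T$, and substituting that value gives the sub-exponential bound $\exp\!\big(-\bar v v\zeta/T + (\bar v v)^2\big)$. These two cases reproduce exactly the piecewise statement, completing the argument. Alternatively, since the result is quoted verbatim, one may simply invoke \cite[Lemma~2]{wk_ma_outage} directly.
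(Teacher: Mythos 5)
Your overall strategy---Chernoff bounding after unitary diagonalization of $\mathbf{A}$, which decouples $Y=\mathbf{x}^H\mathbf{A}\mathbf{x}+2\Re\{\mathbf{x}^H\mathbf{r}\}$ into independent per-coordinate terms---is the standard route for results of this type, and your closed form for each scalar moment-generating factor, $\tfrac{1}{1+t\lambda_i}\exp\!\left(\tfrac{t^2|s_i|^2}{1+t\lambda_i}\right)$, is correct, as is the final two-case optimization of $-t\zeta+T^2t^2$ over $t\in(0,\bar v v/T]$ \emph{conditional on} the surrogate bound $K(t)\le T^2t^2$. But that surrogate is precisely where your argument stops, and it cannot be closed for the stated $T=v\|\mathbf{A}\|_F+\tfrac{1}{\sqrt2}\|\mathbf{r}\|$. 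Take $\mathbf{A}=\mathbf{0}$: then $Y=2\Re\{\mathbf{x}^H\mathbf{r}\}\sim\mathcal{N}(0,2\|\mathbf{r}\|^2)$ and the cumulant is computed exactly as $K(t)=t^2\|\mathbf{r}\|^2=2T^2t^2>T^2t^2$, so no choice of ``carefully tuned elementary inequalities'' can repair the step. The obstruction you flagged is therefore real, not an artifact of loose estimates: in this corner case the sub-Gaussian branch $\exp(-\zeta^2/(4T^2))=\exp(-\zeta^2/(2\|\mathbf{r}\|^2))$ is strictly stronger than the optimal Chernoff bound $\exp(-\zeta^2/(4\|\mathbf{r}\|^2))$, and is even contradicted by the exact Gaussian tail (e.g.\ with $v=3$ and $\zeta=4T$ one has ${\rm Pr}[\mathcal{N}(0,1)\le-2]\approx0.0228>e^{-4}\approx0.0183$), so a Chernoff argument for the branch as printed is impossible.

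Your program does close if the linear term enters $T$ at full weight, i.e.\ $T=v\|\mathbf{A}\|_F+\|\mathbf{r}\|$: the range $t\le\bar v v/T$ forces $|t\lambda_i|\le\bar v$, whence $t\lambda_i-\ln(1+t\lambda_i)\le\tfrac{(t\lambda_i)^2}{2(1-\bar v)}=v^2t^2\lambda_i^2$ sums to $v^2t^2\|\mathbf{A}\|_F^2$, and $\sum_i t^2|s_i|^2/(1+t\lambda_i)\le t^2\|\mathbf{r}\|^2/(1-t\|\mathbf{A}\|_F)\le 2v\|\mathbf{A}\|_F\|\mathbf{r}\|t^2+\|\mathbf{r}\|^2t^2$ reduces to $t(2v\|\mathbf{A}\|_F+\|\mathbf{r}\|)\le2v$, which the same range of $t$ guarantees; together these give $K(t)\le T^2t^2$ and your case analysis then yields the claimed piecewise bound. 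For the purposes of this paper, note that the lemma is imported verbatim and no proof is attempted in the manuscript, so the only route actually consistent with the paper is your closing alternative: invoke \cite[Lemma~2]{wk_ma_outage} directly. If you do want a self-contained derivation, the constant multiplying $\|\mathbf{r}\|$ in $T$ must first be reconciled with that reference, since with $\tfrac{1}{\sqrt2}$ the sub-Gaussian branch is not provable by any exponential-moment argument.
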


The merit of {\em Lemma~\ref{lemm_ldi}} is that it helps decomposing a sum of dependent random variables into sums of independent random variables. This idea has been used extensively in the literature of probability theory; see, e.g., \cite{ldi, lmi_sdp}.

Next, we concentrate on deriving convex restrictions of \eqref{sec_out_prob4} and \eqref{eh_out1} based on the LDI approach using {\em Lemma~\ref{lemm_ldi}}. For equation \eqref{sec_out_prob4}, we set
\begin{align}
\zeta_i & = {\rm tr}\left(-{\bf R}_{{\rm H},i}^{\frac{1}{2}}\left({\bf I}_{N_{{\rm e},i}}\otimes{\bf Q}_{\rm I}\right){\bf R}_{{\rm H},i}^{\frac{1}{2}}\right) + \tau_{{\rm H},i},\\
T_{{\rm H},i} & = v\Vert -{\bf R}_{{\rm H},i}^{\frac{1}{2}}\left({\bf I}_{N_{{\rm e},i}}\otimes{\bf Q}_{\rm I}\right){\bf R}_{{\rm H},i}^{\frac{1}{2}}\Vert _{F}\nonumber\\
&\qquad + {\frac{1}{\sqrt {2}}}\Vert -{\bf R}_{{\rm H},i}^{\frac{1}{2}}\left({\bf I}_{N_{{\rm e},i}}\otimes{\bf Q}_{\rm I}\right)\hat {\bf h}_{{\rm e},i}\Vert.
\end{align}
Here $v$ is obtained from the solution to the following quadratic equation
\begin{equation}
\bar{v}v = (1-1/(2v^2))v = \sqrt{-\ln(p)}\label{quad_eqn}
\end{equation}
such that $v > \frac{1}{\sqrt{2}}$. It has been shown in \cite{wk_ma_outage} that such a $v$ must always exist, since $(1-1/(2v^2))v = 0$ when $v = \frac{1}{\sqrt{2}}$ and $(1-1/(2v^2))v$ is a monotonically increasing function of $v$ within the interval $[\frac{1}{\sqrt{2}},\infty)$.
Now from \eqref{quad_eqn}, we conclude that $p = \exp\left (-(\bar{v}v)^{2} \right)$. Furthermore, according to {\em Lemma~\ref{lemm_ldi}}, the chance constraint \eqref{sec_out_prob4} will be satisfied if we choose $\zeta_i = 2\sqrt{-\ln(p)}T_{{\rm H},i}$ for the interval $2\sqrt{-\ln(p)}T_{{\rm H},i} \le \zeta_i \le 2\bar{v}vT_{{\rm H},i}$. On the other hand, if $\zeta_i > 2\bar {v }vT_{{\rm H},i} = 2\sqrt{-\ln(p)}T_{{\rm H},i}$, then {\em Lemma~\ref{lemm_ldi}} yields
\begin{multline}
{\rm Pr}\left\{{\bf v}_{{\rm H},i}^H\left[-{\bf R}_{{\rm H},i}^{\frac{1}{2}}\left({\bf I}_{N_{{\rm e},i}}\otimes{\bf Q}_{\rm I}\right){\bf R}_{{\rm H},i}^{\frac{1}{2}}\right]{\bf v}_{{\rm H},i}\right.\\
\left. + 2\Re\left\{{\bf v}_{{\rm H},i}^H\left[-{\bf R}_{{\rm H},i}^{\frac{1}{2}}\left({\bf I}_{N_{{\rm e},i}}\otimes{\bf Q}_{\rm I}\right)\right]\hat {\bf h}_{{\rm e},i}\right\} + \tau_{{\rm H},i} \leq 0 \right\}\\
\leq \exp \left(- {\frac{\bar{v}v\zeta_i }{T_{{\rm H},i}}} + (\bar{v}v)^{2} \right) < \exp\left (-(\bar{v}v)^{2} \right) = p,
\end{multline}
which essentially indicates that the chance constraint \eqref{sec_out_prob4} will still be satisfied. 
The resulting convex restriction can thus be expressed as
\begin{equation}\label{soc1}
\!{\rm tr}(-{\bf R}_{{\rm H},i}^{\frac{1}{2}}\!\left({\bf I}_{N_{{\rm e},i}}\otimes{\bf Q}_{\rm I}\right)\!{\bf R}_{{\rm H},i}^{\frac{1}{2}}) \!+  \tau_{{\rm H},i} \!\geq\! 2\sqrt {-\ln (p)} T_{{\rm H},i}, \!\forall i.
\end{equation}
Thus using the definition of $T_{{\rm H},i}$, we obtain the following system of second-order cone (SOC) constraints from \eqref{soc1} in order to tackle the probabilistic constraint \eqref{sec_out_prob4}:
\begin{align}
\begin{cases}
{\rm tr}\left(-{\bf R}_{{\rm H},i}^{\frac{1}{2}}\left({\bf I}_{N_{{\rm e},i}}\otimes{\bf Q}_{\rm I}\right){\bf R}_{{\rm H},i}^{\frac{1}{2}}\right) + \tau_{{\rm H},i} \!\!\!\!\!& \ge\!\! 2\sqrt {-\ln (p)}\\
& \times\left(\bar\psi_i + \bar\omega_i\right),\\
\frac{1}{\sqrt{2}}\left\|-{\bf R}_{{\rm H},i}^{\frac{1}{2}}\left({\bf I}_{N_{{\rm e},i}}\otimes{\bf Q}_{\rm I}\right)\hat {\bf h}_{{\rm e},i}\right\| & \le \bar\psi_i,\\
v\left\|-{\bf R}_{{\rm H},i}^{\frac{1}{2}}\left({\bf I}_{N_{{\rm e},i}}\otimes{\bf Q}_{\rm I}\right){\bf R}_{{\rm H},i}^{\frac{1}{2}}\right\|_F & \le \bar\omega_i,
\end{cases}
\end{align}
where $\bar\psi_i, \bar\omega_i \in \mathbb{R}, \forall i,$ are slack variables. Similarly, defining the slack variables $\bar\nu_k$ and $\bar\varphi_k$, the EH outage constraint \eqref{eh_out1} can be expressed as
\begin{align}
\begin{cases}
{\rm tr}\left({\bf R}_{{\rm g},k}^{\frac{1}{2}}{\bf Q}_{\rm I}{\bf R}_{{\rm g},k}^{\frac{1}{2}}\right) + \hat {\bf g}_{k}^H{\bf Q}_{\rm I}\hat {\bf g}_{k} - \frac{\eta_k}{\xi_k} & \ge 2\sqrt {-\ln (q)}\\
& \times\left(\bar\nu_k + \bar\varphi_k\right),\\
\frac{1}{\sqrt{2}}\left\|{\bf R}_{{\rm g},k}^{\frac{1}{2}}{\bf Q}_{\rm I}\hat {\bf g}_{k}\right\| & \le \bar\nu_k,\\
v\left\|{\bf R}_{{\rm g},k}^{\frac{1}{2}}{\bf Q}_{\rm I}{\bf R}_{{\rm g},k}^{\frac{1}{2}}\right\|_F & \le \bar\varphi_k.
\end{cases}
\end{align}
By applying the LDI method to the outage constrained problem \eqref{PrConst1}, we obtain the convex restriction formulation of the power minimization problem as
\begin{subequations}\label{PrConstD}
\begin{align}
& \min_{{\bf Q}_{\rm I}, \{\bar\psi_i\}, \{\bar\omega_i\}, \{\bar\nu_k\}, \{\bar\varphi_k\}} ~~  ~~ {\rm tr} \left({\bf Q}_{\rm I}\right) \qquad {\rm s.t.} \label{PrConstD_o}\\
& \qquad {\rm tr}\left(-{\bf R}_{{\rm H},i}^{\frac{1}{2}}\left({\bf I}_{N_{{\rm e},i}}\otimes{\bf Q}_{\rm I}\right){\bf R}_{{\rm H},i}^{\frac{1}{2}}\right) + \tau_{{\rm H},i}\nonumber\\
& \qquad\qquad\qquad\qquad \ge 2\sqrt {-\ln (p)} \!\left(\bar\psi_i \!+ \bar\omega_i\right),\label{PrConstD_c1}\\
& \qquad \frac{1}{\sqrt{2}}\left\|-{\bf R}_{{\rm H},i}^{\frac{1}{2}}\left({\bf I}_{N_{{\rm e},i}}\otimes{\bf Q}_{\rm I}\right)\hat {\bf h}_{{\rm e},i}\right\| \le \bar\psi_i,\label{PrConstD_c2}\\
& \qquad v\left\|{\rm vec}\left(-{\bf R}_{{\rm H},i}^{\frac{1}{2}}\left({\bf I}_{N_{{\rm e},i}}\otimes{\bf Q}_{\rm I}\right){\bf R}_{{\rm H},i}^{\frac{1}{2}}\right)\right\| \le \bar\omega_i, \label{PrConstD_c3}\\
& \qquad {\rm tr}\left({\bf R}_{{\rm g},k}^{\frac{1}{2}}{\bf Q}_{\rm I}{\bf R}_{{\rm g},k}^{\frac{1}{2}}\right) + \hat {\bf g}_{k}^H{\bf Q}_{\rm I}\hat {\bf g}_{k} - \frac{\eta_k}{\xi_k}\nonumber\\
& \qquad\qquad\qquad\qquad  \ge 2\sqrt {-\ln (q)} \left(\bar\nu_k + \bar\varphi_k\right),\label{PrConstD_c4}\\
& \qquad \frac{1}{\sqrt{2}}\left\|{\bf R}_{{\rm g},k}^{\frac{1}{2}}{\bf Q}_{\rm I}\hat {\bf g}_{k}\right\| \le \bar\nu_k, \label{PrConstD_c5}\\
& \qquad v\left\|{\rm vec}\left({\bf R}_{{\rm g},k}^{\frac{1}{2}}{\bf Q}_{\rm I}{\bf R}_{{\rm g},k}^{\frac{1}{2}}\right)\right\| \le \bar\varphi_k,\label{PrConstD_c6}\\
& \qquad {\bf Q}_{\rm I} \succeq {\bf 0}, \quad \bar\psi_i, \bar\omega_i \ge 0, \forall i, \quad \bar\nu_k, \bar\varphi_k \ge 0, \forall k. \label{PrConstD_c7}
\end{align}
\end{subequations}

Since the above convex problem contains only SOC constraints, it can be solved more efficiently than the convex restrictions obtained using BTI based and $\mathcal S$-procedure based approaches. Finally, the following theorem studies the tightness of the rank relaxation in problem \eqref{PrConst1}.

\begin{theorem}\label{thm_rank_ldi}
Suppose that the relaxed problem \eqref{PrConstD} is feasible for $R>0$. The optimal solution must satisfy ${\rm rank}\big({\bf Q}_{\rm I}\big) = 1$.
\end{theorem}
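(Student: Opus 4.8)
The plan is to establish the rank-one property through the Karush--Kuhn--Tucker (KKT) conditions of the convex program \eqref{PrConstD}, following the same line of reasoning that underlies Theorem~\ref{thm_rank_bti}. Since \eqref{PrConstD} is convex and, by the feasibility assumption together with a strictly feasible point, satisfies Slater's condition, the KKT conditions are necessary and sufficient for optimality. I would introduce a dual matrix $\mathbf{Z}\succeq\mathbf{0}$ associated with $\mathbf{Q}_{\rm I}\succeq\mathbf{0}$, nonnegative multipliers for the scalar inequalities \eqref{PrConstD_c1} and \eqref{PrConstD_c4}, and second-order-cone dual variables for the norm constraints \eqref{PrConstD_c2}, \eqref{PrConstD_c3}, \eqref{PrConstD_c5}, \eqref{PrConstD_c6}. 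Complementary slackness $\mathbf{Z}\mathbf{Q}_{\rm I}^{\star}=\mathbf{0}$ then gives $\mathrm{rank}(\mathbf{Q}_{\rm I}^{\star})\le N_{\rm T}-\mathrm{rank}(\mathbf{Z})$, so the whole proof reduces to showing that $\mathrm{rank}(\mathbf{Z})\ge N_{\rm T}-1$.

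Next I would write out the stationarity condition $\partial\mathcal{L}/\partial\mathbf{Q}_{\rm I}=\mathbf{0}$. The objective contributes the identity $\mathbf{I}_{N_{\rm T}}$, and the decisive structural observation is that the legitimate channel $\mathbf{h}$ enters the entire Lagrangian only through the single term $\frac{\sigma_{\rm e}^2}{2^{R}\sigma_{\rm d}^2}\mathbf{h}^H\mathbf{Q}_{\rm I}\mathbf{h}$, which appears via $\tau_{{\rm H},i}$ in \eqref{PrConstD_c1}. Its gradient therefore contributes a single rank-one matrix, and the stationarity condition can be organised as $\mathbf{Z}=\mathbf{\Phi}-\lambda\mathbf{h}\mathbf{h}^H$, where $\lambda>0$ is the aggregate secrecy multiplier scaled by $\frac{\sigma_{\rm e}^2}{2^{R}\sigma_{\rm d}^2}$ (strictly positive because at least one constraint \eqref{PrConstD_c1} is active at the optimum), and $\mathbf{\Phi}$ is Hermitian, independent of $\mathbf{h}$, and contains the identity $\mathbf{I}_{N_{\rm T}}$ together with all eavesdropper- and EH-related dual contributions. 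To assemble $\mathbf{\Phi}$ cleanly I would first use the stationarity conditions in the slack variables $\bar\psi_i,\bar\omega_i,\bar\nu_k,\bar\varphi_k$, which pin the conic multipliers of the SOC constraints to the multipliers of \eqref{PrConstD_c1} and \eqref{PrConstD_c4}, and then collect the corresponding matrix terms.

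Granting $\mathbf{\Phi}\succ\mathbf{0}$, the conclusion is immediate: writing $\mathbf{Z}=\mathbf{\Phi}-\lambda\mathbf{h}\mathbf{h}^H$ as a positive-definite matrix minus a rank-one perturbation shows that $\mathbf{Z}$ has at most one zero eigenvalue, i.e.\ $\mathrm{rank}(\mathbf{Z})\ge N_{\rm T}-1$; combined with $\mathbf{Z}\mathbf{Q}_{\rm I}^{\star}=\mathbf{0}$ this forces $\mathrm{rank}(\mathbf{Q}_{\rm I}^{\star})\le 1$. Finally, $\mathbf{Q}_{\rm I}^{\star}=\mathbf{0}$ would give $\tau_{{\rm H},i}=\sigma_{\rm e}^2(2^{-R}-1)<0$ for $R>0$ and hence violate \eqref{PrConstD_c1}, so the optimum cannot be the zero matrix and $\mathrm{rank}(\mathbf{Q}_{\rm I}^{\star})=1$ exactly.

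The main obstacle is precisely establishing $\mathbf{\Phi}\succ\mathbf{0}$. The Frobenius-type contributions from \eqref{PrConstD_c3} and \eqref{PrConstD_c6}, as well as the eavesdropper term $\hat{\mathbf{H}}_{{\rm e},i}\hat{\mathbf{H}}_{{\rm e},i}^H$, are positive semidefinite, but the ``cross'' norm constraints \eqref{PrConstD_c2} and \eqref{PrConstD_c5} — whose arguments $\mathbf{R}_{{\rm H},i}^{1/2}(\mathbf{I}_{N_{{\rm e},i}}\otimes\mathbf{Q}_{\rm I})\hat{\mathbf{h}}_{{\rm e},i}$ and $\mathbf{R}_{{\rm g},k}^{1/2}\mathbf{Q}_{\rm I}\hat{\mathbf{g}}_{k}$ are linear in $\mathbf{Q}_{\rm I}$ paired with fixed vectors — have gradients of the indefinite form $\mathbf{a}\mathbf{b}^H+\mathbf{b}\mathbf{a}^H$, and the EH power term enters $\mathbf{\Phi}$ with a negative sign; thus $\mathbf{\Phi}$ is not termwise definite. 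I would settle this by substituting the slack-variable relations so that each indefinite cross contribution is grouped with the companion Frobenius and linear terms of the same receiver, and then showing that the identity $\mathbf{I}_{N_{\rm T}}$ supplied by the objective dominates the residual, yielding $\mathbf{\Phi}\succ\mathbf{0}$. This positive-definiteness verification is the delicate core of the argument; once in place, the rank-one conclusion follows mechanically, exactly as in the proof of Theorem~\ref{thm_rank_bti}, to which the present case can be reduced since \eqref{PrConstD} is a second-order-cone reformulation of the Bernstein-type restriction \eqref{PrConst2}.
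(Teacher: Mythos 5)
Your skeleton matches the paper's: KKT conditions, complementary slackness ${\bf Z}{\bf Q}_{\rm I}={\bf 0}$, the decomposition ${\bf Z}={\bf X}-t\,{\bf h}{\bf h}^H$ with $t>0$ coming from the secrecy multipliers, and the reduction of everything to ${\bf X}\succ{\bf 0}$. You also correctly identify the two obstacles: the indefinite gradients ${\bf a}{\bf b}^H+{\bf b}{\bf a}^H$ of the cross-norm constraints \eqref{PrConstD_c2}, \eqref{PrConstD_c5}, and the negatively signed EH terms. But your proposed resolution of both obstacles is where the argument breaks. First, the paper does not keep the constraints \eqref{PrConstD_c2}, \eqref{PrConstD_c3}, \eqref{PrConstD_c5}, \eqref{PrConstD_c6} in the stationarity condition at all: it bounds each such SOC constraint by one of the form $\|{\bf Q}_{\rm I}\|\le\kappa$, converts that via the Nemirovski lemma (\emph{Lemma~5}) into an LMI in which ${\bf Q}_{\rm I}$ no longer appears, and thereby removes the indefinite cross terms from $\partial\mathcal{L}/\partial{\bf Q}_{\rm I}$ entirely. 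Your plan of ``grouping each indefinite cross contribution with the companion Frobenius and linear terms of the same receiver'' is not carried out and there is no obvious way to make the grouped blocks PSD, since ${\bf a}{\bf b}^H+{\bf b}{\bf a}^H$ has a negative eigenvalue whenever ${\bf a}$ and ${\bf b}$ are not parallel.

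Second, and more fundamentally, the claim that ``the identity ${\bf I}_{N_{\rm T}}$ supplied by the objective dominates the residual'' cannot be made to work: the EH multipliers $\lambda_{e,k}$ are not bounded a priori (they grow with the harvesting targets $\eta_k$), so the term $-\sum_k\lambda_{e,k}({\bf R}_{{\rm g},k}+\hat{\bf g}_k\hat{\bf g}_k^H)$ can have arbitrarily large negative eigenvalues and no fixed identity can dominate it. The paper explicitly flags this as the key difficulty introduced by the EH constraints and resolves it by a different mechanism: assume ${\bf X}$ has a nontrivial null space ${\boldsymbol\Pi}$; then ${\boldsymbol\pi}_i^H{\bf Z}{\boldsymbol\pi}_i=-t|{\bf h}^H{\boldsymbol\pi}_i|^2\le 0$ together with dual feasibility ${\bf Z}\succeq{\bf 0}$ forces ${\bf h}{\bf h}^H{\boldsymbol\Pi}={\bf 0}$ and ${\bf Z}{\boldsymbol\Pi}={\bf 0}$, which precludes any positive secrecy rate and contradicts $R>0$. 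That contradiction argument, not domination, is what delivers ${\bf X}\succ{\bf 0}$. A final small point: \eqref{PrConstD} is the LDI restriction, a genuinely different (and generally non-equivalent) safe approximation from the BTI restriction \eqref{PrConst2}, so the theorem cannot simply be ``reduced'' to Theorem~\ref{thm_rank_bti}; only the proof technique carries over. Also note that activity of a constraint does not by itself give a strictly positive multiplier; the paper obtains $\sum_i\lambda_{b,i}>0$ from strong duality and the strict positivity of the optimal transmit power.
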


\begin{proof}
See Appendix~\ref{proof_thm_rank_ldi}.
\end{proof}

\begin{remark}
It can be verified that the optimal solutions to problems \eqref{PrConst2}, \eqref{PrConstS}, and \eqref{PrConstD} obtained in Theorems~1, 2, and 3, respectively, are unique. Let us first consider that there are two distinct optimal solutions to problem \eqref{PrConst2}, say ${\bf Q}_1$ and ${\bf Q}_2$ ,such that ${\rm rank}({\bf Q}_1) = {\rm rank}({\bf Q}_2) = 1$. Therefore, the range spaces of ${\bf Q}_1$ and ${\bf Q}_2$ must be different. Following a basic concept in convex optimization, any ${\bf Q}_3 = \lambda{\bf Q}_1 + (1-\lambda){\bf Q}_2$, for $\lambda \in (0,1)$, is also an optimal solution of \eqref{PrConst2} \cite{boyd}. Since ${\bf Q}_1$ and ${\bf Q}_2$ are both rank-one and distinct, it is rigid that ${\bf Q}_3$ is of rank two, which contradicts with the result proved in {\em Theorem~1}. Hence problem \eqref{PrConst2} must have only one optimal solution ${\bf Q}_{\rm I}$. Similarly, the uniqueness of the optimal solutions of problems \eqref{PrConstS}, and \eqref{PrConstD} can also be verified.
\end{remark}

\subsection{Complexity Analysis}
In this subsection, we mathematically characterize the computational complexity of the proposed schemes. Note that the convex restriction formulations \eqref{PrConst2}, \eqref{PrConstS}, and \eqref{PrConstD} involve only LMI and SOC constraints, and hence can be solved using standard interior-point methods (IPM) \cite[Lecture~6]{cvx_opt_lect}. Therefore, we can use the worst-case computation time of IPM to compare the complexities of the three formulations. Now, using the transformation
$${\mathbb H} ^{n} \ni {\bf S} \mapsto \left[\begin{matrix}{\rm Re}({\bf S}) & -{\rm Im}({\bf S}) \cr {\rm Im}({\bf S}) & {\rm Re}({\bf S})\end{matrix}\right] \in {\mathbb S} ^{2n},$$
where ${\mathbb S} ^{n}$ and ${\mathbb H} ^{n}$ represent the sets of $n \times n$ real symmetric matrices and complex Hermitian matrices, respectively, we can convert the complex-valued conic programs \eqref{PrConst2}, \eqref{PrConstS}, and \eqref{PrConstD} into equivalent real-valued conic programs of the form \cite{cplx_sdp, wk_ma_outage}
\begin{align}
\min _{{\bf z}\in {\mathbb R} ^{n}} & \quad {\bf c} ^{T} {\bf z}\cr
{\rm s.t.}& \quad {\sum _{i=1}^{n} z_{i} {\bf A}_{i}^{j} - {\bf B}^{j} \in {\mathbb S}_{+}^{k_{j}} } \qquad {\rm for~} j=1,\ldots, p, \cr & \quad {\bf T} ^{j} {\bf z}- {\bf b}^{j} \in {\mathbb L} ^{k_{j}} \qquad {\rm for~} j=p+1,\ldots,m. \label{ipm}
\end{align}
Here, ${\bf T} ^{j} \in {\mathbb R}^{k_{j}\times n}$, ${\bf b} ^{j} \in {\mathbb R}^{k_{j}}$ for $j=p+1,\dots, m$, ${\bf c}\in {\mathbb R}^{n}$, ${\mathbb S}_{+}^{k}$ is the set of $k \times k$ real PSD matrices, and ${\mathbb L} ^{k}$ is the second-order cone of dimention $k \ge 1$.
Now the overall complexity of the IPM for solving the above problem consists of two components:

\begin{itemize}
    \item[a)] \textit{Iteration Complexity}: The number of iterations required to reach an $\epsilon$-accurate ($\epsilon > 0$) optimal solution of problem \eqref{ipm} is in the order of $\ln(1/\epsilon)\sqrt{\beta(\mathcal{K})}$, where $\beta(\mathcal{K}) = \sum_{j=1}^p k_j + 2(m-p)$ is known to be the barrier parameter.

    \item[b)] \textit{Per-Iteration Computation Cost}: A system of $n$ linear equations is required to be solved in each iteration. The computation tasks include the formation of the coefficient matrix $\bf H$ of the system of linear equations and the factorization of $\bf H$. The cost of forming $\bf H$ sums on the order of $\kappa_{\rm for}=n\sum_{j=1}^p k_j^3 + n^2 \sum_{j=1}^p k_j^2 + n \sum_{j=p+1}^m k_j^2$ while the cost of factorization is on the order of $\kappa_{\rm fac}=n^3$ \cite{wk_ma_outage}.
\end{itemize}

Thus the overall computation cost for solving \eqref{ipm} using IPM is on the order of $\ln(1/\epsilon)\sqrt{\beta(\mathcal{K})}\\ \times(\kappa_{\rm for} + \kappa_{\rm fac})$. Using these concepts, we can now analyze the computational complexity of problems \eqref{PrConst2}, \eqref{PrConstS}, and \eqref{PrConstD}. Note that in all three formulations, the number of decision variables ($n$ in (43)) is on the order of $N_{\rm T}^2$ (ignoring the slack variables). Let us first examine problem (24), which has $L$ LMI (trace) constraints of size $1$, $L$ SOC constraints (of size $2$), $L$ LMI constraints of size $N_{\rm T}$, $K$ LMI (trace) constraints of size $1$, $K$ SOC constraints, $K$ LMI constraints of size $N_{\rm T}$, and in (24h), $1$ LMI constraints of size $N_{\rm T}$, $2L$ LMI constraints of size $1$, $2K$ LMI constraints of size $1$. Thus the complexity of the BTI-based algorithm is on the order shown in the first row of Table~\ref{tab_cplx}. Similarly, the complexity of the ${\mathcal S}$-procedure based approach and the LDI-based approach can be quantified as shown in the second and the third row of Table~\ref{tab_cplx}, respectively.
\begin{table}[h!]
\centering \caption{Complexity analysis of the proposed approaches} \label{tab_cplx}
\begin{tabular}[t]{|l|l|}
\hline
Method & Complexity Order ($n = \mathcal{O}(N_{\rm T}^2)$)\\
\hline
BTI & $\begin{array}{l}\ln(1/\epsilon)\sqrt{N_{\rm T}(K+L+1+)}n[(K+L)(N_{\rm T}^3 + nN_{\rm T}^2 \\ +3n+3)+ (K+L)
(N_{\rm T}^2+N_{\rm T}+1)^2 + N_{\rm T}^3 \\ +nN_{\rm T}^2+n^2]\end{array}$\\
\hline
${\mathcal S}$-procedure & $\begin{array}{l}\ln(1/\epsilon)\sqrt{N_{\rm T}(K+L+1+)}n[(K+L)(N_{\rm T}+1)^3 \\+n(K+L)(N_{\rm T}+1)^2
+n(N_{\rm T}^2+K+L)+N_{\rm T}^3\\ +K+L]\end{array}$\\
\hline
LDI & $\begin{array}{l}\ln(1/\epsilon)\sqrt{N_{\rm T}(K+L+1+)}n[(N_{\rm T}^3+3K+3L)\\ +n(N_{\rm T}^2+3K+3L)
+(K+L)((N_{\rm T}^2+1)^2\\ +(N_{\rm T}+1)^2)+n^2]\end{array}$\\
\hline
\end{tabular}
\end{table}

From Table~\ref{tab_cplx}, it is straightforward to show that LDI-based method has the lowest computational complexity since it involves only SOC constraints, while the BTI-based approach has the highest computational cost since it involves a more complicated set of constraints. However, in terms of tightness, the ${\mathcal S}$-procedure based approach performs the worst, as evidenced by our numerical results.

\section{Robust SRM}\label{sec_srm}
In the robust power minimization problem considered in Section~\ref{sec_rpm}, attempt has been made to keep the transmit power as low as possible yet maintaining the predefined secrecy rate $R$, as well as the harvested energy, within the secrecy outage probability. However, in many practical wireless communication systems (e.g., secondary users' transmission in cognitive radio systems, small cell users' transmission in heterogeneous networks (HetNets)), the maximum allowable transmission power is limited to a certain level so as to keep the interference to other users below a given threshold. In those scenarios, SRC power minimization problem \eqref{PrConst1} may turn out to be infeasible \cite{sec_mimome, zheng_chu_miso_out} and the designer may need to re-adjust the secrecy rate requirement in order to find a feasible solution. With insufficient channel knowledge, finding out the appropriate secrecy rate requirement can be a tedious job. Instead, a more attractive problem formulation can be to find the maximum secrecy rate $R$ that can be achieved subject to the same outage constraints and the additional transmit power constraint. Thus, the SRM problem with outage constraints for a given maximum transmission power can be represented as follows:
\begin{subequations}\label{srcP1}
\begin{eqnarray}
\max_{{\bf Q}_{\rm I}, R} \!\!\!& &\!\!\! R \label{srcP1_o}\\
{\rm s.t.} \!\!\!& &\!\!\! {\rm Pr}\left[\min_i\,\, \left\{C_{\rm I}\left({\bf Q}_{\rm I}\right) - \hat{C}_{{\rm e},i}\left({\bf Q}_{\rm I}\right)\right\}^+\ge R\right]\nonumber\\
\!\!\!& &\!\!\! \qquad\qquad\qquad\qquad \ge 1 - p,\forall i,\label{srcP1_c1}\\
\!\!\!& &\!\!\! {\rm Pr}\left[\min_k \hat E_k \geq \eta_k\right] \ge 1 - q, \forall k,\label{srcP1_c2}\\
\!\!\!& &\!\!\! {\rm tr} \left({\bf Q}_{\rm I}\right) \le P_{\rm T}, \quad {\bf Q}_{\rm I} \succeq {\bf 0}, \label{srcP1_c3}
\end{eqnarray}
\end{subequations}
where $P_{\rm T}$ is the maximum available transmission power budget at the transmitter. This problem is not convex in terms of the outage constraints. To make this problem more tractable, we propose a two-stage optimization procedure as shown below:
\begin{subequations}\label{srcP2}
\begin{empheq}[left=\max\limits_R \empheqlbrace]{align}
\max_{{\bf Q}_{\rm I}} &~~ R \label{srcP2_o}\\%
{\rm s.t.} & ~ {\rm Pr}\left[\min_i \left\{C_{\rm I}\left({\bf Q}_{\rm I}\right) - \hat{C}_{{\rm e},i}\left({\bf Q}_{\rm I}\right)\right\}^+\ge R\right]\nonumber\\
&~ \qquad\qquad\qquad\qquad\ge 1 - p,\forall i,\label{srcP2_c1}\\
&~~ {\rm Pr}\left[\min_k \hat E_k \geq \eta_k\right] \ge 1 - q, \forall k,\label{srcP2_c2}\\
&~~ {\rm tr} \left({\bf Q}_{\rm I}\right) \le P_{\rm T}, \quad {\bf Q}_{\rm I} \succeq {\bf 0}. \label{srcP2_c3}
\end{empheq}
\end{subequations}

In the first stage, we solve the inner maximization problem of \eqref{srcP2} for any given feasible $R$. In the second stage, we perform a one-dimensional line search over $R$ that leads to the optimal solution of the problem \eqref{srcP1}. Note that even with given $R$, the problem is not tractable due to the probabilistic constraints. Hence we apply the safe approximation approaches derived in the previous section for the probabilistic constraints.


\subsection{SRM Based on BTI}\label{subsec_srm_bti}
According to \emph{Lemma~\ref{lemm_bti}}, the BTI based convex restrictions for constraints \eqref{srcP2_c1} and \eqref{srcP2_c2} are given by \eqref{impli_bti_sec} and \eqref{impli_bti_eh}, respectively, for given $R$. Thus, the safe approximation for the inner maximization problem in \eqref{srcP2} is given by
\begin{subequations}\label{srm_bti}
\begin{align}
\max_{{\bf Q}_{\rm I}, \{\psi_i\}, \{\omega_i\}, \{\nu_k\}, \{\varphi_k\}} ~~ & ~~ R\label{srm_bti_o}\\
{\rm s.t.} 
& ~~ {\rm tr} \left({\bf Q}_{\rm I}\right) \le P_{\rm T},\label{srm_bti_c1}\\
& ~~ \mbox{$\eqref{PrConst2_c1}\text{--}\eqref{PrConst2_c7}$ satisfied.}\label{srm_bti_c2}
\end{align}
\end{subequations}
Our next endeavour is to establish a link between the optimal solutions of \eqref{PrConst2} and \eqref{srm_bti}. If we can prove that the the optimal solution of problem \eqref{PrConst2} is also optimal for problem \eqref{srm_bti}, then we can readily obtain the optimal solution to the inner maximization problem in \eqref{srcP2}. 

\begin{proposition}\label{prop_equi_bti}
Any optimal solution to the power minimization problem \eqref{PrConst2} is also optimal to the problem \eqref{srm_bti} for identical specifications.
\end{proposition}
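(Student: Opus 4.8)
The plan is to prove the claim by a feasibility-transfer argument that pivots on two facts: (i) the rate $R$ enters the safe-approximation constraints \emph{only} through the secrecy inequality \eqref{PrConst2_c1}, and does so monotonically; and (ii) the optimal covariance of the power-minimization problem \eqref{PrConst2} is unique, as established in the Remark following Theorem~\ref{thm_rank_ldi} (itself a consequence of the rank-one property in Theorem~\ref{thm_rank_bti}). First I would fix the matched specifications: let ${\bf Q}_{\rm I}^{\star}$, together with its slacks $\{\psi_i,\omega_i,\nu_k,\varphi_k\}$, be optimal for \eqref{PrConst2} at the prescribed rate $R$, write $P^{\star}={\rm tr}({\bf Q}_{\rm I}^{\star})$, and set the power budget of \eqref{srm_bti} to $P_{\rm T}=P^{\star}$.

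The first step is feasibility in one direction. Since ${\bf Q}_{\rm I}^{\star}$ satisfies \eqref{PrConst2_c1}--\eqref{PrConst2_c7} at rate $R$ and obeys ${\rm tr}({\bf Q}_{\rm I}^{\star})=P^{\star}\le P_{\rm T}$, the triple $({\bf Q}_{\rm I}^{\star},R,\{\cdot\})$ is feasible for \eqref{srm_bti}. Hence the optimal value $R^{\star}$ of \eqref{srm_bti} satisfies $R^{\star}\ge R$.

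The second, and main, step is optimality. I would take any optimizer $({\bf Q}_{\rm I}',R',\{\cdot\})$ of \eqref{srm_bti}, so that $R'=R^{\star}\ge R$ and ${\rm tr}({\bf Q}_{\rm I}')\le P^{\star}$. The crucial observation is that every constraint of \eqref{PrConst2} except \eqref{PrConst2_c1} is independent of $R$, while the left-hand side of \eqref{PrConst2_c1} is monotonically increasing in $R$: its only $R$-dependent terms, $-\tfrac{\sigma_{\rm e}^2}{2^R\sigma_{\rm d}^2}{\bf h}^H{\bf Q}_{\rm I}{\bf h}$ and $-\tfrac{\sigma_{\rm e}^2}{2^R}$, decrease as $R$ decreases because $2^{-R}$ is increasing and ${\bf h}^H{\bf Q}_{\rm I}{\bf h}\ge 0$. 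Therefore, keeping the same covariance ${\bf Q}_{\rm I}'$ and the same slacks but lowering the rate from $R'$ to $R\le R'$ only relaxes \eqref{PrConst2_c1}; all remaining constraints carry over unchanged. Thus $({\bf Q}_{\rm I}',\{\cdot\})$ is feasible for \eqref{PrConst2} at rate $R$ with ${\rm tr}({\bf Q}_{\rm I}')\le P^{\star}$.

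Finally, optimality of $P^{\star}$ for \eqref{PrConst2} forces ${\rm tr}({\bf Q}_{\rm I}')\ge P^{\star}$, whence ${\rm tr}({\bf Q}_{\rm I}')=P^{\star}$ and ${\bf Q}_{\rm I}'$ is itself an optimizer of \eqref{PrConst2} at rate $R$. Invoking uniqueness of that optimizer then yields ${\bf Q}_{\rm I}'={\bf Q}_{\rm I}^{\star}$, so ${\bf Q}_{\rm I}^{\star}$ attains the optimum of \eqref{srm_bti} and the claim follows. The step I expect to need the most care is the monotonicity transfer of \eqref{PrConst2_c1}: one must check that the shared slacks $\psi_i,\omega_i$, which also appear in the $R$-free constraints \eqref{PrConst2_c2}--\eqref{PrConst2_c3}, may be carried over verbatim, so that decreasing $R$ preserves feasibility without any re-optimization of the slacks. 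It is \emph{uniqueness} of the power-minimizing covariance, rather than strict monotonicity of the underlying power--rate trade-off, that ultimately closes the argument; this matters because the trade-off can be locally flat whenever the energy-harvesting constraints \eqref{PrConst2_c4}, not the secrecy constraint, are the binding ones.
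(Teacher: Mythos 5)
Your argument is correct. The paper does not actually prove this proposition---it defers to \cite[Theorem~2]{jrnl_secrecy}---so there is no in-paper proof to compare line by line; the standard argument behind such power-minimization/rate-maximization equivalences is the feasibility swap you use: ${\bf Q}_{\rm I}^{\star}$ is feasible for \eqref{srm_bti} once $P_{\rm T}={\rm tr}({\bf Q}_{\rm I}^{\star})$, and any optimizer of \eqref{srm_bti}, after lowering the rate back to $R$ (which only relaxes \eqref{PrConst2_c1}, the sole $R$-dependent constraint, with the slacks carrying over verbatim), is feasible for \eqref{PrConst2} with trace at most $P^{\star}$ and hence is itself a power minimizer. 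Where you genuinely depart from the textbook version is in closing the loop with uniqueness of the power-minimizing covariance (the paper's Remark, which rests on Theorem~\ref{thm_rank_bti} and is therefore available without circularity) instead of the usual scaling step of shrinking ${\bf Q}_{\rm I}$ by $t<1$ to contradict minimality of $P^{\star}$: that scaling step fails here because a shrunk covariance can violate the $R$-independent EH constraints \eqref{PrConst2_c4}, so the power--rate trade-off can be locally flat and the optimal rate of \eqref{srm_bti} can strictly exceed $R$; your uniqueness route is exactly what handles that case, and your closing observation identifying this is on point. One thing to make explicit: the claim is only true under the pairing $P_{\rm T}={\rm tr}({\bf Q}_{\rm I}^{\star})$, which you assume at the outset but which the proposition's phrase ``identical specifications'' leaves implicit; for $P_{\rm T}>P^{\star}$ the conclusion generally fails.
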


\begin{proof}
The proof is identical to that of \cite[Theorem~2]{jrnl_secrecy} and is thus omitted for brevity.
\end{proof}

By {\em Theorem~\ref{thm_rank_bti}}, we know that the solution to the power minimization problem \eqref{PrConst2} is rank-one optimal, so is the solution of the problem \eqref{srm_bti} -- we can immediately infer from {\em Proposition~\ref{prop_equi_bti}}. The remaining task is to find the optimal $R$ from the second stage of problem \eqref{srcP2}. A simple one-dimensional linear search (e.g., bisection or golden-section search) over $R$ to find the maximal $R$ that solves the feasibility problem \eqref{srcP2} is sufficient. The lower boundary of the search  is obviously $0$ due to the assumption $R>0$. The upper limit can be defined by assuming that the system vintages the highest secrecy rate at zero eavesdropping capacity i.e., at ${C}_{{\rm e},i} = 0, \forall i$. Thus, we obtain the upper search limit from \eqref{srcP1_c1} as
\begin{multline}
R\le \log \left(1 + \frac{1}{\sigma_{\rm d}^2}{\bf h}^H{\bf Q}_{\rm I}{\bf h}\right) \leq \log \left(1 + \frac{1}{\sigma_{\rm d}^2}{\rm tr}({\bf Q}_{\rm I})\|{\bf h}\|^2\right)\\
\leq \log \left(1 + \frac{P_{\rm T}}{\sigma_{\rm d}^2}\|{\bf h}\|^2\right).
\end{multline}
Note that the last inequality is derived using the sum power constraint ${\rm tr}({\bf Q}_{\rm I})\leq P_{\rm T}$ in \eqref{srcP1_c3}.

\subsection{$\mathcal{S}$-Procedure Based SRM}\label{subsec_srm_sproc}
According to \emph{Lemma~\ref{lemm_sproc}}, the $\mathcal{S}$-procedure based convex restrictions for constraints \eqref{srcP2_c1} and \eqref{srcP2_c2} are given by \eqref{sec_sproc_lmi} and \eqref{eh_sproc_lmi}, respectively. As a result, a safe approximation for the inner maximization problem in \eqref{srcP2} is given by
\begin{subequations}\label{srm_sproc}
\begin{align}
\max_{{\bf Q}_{\rm I}, \{\mu_{{\rm H},i}\}, \{\mu_{{\rm g},k}\}} ~~ & ~~ R\label{srm_sproc_o}\\
{\rm s.t.} & ~~ {\rm tr} \left({\bf Q}_{\rm I}\right) \le P_{\rm T},\label{srm_sproc_c1}\\
& ~~ \mbox{$\eqref{PrConstS_c1}\text{--}\eqref{PrConstS_c3}$ satisfied.}\label{srm_sproc_c2}
\end{align}
\end{subequations}
The following proposition establishes the solution equivalence between \eqref{srm_sproc} and \eqref{PrConstS}.

\begin{proposition}
Any optimal solution to the power minimization problem \eqref{PrConstS} is also optimal to the problem \eqref{srm_sproc} for identical specifications.
\end{proposition}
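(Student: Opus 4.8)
The plan is to mirror the argument behind \emph{Proposition~\ref{prop_equi_bti}} (i.e.\ \cite[Theorem~2]{jrnl_secrecy}), merely substituting the $\mathcal S$-procedure LMIs \eqref{sec_sproc_lmi}--\eqref{eh_sproc_lmi} for the BTI constraints, and to exploit that power minimization and secrecy-rate maximization are inverse problems over a common feasible set. Let $P^\star(R)$ denote the optimal value of \eqref{PrConstS} at target rate $R$, attained by some triple $({\bf Q}_{\rm I}^\star,\{\mu_{{\rm H},i}^\star\},\{\mu_{{\rm g},k}^\star\})$, and set the power budget of \eqref{srm_sproc} to $P_{\rm T}=P^\star(R)$. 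First I would observe that this triple is immediately feasible for \eqref{srm_sproc} at rate $R$, since it satisfies \eqref{PrConstS_c1}--\eqref{PrConstS_c3} and ${\rm tr}({\bf Q}_{\rm I}^\star)=P^\star(R)=P_{\rm T}$; hence the optimal value $R^\star$ of \eqref{srm_sproc} obeys $R^\star\ge R$, and ${\bf Q}_{\rm I}^\star$ is at least feasible there.

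The crux is the reverse inequality $R^\star\le R$. Here I would first record the monotonicity of the constraints in $R$: the rate enters \eqref{PrConstS} only through $\tau_{{\rm H},i}=\frac{\sigma_{\rm e}^2}{2^R}(1+\frac{1}{\sigma_{\rm d}^2}{\bf h}^H{\bf Q}_{\rm I}{\bf h})-\sigma_{\rm e}^2-\hat{\bf h}_{{\rm e},i}^H({\bf I}_{N_{{\rm e},i}}\otimes{\bf Q}_{\rm I})\hat{\bf h}_{{\rm e},i}$, which occupies the bottom-right corner of ${\boldsymbol\Theta}_i$ in \eqref{sec_sproc_lmi}. For fixed ${\bf Q}_{\rm I}\succeq{\bf 0}$, lowering $R$ strictly increases $\frac{\sigma_{\rm e}^2}{2^R}(1+\frac{1}{\sigma_{\rm d}^2}{\bf h}^H{\bf Q}_{\rm I}{\bf h})$ and hence $\tau_{{\rm H},i}$, so ${\boldsymbol\Theta}_i$ is replaced by ${\boldsymbol\Theta}_i+c\,{\bf e}{\bf e}^H$ with $c>0$ and ${\bf e}$ the last coordinate vector, which preserves ${\boldsymbol\Theta}_i\succeq{\bf 0}$; since the EH LMI ${\boldsymbol\Upsilon}_k$ in \eqref{eh_sproc_lmi} does not depend on $R$, any ${\bf Q}_{\rm I}$ feasible at a higher rate $R'$ stays feasible at every $R\le R'$. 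Arguing by contradiction, if $R^\star=R'>R$ there is a feasible $\tilde{\bf Q}_{\rm I}$ for \eqref{srm_sproc} achieving $R'$ with ${\rm tr}(\tilde{\bf Q}_{\rm I})\le P^\star(R)$; by the monotonicity just established $\tilde{\bf Q}_{\rm I}$ is feasible for \eqref{PrConstS} at rate $R'$, giving $P^\star(R')\le{\rm tr}(\tilde{\bf Q}_{\rm I})\le P^\star(R)$.

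The main obstacle, and the step I would spend the most care on, is closing the contradiction by showing $P^\star(\cdot)$ is \emph{strictly} increasing, so that $P^\star(R')\le P^\star(R)$ with $R'>R$ is impossible. The delicacy is the coexistence of the $R$-dependent secrecy LMIs ${\boldsymbol\Theta}_i$ and the $R$-independent EH LMIs ${\boldsymbol\Upsilon}_k$: a uniform down-scaling ${\bf Q}_{\rm I}\mapsto\alpha{\bf Q}_{\rm I}$, $\alpha<1$, that would lower the trace while respecting the now-slack secrecy corner at the reduced rate can drive the harvested-energy term below $\eta_k/\xi_k$ and violate ${\boldsymbol\Upsilon}_k\succeq{\bf 0}$, so strictness genuinely fails when the EH constraints bind and $P^\star$ can have a flat segment. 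I would resolve this exactly as in \cite[Theorem~2]{jrnl_secrecy}, by restricting attention to the operating point selected by the outer line search of \eqref{srcP2}, where the power budget is met with equality and at least one secrecy constraint is active --- precisely the regime in which lowering $R$ exposes a trace-reducing feasible direction that leaves every EH constraint intact. The complementary-slackness certificate produced in the rank-one proof of \emph{Theorem~\ref{thm_rank_sproc}} supplies such a direction and yields $P^\star(R)<P^\star(R')$, completing the contradiction; then $R^\star=R$ and the optimality of ${\bf Q}_{\rm I}^\star$ for \eqref{srm_sproc} follow, while the uniqueness argument given after \emph{Theorem~\ref{thm_rank_ldi}} identifies the two optimizers.
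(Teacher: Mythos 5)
The paper itself does not spell out this proof (it defers to \cite[Theorem~2]{jrnl_secrecy}), and your overall plan --- feasibility of the power-min solution for the SRM problem with $P_{\rm T}=P^\star(R)$, plus monotonicity of $P^\star(\cdot)$ to rule out $R^\star>R$ --- is the right one and is what that reference does. Your forward direction and your observation that lowering $R$ perturbs ${\boldsymbol\Theta}_i$ only by a positive multiple of ${\bf e}{\bf e}^H$ in the corner (while ${\boldsymbol\Upsilon}_k$ is $R$-independent) are both correct.

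The gap is at the step you yourself flag as the crux. You assert that $P^\star(\cdot)$ ``can have a flat segment'' when the EH constraints bind, and then repair this by restricting to ``the operating point selected by the outer line search'' and appealing to an unspecified ``complementary-slackness certificate'' from the proof of \emph{Theorem~\ref{thm_rank_sproc}}. This is not a proof: the proposition is stated for arbitrary identical specifications, not only at the line-search optimum, and you never exhibit the claimed trace-reducing direction. Moreover, the flat-segment scenario does not actually occur here, and showing why is exactly the missing argument. Suppose $R'>R$ and $P^\star(R')\le P^\star(R)$; then the optimizer $\tilde{\bf Q}_{\rm I}$ at rate $R'$ is also optimal for \eqref{PrConstS} at rate $R$, and since ${\boldsymbol\Theta}_i(R)={\boldsymbol\Theta}_i(R')+c_i{\bf e}{\bf e}^H$ with $c_i>0$ and both summands PSD, complementary slackness ${\rm tr}({\bf V}_{{\rm H},i}{\boldsymbol\Theta}_i(R))=0$ forces ${\bf e}^H{\bf V}_{{\rm H},i}{\bf e}=0$ for every $i$; but then the term $\left[{\bf 0}~~{\bf h}\right]{\bf T}\left[{\bf 0}~~{\bf h}\right]^H$ in the stationarity condition of Appendix~\ref{proof_thm_rank_sproc} vanishes, giving ${\bf Z}={\bf X}_{\rm S}\succ{\bf 0}$ and hence ${\bf Q}_{\rm I}={\bf 0}$, contradicting $R>0$. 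This is the concrete use of the KKT machinery that your proposal gestures at but does not carry out; without it the contradiction in your second paragraph does not close.
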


\begin{proof}
The proof is identical to that of \cite[Theorem~2]{jrnl_secrecy} and is thus omitted for brevity.
\end{proof}

By {\em Theorem~\ref{thm_rank_sproc}}, it is already known that the optimal solution to the power minimization problem \eqref{PrConstS} yields a rank-one transmit covariance ${\bf Q}_{\rm I}$. Hence, the solution to the problem \eqref{srm_sproc} should also be of unit-rank. The remaining task is to find the optimal $R$ for problem \eqref{srcP2} through a one-dimensional line search as described in Subsection~\ref{subsec_srm_bti}.

\subsection{LDI Based SRM}\label{subsec_srm_ldi}
Similar to the BTI and the $\mathcal{S}$-procedure based approaches, a convex safe approximation for the inner maximization problem in \eqref{srcP2} is given by \emph{Lemma~\ref{lemm_ldi}} as
\begin{subequations}\label{srm_ldi}
\begin{align}
\max_{{\bf Q}_{\rm I}, \{\bar\psi_i\}, \{\bar\omega_i\}, \{\bar\nu_k\}, \{\bar\varphi_k\}} ~~ & ~~ R\label{srm_ldi_o}\\
{\rm s.t.} & ~~ {\rm tr} \left({\bf Q}_{\rm I}\right) \le P_{\rm T},\label{srm_ldi_c1}\\
& ~~ \mbox{$\eqref{PrConstD_c1}\text{--}\eqref{PrConstD_c7}$ satisfied,}\label{srm_ldi_c2}
\end{align}
\end{subequations}
which can be efficiently solved using existing solvers \cite{cvx}. The solution obtained is identical to that of the power minimization problem \eqref{PrConstD} as described by the following proposition.

\begin{proposition}
Any optimal solution to the power minimization problem \eqref{PrConstD} is also optimal to the problem \eqref{srm_ldi} for identical specifications.
\end{proposition}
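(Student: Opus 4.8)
The plan is to establish the same power-minimization/rate-maximization equivalence that underpins \emph{Proposition~\ref{prop_equi_bti}} and its $\mathcal{S}$-procedure counterpart, but now for the SOC-based restriction, by exhibiting \eqref{PrConstD} and the inner feasibility problem \eqref{srm_ldi} as mutual inverses in the rate--power plane. Denote by $P^\star(R)$ the optimal value of \eqref{PrConstD} for a prescribed target rate $R>0$, attained by a covariance ${\bf Q}_{\rm I}^\star$ (rank-one, by \emph{Theorem~\ref{thm_rank_ldi}}) together with slacks $\{\bar\psi_i,\bar\omega_i,\bar\nu_k,\bar\varphi_k\}$. First I would set the power budget in \eqref{srm_ldi} to $P_{\rm T}=P^\star(R)$ and argue that ${\bf Q}_{\rm I}^\star$ solves \eqref{srm_ldi} with optimal rate exactly $R$. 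The \emph{feasibility} direction is immediate: the tuple $({\bf Q}_{\rm I}^\star,\{\bar\psi_i,\bar\omega_i,\bar\nu_k,\bar\varphi_k\})$ satisfies \eqref{PrConstD_c1}--\eqref{PrConstD_c7} by construction, and since $\mathrm{tr}({\bf Q}_{\rm I}^\star)=P^\star(R)=P_{\rm T}$ it also meets the power cap, so rate $R$ is achievable and the SRM optimum $R^\star(P_{\rm T})\ge R$.

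For the converse I would argue by contradiction. Suppose \eqref{srm_ldi} attained a strictly larger rate $\tilde R>R$ through some feasible ${\bf Q}_{\rm I}'$ with $\mathrm{tr}({\bf Q}_{\rm I}')\le P_{\rm T}=P^\star(R)$. Because the constraints of \eqref{srm_ldi} at rate $\tilde R$ coincide with those of \eqref{PrConstD} at target $\tilde R$, the point ${\bf Q}_{\rm I}'$ is feasible for the power-minimization problem at $\tilde R$, whence $P^\star(\tilde R)\le \mathrm{tr}({\bf Q}_{\rm I}')\le P^\star(R)$. The conclusion then follows provided $P^\star(\cdot)$ is \emph{strictly} increasing, since $\tilde R>R$ would force $P^\star(\tilde R)>P^\star(R)$, a contradiction, leaving $R^\star(P_{\rm T})=R$ and ${\bf Q}_{\rm I}^\star$ optimal. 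The rank-one property then transfers immediately from \emph{Theorem~\ref{thm_rank_ldi}}, and the outer one-dimensional search over $R$ in \eqref{srcP2} closes the argument exactly as in the BTI case.

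The crux, and the step I expect to be the main obstacle, is therefore the strict monotonicity of $P^\star(R)$. The helpful structural fact is that $R$ enters \eqref{PrConstD_c1}--\eqref{PrConstD_c7} \emph{only} through $\tau_{{\rm H},i}$, and there only via the factor $\sigma_{\rm e}^2/2^R$, which is continuous and strictly decreasing in $R$; the EH constraints \eqref{PrConstD_c4}--\eqref{PrConstD_c6}, the positive-semidefinite constraint, and the objective $\mathrm{tr}({\bf Q}_{\rm I})$ are all $R$-free. Hence for $R_1<R_2$ any point feasible at $R_2$ (with its slacks) is feasible at $R_1$, since the left-hand side of \eqref{PrConstD_c1} only grows as $\tau_{{\rm H},i}$ grows, which yields the non-strict bound $P^\star(R_1)\le P^\star(R_2)$ at once. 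The delicate part is upgrading this to \emph{strict} monotonicity, and here the secrecy--EH interplay is exactly what makes the step non-automatic: I would show that in the secrecy-limited regime the rate constraint \eqref{PrConstD_c1} is active at the optimum, so that lowering $R$ strictly slackens \eqref{PrConstD_c1} and opens a direction to shrink ${\bf Q}_{\rm I}$ that still satisfies the $R$-independent EH constraints, strictly reducing the attainable power. This active-constraint (complementary-slackness) observation, transplanted from \cite[Theorem~2]{jrnl_secrecy}, is where the real effort lies; one must rule out the degenerate case in which the EH constraints alone fix the power and leave \eqref{PrConstD_c1} slack, so that the stated equivalence is understood in the operative regime where secrecy is the binding requirement.
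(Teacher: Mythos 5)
Your overall strategy is the right one, and it is in fact the only ``proof'' the paper offers: the authors do not prove this proposition at all, but defer to \cite[Theorem~2]{jrnl_secrecy}, which establishes exactly the inverse-function relationship between the power-minimization value $P^\star(R)$ and the rate-maximization value $R^\star(P_{\rm T})$ that you set up. Your feasibility direction and your observation that $R$ enters \eqref{PrConstD} only through the strictly decreasing factor $\sigma_{\rm e}^2/2^R$ inside $\tau_{{\rm H},i}$ (so that $P^\star(\cdot)$ is nondecreasing) are both correct and match the standard argument.

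The genuine gap is the one you yourself flag but do not close: the converse requires $P^\star(R)$ to be \emph{strictly} increasing at the operating point, and your proposal only gestures at why. This is not a formality here, because the EH constraints \eqref{PrConstD_c4}--\eqref{PrConstD_c6} are $R$-free and bound ${\rm tr}({\bf Q}_{\rm I})$ from \emph{below}: if at the optimum of \eqref{PrConstD} the secrecy constraint \eqref{PrConstD_c1} is inactive and the power is pinned by the harvesting requirements $\eta_k$, then $P^\star(\cdot)$ is locally constant, the SRM problem \eqref{srm_ldi} with budget $P_{\rm T}=P^\star(R)$ attains a rate strictly larger than $R$, and the proposition as stated is false for that specification. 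A complete proof must therefore either (i) establish via complementary slackness that \eqref{PrConstD_c1} is active at optimality for at least one $i$ --- which is plausible given the dual argument in Appendix~C showing some $\lambda_{b,i}>0$, and which would give you the strict decrease you need by perturbing ${\bf Q}_{\rm I}$ along a direction that preserves the $R$-independent constraints --- or (ii) restrict the claim to the secrecy-limited regime, as \cite{jrnl_secrecy} implicitly does. Your proposal names the right obstruction but leaves the decisive step as an intention rather than an argument; since the paper supplies no argument either, you have not contradicted the paper, but you have not yet proved the statement.
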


\begin{proof}
The proof is identical to that of \cite[Theorem~2]{jrnl_secrecy} and is thus omitted for brevity.
\end{proof}

The solution has been proved to be rank-one optimal in {\em Theorem~\ref{thm_rank_ldi}}. Finally, a line search is performed to find the optimal $R$ as described in Subsection~\ref{subsec_srm_bti}.

\section{Simulation Results}\label{sec_sim}
Here, we study the performance of the proposed algorithms in MISO secrecy SWIPT systems with probabilistic constraints through numerical simulations. For simplicity, it was assumed that $\eta_k = \eta, \xi_k = 1,~\forall k$, $N_{{\rm e},i} = N_{\rm e},~\forall i$, $p = q = \rho$, $\sigma_{\rm d}^2 = \sigma_{\rm e}^2 = 1$, $d_{\rm I} = d_{{\rm h},k} = d_{{\rm e},i} = d$. We set $L_c = 35.97 \times 10^{-4}$, (i.e., when $G_T=18$ dBi, $G_R=-2$ dBi, $c=3\times 10^8, f=1000$ MHz), path loss exponent of $\bar\kappa = 2.7$, and distance $d = 10$ meters. In particular, we examine the case in which the QoS requirements are such that each user is provided with an outage probability of at most $10\%$; i.e., $\rho = 0.1$, unless otherwise specified. We simulated a flat Rayleigh fading environment where the channel vectors have entries with zero mean and variance $1/N_{\rm T}$.

Since Monte-Carlo simulations become prohibitively expensive under very low outage requirements for the convex restriction approaches, the algorithms developed in this paper are generally not recommended \cite{wk_ma_outage}. However, to illustrate the performance of the proposed approaches in a larger error domain, we average the results over $500$ realizations of the estimated channels.

We start the performance analysis of the proposed convex restriction formulations by comparing their feasibility rates, i.e., the chance of getting a feasible solution to the problem \eqref{PrConst1} in $500$ realizations of the estimated channels. Fig.~\ref{fig_feas} shows the feasibility rates of the three approaches for $N_{\rm T} = 6$, $K = L = N_{\rm e} = 3$, and $\eta = 0$ (dB). Interestingly, the $\mathcal{S}$-procedure based approach has the lowest feasibility rate compared to the other two methods. The BTI-based and the LDI-based approaches yield almost the same feasibility rate. Therefore, one would expect the worst performance from the $\mathcal{S}$-procedure based approach in terms of secrecy rate as well as transmit power which we will observe in the remaining examples.
\begin{figure}
\centering
\includegraphics*[width=8cm]{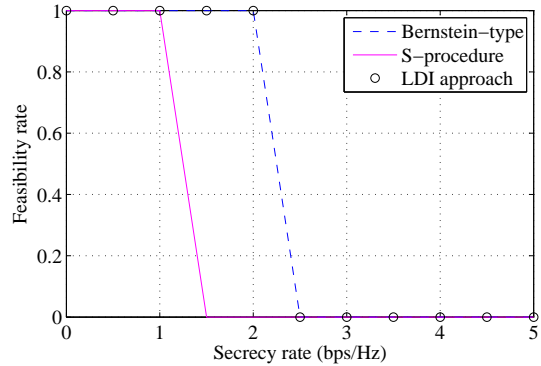}
\caption{Feasibility rates of the proposed methods with $N_{\rm T} = 6$, $K = L = N_{\rm e} = 3$, and $\eta = 0$ (dB).}\label{fig_feas}
\end{figure}

In the next example, we compare the transmit power consumptions of the proposed robust convex restriction solutions with that of the conventional perfect eavesdroppers' CSI based non-robust approach \cite{qli_sdp} for problem \eqref{PrConst1}. In the non-robust scheme, the estimated CSIs have been used for designing the transmit covariance matrix. Fig.~\ref{fig_Pt_R} shows the transmit power required by various methods against the rate outage threshold $R$ with $N_{\rm T} = 8$, $K = 3$, $L = 2$, $N_{\rm e} = 2$. We plot the results for $\eta = 10$ (dBm) and $\eta = 15$ (dBm). As can be seen from Fig.~\ref{fig_Pt_R}, the BTI-based approach yields slightly better transmit power performance compared to the LDI-based restriction approach. However, the $\mathcal{S}$-procedure based approach performs the worst amongst the proposed approaches. Interestingly, the robust algorithms guarantee the probabilistic constraints costing very little additional transmit power compared to the non-robust scheme. Obviously, with the increase in the required secrecy rate, all the algorithms demand higher transmit power.

\begin{figure}
\centering
\includegraphics*[width=8cm]{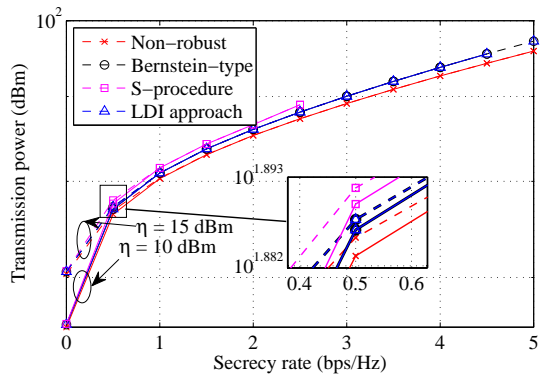}
\caption{Transmit power versus secrecy outage requirement $R$ with $N_{\rm T} = 8$, $K = 3$, $L = N_{\rm e} = 2$, and $\eta = 10, 15$ (dBm).}\label{fig_Pt_R}
\end{figure}

\begin{figure}
\centering
\includegraphics*[width=8cm]{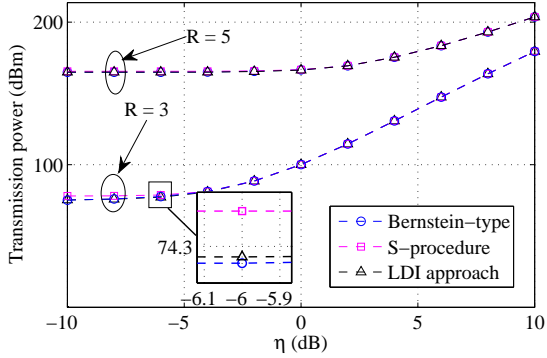}
\caption{Transmit power versus EH outage requirement $\eta$ with $N_{\rm T} = 8$, $K = 3$, $L = N_{\rm e} = 2$, and $R = 3, 5$ (bps/Hz).}\label{fig_Pt_eta}
\end{figure}

The results in Fig.~\ref{fig_Pt_R} indicate that when the EH requirement is increased from $\eta = -10$ (dB) to $\eta = 0$ (dB), all the approaches require more transmit power to satisfy the more demanding constraints. Hence, in the next example, we illustrate the transmit power consumptions of the solutions against the EH requirements. Fig.~\ref{fig_Pt_eta} shows the required transmit power versus the EH outage threshold $\eta$ with $N_{\rm T} = 8$, $K = 3$, $L = 2$, $N_{\rm e} = 2$. We plot the results for $R = 3$ and $R = 5$ (bps/Hz). The results in Fig.~\ref{fig_Pt_eta} show identical characteristics of the probabilistic  restriction solutions as those in Fig.~\ref{fig_Pt_R}. Again, with increased EH outage requirement, all the algorithms demand higher transmit power.

Next, we examine the secrecy rate performance of the proposed safe approximation approaches for the SRM problem \eqref{srcP1}. Fig.~\ref{fig_Cs_Pt} plots the worst-user secrecy rates of the various methods against the transmit power constraint $P_{\rm T}$ for outage tolerance of $5\%$ and $10\%$ for both the secrecy rate and EH constraints. As a baseline scheme, we also plot the secrecy rate of the classic maximal ratio transmission (MRT) scheme. In the MRT scheme, we pick the IR's channel direction for transmit beamforming assuming no eavesdropper is present. Specifically, the transmit covariance matrix is defined as ${\bf Q}_{\rm I} = \left(\frac{P_{\rm T}}{\|{\bf h}\|^2}\right){\bf h}{\bf h}^H$. Hence, it is a very suboptimal method and undergoes severe performance degradation compared with the proposed robust schemes. We set other parameters as $N_{\rm T} = 8$, $K = L = 3, N_{\rm e} = 2$, and $\eta = -5$ (dB). As we can see, the BTI- and LDI-based approaches yield almost identical worst-user secrecy rate whereas the $\mathcal{S}$-procedure based approach has noticeable degradation. Since the secrecy rate $C_{\rm s}$ in \eqref{cs} is an increasing function of the transmit power ${\rm tr}({\bf Q}_{\rm I})$, one can notice identical reflections in the results of Fig.~\ref{fig_Cs_Pt} with increasing $P_{\rm T}$. Also, it is no surprise that stricter outage constraint ($\rho = 0.05$) guarantees higher secrecy rate compared to relaxed outage constraint ($\rho = 0.1$). However, this observation does not apply to the conventional MRT scheme since it does not satisfy the outage constraints.

\begin{figure}
\centering
\includegraphics*[width=8cm]{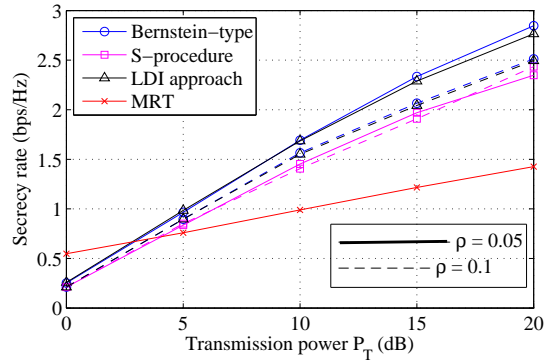}
\caption{Achievable secrecy rate by the proposed algorithms versus $P_{\rm T}$ with $N_{\rm T} = 8$, $K = L = N_{\rm e} = 3$, and $\eta = -5$.}\label{fig_Cs_Pt}
\end{figure}

\begin{figure}
\centering
\includegraphics*[width=8cm]{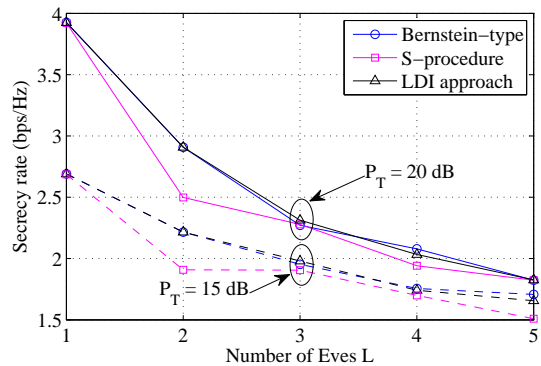}
\caption{The secrecy rate for different number of the eavesdroppers with $N_{\rm T} = 6$, $K = 3$, $N_{\rm e} = 2$, $P_{\rm T} = 15, 20$ (dB), and $\eta = -5$ (dB).}\label{fig_Cs_L}
\end{figure}

Finally, we analyze the achievable worst-user secrecy rate behaviors of the various methods when  the number of eavesdroppers is increased. The achievable secrecy rate versus the number of eavesdroppers (i.e., $L$) is shown in Fig.~\ref{fig_Cs_L} with $N_{\rm T} = 6$, $K = 3$, $N_{\rm e} = 2$, $P_{\rm T} = 15$ (dB), and $\eta = -5$ (dB). It can be observed from this result that the worst-user secrecy rate decreases as more eavesdroppers are present which is reasonable since it is more likely to find an Eve with stronger channel (resulting in worse secrecy) among an increased number of Eves. As we observed in the previous examples, the BTI- and LDI-based schemes outperform the $\mathcal{S}$-Procedure based one in terms of the achievable secrecy rate.

\section{Conclusions}\label{sec_con}
This paper investigated the power minimization as well as the SRM problems with probabilistic QoS constraints in MISOME systems for SWIPT and proposed convex safe approximation based transmit beamforming algorithms with imperfect CSI. Applying SDR techniques, we showed that rank-one optimal transmit covariance solutions are always obtainable for the safe approximation approaches. In particular, we found the maximum achievable secrecy rate under transmit power and outage constraints through optimally solving the power minimization problem. Simulation results have been provided to demonstrate the performance of the proposed approaches.

\appendix
\renewcommand{\thesubsection}{\Alph{subsection}}

\subsection{Proof of Theorem~\ref{thm_rank_bti}}\label{proof_thm_rank_bti}
We start the proof by eliminating ${\bf Q}_{\rm I}$ from the constraints \eqref{PrConst2_c2} and \eqref{PrConst2_c5} such that these two constraints become irrelevant to proof of rank of ${\bf Q}_{\rm I}$. Note that the SOC program (SOCP) constraint \eqref{PrConst2_c2} can be equivalently expressed as
\begin{align}
&\sqrt{\left\|{\bf R}_{{\rm H},i}^{\frac{1}{2}}\left({\bf I}_{N_{{\rm e},i}}\!\otimes\!{\bf Q}_{\rm I}\right){\bf R}_{{\rm H},i}^{\frac{1}{2}}\right\|_F^2 + 2\left\|{\bf R}_{{\rm H},i}^{\frac{1}{2}}\left({\bf I}_{N_{{\rm e},i}}\!\otimes\!{\bf Q}_{\rm I}\right)\hat {\bf h}_{{\rm e},i}\right\|^2}\nonumber\\
&\le \sqrt{\left\|{\bf R}_{{\rm H},i}^{\frac{1}{2}}\left({\bf I}_{N_{{\rm e},i}}\!\otimes\!{\bf Q}_{\rm I}\right)\right\|_F^2\left(\left\|{\bf R}_{{\rm H},i}^{\frac{1}{2}}\right\|_F^2 + 2\left\|\hat {\bf h}_{{\rm e},i}\right\|^2\right)}\nonumber\\
&\Longrightarrow {\rm tr}\left(\left({\bf I}_{N_{{\rm e},i}}\!\otimes\!{\bf Q}_{\rm I}\right)\left({\bf I}_{N_{{\rm e},i}}\!\otimes\!{\bf Q}_{\rm I}\right)^H\right) \le \frac{\psi_i^2}{\alpha_i},\label{sec_socp}
\end{align}
where $\alpha_i \triangleq {{\rm tr}^2\left({\bf R}_{{\rm H},i}\right) + 2{\rm tr}\left({\bf R}_{{\rm H},i}\right)\|\hat {\bf h}_{{\rm e},i}\|^2}$.

Applying the matrix trace identities ${\rm tr}\left[({\bf A}\otimes{\bf B})({\bf C}\otimes{\bf D})\right] = {\rm tr}({\bf AC}\otimes{\bf BD})$, ${\rm tr}({\bf A}\otimes{\bf B}) = {\rm tr}({\bf A}){\rm tr}({\bf B})$ \cite{mat_ana} and \eqref{mat_id3}, we obtain ${\rm tr}\left(\left({\bf I}_{N_{{\rm e},i}}\!\otimes\!{\bf Q}_{\rm I}\right)\left({\bf I}_{N_{{\rm e},i}}\!\otimes\!{\bf Q}_{\rm I}\right)^H\right) = {\rm tr}\left({\bf I}_{N_{{\rm e},i}}\otimes{\bf Q}_{\rm I}{\bf Q}_{\rm I}^H\right) = N_{{\rm e},i}{\rm tr}\left({\bf Q}_{\rm I}{\bf Q}_{\rm I}^H\right)$. Thus, \eqref{sec_socp} can be rewritten as
\begin{align}
{\rm tr}\left({\bf Q}_{\rm I}{\bf Q}_{\rm I}^H\right) \le \frac{\psi_i^2}{\alpha_i N_{{\rm e},i}}
& \Rightarrow {\bf Q}_{\rm I}{\bf Q}_{\rm I}^H \le \kappa_i^2{\bf I}_{N_{\rm T}}\nonumber\\
& \Rightarrow \begin{bmatrix}\kappa_i{\bf I}_{N_{\rm T}} & {\bf Q}_{\rm I}\\
                            {\bf Q}_{\rm I}^H & \kappa_i{\bf I}_{N_{\rm T}}\end{bmatrix} \succeq {\bf 0}, \label{sec_socp2}
\end{align}
where $\kappa_i^2 \triangleq \frac{\psi_i^2}{\alpha_i N_{{\rm e},i}}$. Now we rewrite the constraint \eqref{sec_socp2} as the following LMI:
\begin{eqnarray}
\begin{bmatrix}\kappa_i{\bf I}_{N_{\rm T}} & {\bf 0}\\
               {\bf 0} & \kappa_i{\bf I}_{N_{\rm T}}\end{bmatrix}
               + \begin{bmatrix}{\bf 0} & {\bf Q}_{\rm I}\\
               {\bf Q}_{\rm I}^H & {\bf 0}\end{bmatrix} \succeq {\bf 0},
\end{eqnarray}
which can be eventually rewritten as
\begin{eqnarray}
\begin{bmatrix}\kappa_i{\bf I}_{N_{\rm T}} & {\bf 0}\\
               {\bf 0} & \kappa_i{\bf I}_{N_{\rm T}}\end{bmatrix}
               \!\!\!& \succeq &\!\!\! \begin{bmatrix}{\bf I}_{N_{\rm T}}\\{\bf 0}\end{bmatrix}  {\bf Q}_{\rm I} \begin{bmatrix}{\bf 0} & -{\bf I}_{N_{\rm T}}\end{bmatrix}
               + \begin{bmatrix}{\bf 0} \\ -{\bf I}_{N_{\rm T}}\end{bmatrix} {\bf Q}_{\rm I}^H\nonumber\\
               \!\!\!& &\!\!\! \times \begin{bmatrix}{\bf I}_{N_{\rm T}} & {\bf 0}\end{bmatrix}, \Vert{\bf Q}_{\rm I}\Vert \le \kappa_i. \label{sec_socp3}
\end{eqnarray}
The following lemma transforms the LMI in \eqref{sec_socp3} into a more convenient form.

\begin{lemma}[\bfseries Nemirovski lemma]\label{lemm_nemirovski}
\cite[Lemma 2]{nemirovski_lemma}
Given matrices ${\bf A}$, ${\bf B}$, ${\bf C}$ with ${\bf A} = {\bf A}^H$,
$${\bf A}\succeq{\bf B}^H{\bf X}{\bf C}+{\bf C}^H{\bf X}^H{\bf B},\quad \forall~{\bf X}:\Vert{\bf X}\Vert \le \kappa,$$
if and only if there exists a $\beta \ge 0$ such that
$$\begin{bmatrix}{\bf A}-\beta{\bf C}^H{\bf C}&-\kappa{\bf B}^H\\-\kappa{\bf B}& \beta{\bf I}\end{bmatrix}\succeq {\bf 0}.$$
\end{lemma}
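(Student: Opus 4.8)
The plan is to establish the robust linear matrix inequality as an exact (lossless) reformulation of the semi-infinite constraint, by first collapsing the quantifier over ${\bf X}$ to a single scalar condition and then recognizing the claimed block inequality as its Schur complement. This splits the equivalence into an easy sufficiency direction and a delicate necessity direction.

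First I would pass to the quadratic-form version: the matrix inequality ${\bf A}\succeq{\bf B}^H{\bf X}{\bf C}+{\bf C}^H{\bf X}^H{\bf B}$ for all $\|{\bf X}\|\le\kappa$ is equivalent to ${\bf z}^H{\bf A}{\bf z}\ge 2\Re\{{\bf z}^H{\bf B}^H{\bf X}{\bf C}{\bf z}\}$ for every ${\bf z}$ and every admissible ${\bf X}$. For fixed ${\bf z}$, writing ${\bf u}={\bf B}{\bf z}$ and ${\bf w}={\bf C}{\bf z}$, the worst-case perturbation solves $\max_{\|{\bf X}\|\le\kappa}2\Re\{{\bf u}^H{\bf X}{\bf w}\}=2\kappa\|{\bf u}\|\,\|{\bf w}\|$, attained at a scaled rank-one ${\bf X}$ aligned with ${\bf u}$ and ${\bf w}$. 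Hence the robust constraint reduces to the single inequality ${\bf z}^H{\bf A}{\bf z}\ge 2\kappa\|{\bf B}{\bf z}\|\,\|{\bf C}{\bf z}\|$ for all ${\bf z}$.

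For the sufficiency direction I would use Young's inequality $2\kappa\|{\bf B}{\bf z}\|\,\|{\bf C}{\bf z}\|\le\beta\|{\bf C}{\bf z}\|^2+(\kappa^2/\beta)\|{\bf B}{\bf z}\|^2$, valid for any $\beta>0$. Thus if ${\bf A}-\beta{\bf C}^H{\bf C}-(\kappa^2/\beta){\bf B}^H{\bf B}\succeq{\bf 0}$ for some $\beta>0$, the scalar inequality holds for all ${\bf z}$. Taking the Schur complement of the $\beta{\bf I}$ block shows this matrix condition is exactly the displayed $2\times2$ block LMI; the degenerate case $\beta=0$ forces ${\bf B}{\bf z}={\bf 0}$ on the relevant subspace and is checked directly.

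The hard part will be the converse: from ${\bf z}^H{\bf A}{\bf z}\ge 2\kappa\|{\bf B}{\bf z}\|\,\|{\bf C}{\bf z}\|$ for all ${\bf z}$ I must extract a \emph{single} multiplier $\beta\ge0$ that works uniformly. The pointwise-optimal choice $\beta=\kappa\|{\bf B}{\bf z}\|/\|{\bf C}{\bf z}\|$ depends on ${\bf z}$, so uniformity is not automatic. I expect to certify it through a lossless $\mathcal{S}$-procedure argument (Lemma~\ref{lemm_sproc}) applied after homogenizing the constraint in the variable ${\bf X}$: because the uncertainty enters only through the single quadratic restriction $\|{\bf X}\|\le\kappa$, the $\mathcal{S}$-procedure is tight, and SDP strong duality furnishes the required $\beta$ together with attainment. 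Establishing this exactness -- strong duality and attainment of the dual multiplier -- is the delicate step; once $\beta$ is in hand, the Schur-complement identity of the sufficiency step closes the equivalence.
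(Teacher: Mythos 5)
The paper does not prove this lemma at all: it is imported verbatim from the cited reference, so there is no in-paper argument to compare against. Your proposal is essentially the standard proof of this Petersen/Nemirovski-type result and its main line is sound: the reduction to ${\bf z}^H{\bf A}{\bf z}\ge 2\kappa\|{\bf B}{\bf z}\|\,\|{\bf C}{\bf z}\|$ via the worst-case rank-one ${\bf X}$ is correct, Young's inequality plus the Schur complement of the $\beta{\bf I}$ block gives sufficiency exactly as you say, and losslessness of the single-constraint $\mathcal{S}$-lemma is indeed the right engine for necessity. One point in your sketch needs to be made precise: the spectral-norm ball $\Vert{\bf X}\Vert\le\kappa$ is \emph{not} a single quadratic constraint in ${\bf X}$ itself, so the $\mathcal{S}$-procedure cannot be applied ``in the variable ${\bf X}$.'' The correct lifting is to the joint variable $({\bf z},\boldsymbol\xi)$ with $\boldsymbol\xi={\bf X}{\bf C}{\bf z}$, for which admissibility becomes the single homogeneous quadratic constraint $\|\boldsymbol\xi\|^2\le\kappa^2\|{\bf C}{\bf z}\|^2$; the implication $\|\boldsymbol\xi\|^2-\kappa^2\|{\bf C}{\bf z}\|^2\le 0\Rightarrow {\bf z}^H{\bf A}{\bf z}-2\Re\{({\bf B}{\bf z})^H\boldsymbol\xi\}\ge 0$ then yields, by the lossless $\mathcal{S}$-lemma, a multiplier $\beta'\ge 0$ with $\bigl[\begin{smallmatrix}{\bf A}-\beta'\kappa^2{\bf C}^H{\bf C} & -{\bf B}^H\\ -{\bf B} & \beta'{\bf I}\end{smallmatrix}\bigr]\succeq{\bf 0}$, which a congruence by ${\rm diag}({\bf I},\kappa{\bf I})$ and the substitution $\beta=\beta'\kappa^2$ turns into the claimed LMI. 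You should also record the constraint qualification inherited from the $\mathcal{S}$-lemma (some ${\bf z}$ with ${\bf C}{\bf z}\ne{\bf 0}$ and $\kappa>0$); in fully degenerate cases such as ${\bf C}={\bf 0}$, ${\bf B}\ne{\bf 0}$, ${\bf A}\succeq{\bf 0}$ singular, the ``only if'' direction of the lemma as stated can fail, so the equivalence is only claimed under the nondegeneracy that holds in the paper's application.
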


\begin{figure*}[tb]
{
\begin{eqnarray}
\begin{bmatrix}
\begin{bmatrix}\kappa_i{\bf I}_{N_{\rm T}} & {\bf 0}\\
               {\bf 0} & \kappa_i{\bf I}_{N_{\rm T}}\end{bmatrix}
               - \beta_1\begin{bmatrix}{\bf 0} \\ -{\bf I}_{N_{\rm T}}\end{bmatrix}\begin{bmatrix}{\bf 0} & -{\bf I}_{N_{\rm T}}\end{bmatrix} & \qquad-\kappa_i \begin{bmatrix}{\bf I}_{N_{\rm T}}\\{\bf 0}\end{bmatrix}\\
               -\kappa_i \begin{bmatrix}{\bf I}_{N_{\rm T}} & {\bf 0}\end{bmatrix} & \qquad\beta_1 {\bf I}_{N_{\rm T}}
\end{bmatrix} \succeq {\bf 0}.\label{sec_socp4}
\end{eqnarray}}

{
\begin{eqnarray}
\begin{bmatrix}
\begin{bmatrix}\varphi_k{\bf I}_{N_{\rm T}} & {\bf 0}\\
               {\bf 0} & \varphi_k{\bf I}_{N_{\rm T}}\end{bmatrix}
               - \beta_2\begin{bmatrix}{\bf 0} \\ -{\bf I}_{N_{\rm T}}\end{bmatrix}\begin{bmatrix}{\bf 0} & -{\bf I}_{N_{\rm T}}\end{bmatrix} & \qquad-\varphi_k \begin{bmatrix}{\bf I}_{N_{\rm T}}\\{\bf 0}\end{bmatrix}\\
               -\varphi_k \begin{bmatrix}{\bf I}_{N_{\rm T}} & {\bf 0}\end{bmatrix} & \qquad\beta_2 {\bf I}_{N_{\rm T}}
\end{bmatrix} \succeq {\bf 0}.\label{eh_socp}
\end{eqnarray}}
\hrulefill \normalsize
\end{figure*}
The merit of \emph{Lemma~\ref{lemm_nemirovski}} is that it transforms matrix inequalities into LMIs which do not involve the matrix variable. Based on \emph{Lemma~\ref{lemm_nemirovski}}, we obtain \eqref{sec_socp4} (at the top of the next page) from \eqref{sec_socp3}.
Similarly, the SOCP constraint \eqref{PrConst2_c5} can be rewritten as \eqref{eh_socp}, where $\varphi_k^2 \triangleq \frac{\nu_k^2}{{{\rm tr}^2\left({\bf R}_{{\rm g},k}\right) + 2{\rm tr}\left({\bf R}_{{\rm g},k}\right)\left\|\hat {\bf g}_{k}\right\|^2}}$. Indeed, the SOCP constraints \eqref{PrConst2_c2} and \eqref{PrConst2_c5} have been rewritten without ${\bf Q}_{\rm I}$ in \eqref{sec_socp4} and \eqref{eh_socp}, respectively.

The Lagrangian dual function of problem \eqref{PrConst2} is given by
\begin{multline}
\mathcal{L}\left({\bf Q}_{\rm I}, {\bf Z}, \lambda_i, {\bf C}_i, \mu_k, {\bf D}_k\right) \triangleq
{\rm tr}\left({\bf Q}_{\rm I}\right) - {\rm tr}\left({\bf ZQ}_{\rm I}\right) + \sum_{i=1}^L\sum_{n=1}^{N_{{\rm e},i}}\lambda_i\\
\times{\rm tr}\left(\boldsymbol\Psi_i^{(n,n)}{\bf Q}_{\rm I}\right)
+ \sum_{i=1}^L\sum_{n=1}^{N_{{\rm e},i}}{\rm tr}\left(\boldsymbol\Phi_i^{(n,n)}{\bf Q}_{\rm I}\right) + \sum_{i=1}^L\!\lambda_i \left(- \frac{\sigma_{\rm e}^2}{2^R\sigma_{\rm d}^2}\right.\\
\left. \times {\rm tr}({\bf h}{\bf h}^H{\bf Q}_{\rm I}) + \sqrt{-2\ln(p)}\psi_i - \ln(p)\omega_i - \frac{\sigma_{\rm e}^2}{2^R} + {\sigma_{\rm e}^2}\right)\\
-\sum_{i=1}^L{\rm tr}\left(\omega_i{\bf C}_i\right)
- \sum_{k=1}^K\mu_k{\rm tr}\left({\bf R}_{{\rm g},k}{\bf Q}_{\rm I} + \hat {\bf g}_{k}\hat {\bf g}_{k}^H{\bf Q}_{\rm I}\right)
-\sum_{k=1}^K{\rm tr}\left(\varphi_k\right.\\
\left.\times {\bf D}_k + {\bf R}_{{\rm g},k}^{\frac{1}{2}}{\bf D}_k{\bf R}_{{\rm g},k}^{\frac{1}{2}}{\bf Q}_{\rm I}\right) + \sum_{k=1}^K\mu_k\left(\sqrt{-2\ln(q)}\nu_k - \ln(q)\varphi_k\right.\\
\left. + \frac{\eta_k}{\xi_k}\right), \label{lagg_minP}
\end{multline}
where ${\bf Z}, \lambda_i, {\bf C}_i, \mu_k, {\bf D}_k$ are the Lagrangian dual variables associated with ${\bf Q}_{\rm I}$, \eqref{PrConst2_c1}, \eqref{PrConst2_c3}, \eqref{PrConst2_c4}, and \eqref{PrConst2_c6}, respectively. For notational convenience, $\boldsymbol\Psi_i^{(n,n)}$ and $\boldsymbol\Phi_i^{(n,n)}$ are also defined as the diagonal block sub-matrices of ${\bf R}_{{\rm H},i} + \hat{\bf h}_{{\rm e},i}\hat{\bf h}_{{\rm e},i}^H$ and ${\bf R}_{{\rm H},i}^{\frac{1}{2}}{\bf C}_i{\bf R}_{{\rm H},i}^{\frac{1}{2}}$, respectively. In particular,
${\bf R}_{{\rm H},i}\! + \hat{\bf h}_{{\rm e},i}\hat{\bf h}_{{\rm e},i}^H = \begin{bmatrix}\boldsymbol\Psi_i^{(1,1)} & \cdots & \boldsymbol\Psi_i^{(1,{N_{{\rm e},i}})}\\ \vdots & \ddots & \vdots\\ \boldsymbol\Psi_i^{({N_{{\rm e},i}},1)} & \cdots & \boldsymbol\Psi_i^{({N_{{\rm e},i}},{N_{{\rm e},i}})}\end{bmatrix}$ and ${\bf R}_{{\rm H},i}^{\frac{1}{2}}{\bf C}_i{\bf R}_{{\rm H},i}^{\frac{1}{2}} = \begin{bmatrix}\boldsymbol\Phi_i^{(1,1)} & \cdots & \boldsymbol\Phi_i^{(1,{N_{{\rm e},i}})}\\ \vdots & \ddots & \vdots\\ \boldsymbol\Phi_i^{({N_{{\rm e},i}},1)} & \cdots & \boldsymbol\Phi_i^{({N_{{\rm e},i}},{N_{{\rm e},i}})}\end{bmatrix}$.
The relevant Karush-Kuhn-Tucker (KKT) conditions can be defined as
\begin{subequations}\label{kkt_minP}
\begin{align}
& \frac{\partial\mathcal{L}\left({\bf Q}_{\rm I}, {\bf Z}, \lambda_i, {\bf C}_i, \mu_k, {\bf D}_k\right)}{\partial{\bf Q}_{\rm I}} = {\bf 0},\label{kkt1_minP}\\
& {\bf Z}{\bf Q}_{\rm I} = {\bf 0}, \quad {\bf Q}_{\rm I} \succeq {\bf 0},\label{kkt2_minP}\\
& {\bf Z} \succeq {\bf 0} ~~ \lambda_i \ge 0, ~~ {\bf C}_i \succeq {\bf 0}, \forall i, \mu_k \ge 0, {\bf D}_k \succeq {\bf 0}, \forall k.\label{kkt3_minP}
\end{align}
\end{subequations}
Next, we prove that there exists at least one $\lambda_i > 0$ such that $t \triangleq\frac{\sigma_{\rm e}^2\sum_{i=1}^L\lambda_i }{2^R\sigma_{\rm d}^2}$ is always positive definite. Denoting $\tau \triangleq \sum_{i=1}^L\lambda_i \left(- \frac{\sigma_{\rm e}^2}{2^R} + {\sigma_{\rm e}^2}\right) + \sum_{k=1}^K\frac{\mu_k\eta_k}{\xi_k}$ that includes the terms in the Lagrangian dual function \eqref{lagg_minP} not involving the primal variables, the dual problem of \eqref{PrConst2} is given by
\begin{subequations}\label{PrConst2Dual}
\begin{align}
\min_{{\bf Z}, \{\lambda_i\}, \{{\bf C}_i\}, \{\mu_k\}, \{{\bf D}_k\}} ~~ & \sum_{i=1}^L\lambda_i \left(\frac{\sigma_{\rm e}^2}{2^R} - {\sigma_{\rm e}^2}\right) - \sum_{k=1}^K\frac{\mu_k\eta_k}{\xi_k}\\
{\rm s. t.} ~~ & {\bf Z} \succeq {\bf 0} ~~ \lambda_i \ge 0, ~~ {\bf C}_i \succeq {\bf 0}, \forall i,\\
& \mu_k \ge 0, {\bf D}_k \succeq {\bf 0}, \forall k.
\end{align}
\end{subequations}
Note that the primal problem \eqref{PrConst2} is convex and it can be easily verified that the problem satisfies Slater's condition \cite{boyd}. Therefore, the duality gap is zero. Now, in order to successfully transfer information to the legitimate destination, the transmit power ${\rm tr} \left({\bf Q}_{\rm I}\right)$, which is the objective function of the primal problem \eqref{PrConst2}, must be greater than zero. Therefore, the strict positivity $\sum_{i=1}^L\lambda_i \left(\frac{\sigma_{\rm e}^2}{2^R} - {\sigma_{\rm e}^2}\right) - \sum_{k=1}^K\frac{\mu_k\eta_k}{\xi_k} > 0$ must also hold.

Let us now assume that $\lambda_i = 0, \forall i$. Then it can be easily observed that $\sum_{i=1}^L\lambda_i \left(\frac{\sigma_{\rm e}^2}{2^R} - {\sigma_{\rm e}^2}\right) - \sum_{k=1}^K\frac{\mu_k\eta_k}{\xi_k} \le 0$ for $\mu_k \ge 0, \forall k$, which contradicts with the already established fact. Thus we claim that there exists at least one $i$ for which $\lambda_i > 0$ and hence $t > 0$ must also hold.

Now, according to KKT condition \eqref{kkt1_minP}, we obtain
\begin{multline}\label{kkt12_minP}
{\bf I}_{N_{\rm T}} + \sum_{i=1}^L\sum_{n=1}^{N_{{\rm e},i}}\lambda_i\boldsymbol\Psi_i^{(n,n)} + \sum_{i=1}^L\sum_{n=1}^{N_{{\rm e},i}}\boldsymbol\Phi_i^{(n,n)} - \sum_{i=1}^L\lambda_i \frac{\sigma_{\rm e}^2}{2^R\sigma_{\rm d}^2}{\bf h}{\bf h}^H\\
- \sum_{k=1}^K\mu_k\left({\bf R}_{{\rm g},k} + \hat {\bf g}_{k}\hat {\bf g}_{k}^H\right)
 - \sum_{k=1}^K{\bf R}_{{\rm g},k}^{\frac{1}{2}}{\bf D}_k{\bf R}_{{\rm g},k}^{\frac{1}{2}} = {\bf Z}.
\end{multline}
Let ${\bf X} \triangleq {\bf I}_{N_{\rm T}} + \sum_{i=1}^L\sum_{n=1}^{N_{{\rm e},i}}\lambda_i\boldsymbol\Psi_i^{(n,n)} + \sum_{i=1}^L\sum_{n=1}^{N_{{\rm e},i}}\boldsymbol\Phi_i^{(n,n)} -\sum_{k=1}^K\mu_k\left({\bf R}_{{\rm g},k} + \hat {\bf g}_{k}\hat {\bf g}_{k}^H\right) - \sum_{k=1}^K{\bf R}_{{\rm g},k}^{\frac{1}{2}} {\bf D}_k{\bf R}_{{\rm g},k}^{\frac{1}{2}}$. Thus ${\bf Z} = {\bf X} - t{\bf h}{\bf h}^H$. From KKT condition \eqref{kkt2_minP}, we have ${\bf Z}{\bf Q}_{\rm I} = {\bf 0}$. Furthermore, it can be verified that in order to meet the secrecy rate constraints, it must hold that ${\bf Q}_{\rm I} \ne {\bf 0}$, or equivalently, ${\rm rank}\left({\bf Q}_{\rm I}\right) \ge 1$. Then from \eqref{kkt2_minP}, it follows that ${\rm rank}\left({\bf Z}\right) \le N_{\rm T} - 1$.

A key step in the proof of rank-one solution with secrecy constraints is proving that the matrix ${\bf X}$ is positive definite. Note that for secrecy problems without the EH constraints, the last two negative terms in the expression of ${\bf X}$ do not appear, and hence is readily established, see e.g., in \cite{zheng_chu_miso_out}. In contrast, for problems with EH constraints, this is a key challenging step which we overcome as follows.

Let $r_{\rm X} \triangleq {\rm rank}\left({\bf X}\right)$ denote the rank of ${\bf X}$. According to \cite[Lemma~5]{jrnl_secrecy}, it holds true that ${\rm rank}\left({\bf A}-{\bf B}\right) \geq {\rm rank}({\bf A}) - {\rm rank}({\bf B})$ for two matrices $\bf A$ and $\bf B$ of the same dimension. Thus ${\rm rank}\left({\bf Z}\right) \ge {\rm rank}\left({\bf X}\right) - {\rm rank}\left(t{\bf h}{\bf h}^H\right) = r_{\rm X} - 1$.

If ${\bf X}$ is positive-definite, $r_{\rm X} = N_{\rm T}$ and ${\rm rank}({\bf Z}) \geq N_{\rm T}-1$. However, if ${\rm rank}({\bf Z}) = N_{\rm T}$, i.e., ${\bf Z}$ is of full-rank, then it follows from \eqref{kkt2_minP} that ${\bf Q}_{\rm I} = {\bf 0}$, which cannot be an optimal solution to \eqref{PrConst2}. Therefore, we have ${\rm rank}({\bf Z}) = N_{\rm T}-1$. According to \eqref{kkt2_minP}, we have ${\rm rank}\big({\bf Q}_{\rm I}\big) = 1$.

For the case when $r_{\rm X} < N_{\rm T}$, let $\boldsymbol\Pi = \left[{\boldsymbol\pi}_1, {\boldsymbol\pi}_2, \dots, {\boldsymbol\pi}_{N_{\rm T}-r_{\rm X}}\right]$ with ${\boldsymbol\Pi}^H{\boldsymbol\Pi} = {\bf I}_{N_{\rm T}-r_{\rm X}}$ denote the orthogonal basis for the null space of ${\bf X}$, i.e., ${\bf X}{\boldsymbol\Pi}={\bf 0}$. Then we have
\begin{eqnarray}\label{eigA}
{\boldsymbol\pi}_i^H{\bf Z}{\boldsymbol\pi}_i \!\!\!&=&\!\!\! {\boldsymbol\pi}_i^H\left({\bf X} - t{\bf h}{\bf h}^{H}\right){\boldsymbol\pi}_i = -t|{\bf h}^{H}{\boldsymbol\pi}_i|^2\leq 0,\nonumber\\
\!\!\!& &\!\!\! \mbox{for }i = 1, \dots, N_{\rm T}-r_{\rm X}.
\end{eqnarray}
To guarantee that the Lagrangian in \eqref{lagg_minP} is bounded from below such that the dual function exists, it follows that ${\bf Z \succeq 0}$. Since ${\bf Z \succeq 0}$, it follows from \eqref{eigA} that ${\bf h}^{H}{\boldsymbol\pi}_i = 0, \forall i$, given $t>0$. That is,
\begin{equation}\label{no_info}
{\bf h}{\bf h}^{H}{\boldsymbol\Pi}={\bf 0}.
\end{equation}
As a result, we have
\begin{equation}\label{Apsi}
{\bf Z}{\boldsymbol\Pi}={\bf 0}.
\end{equation}
However, no information will be transferred to the IR in this case since all ${\boldsymbol\pi}_i$'s lie in the null space of ${\bf h}{\bf h}^{H}$ according to \eqref{no_info} \cite[Proof of Proposition~ 4.1]{rui_secrecy_swipt}. Hence positive secrecy rate constraint in \eqref{PrConst1_c1} cannot be satisfied in this case, which contradicts with the assumption $R>0$. Thus $r_{\rm X} = N_{\rm T}$ must hold, which implies that ${\bf X}$ is positive definite and ${\rm rank}\left({\bf Z}\right) \ge N_{\rm T} - 1$.


Combining this result with ${\rm rank}\left({\bf Z}\right) \le N_{\rm T} - 1$, it follows that ${\rm rank}\left({\bf Z}\right) = N_{\rm T} - 1$. Accordingly, from \eqref{kkt2_minP}, we have
\begin{align}
{\rm rank}\left({\bf Q}_{\rm I}\right) = {\rm nullity}({\bf Z}) = N_{\rm T} - {\rm rank}\left({\bf Z}\right) = 1.
\end{align}
Theorem~\ref{thm_rank_bti} is thus proved. \hfill$\blacksquare$

\subsection{Proof of Theorem~\ref{thm_rank_sproc}} \label{proof_thm_rank_sproc}
By defining
\begin{align}
{\boldsymbol\Sigma}_{{\rm H},i} & \!\triangleq\! \left[\!\!\begin{array}{cc}{\bf R}_{{\rm H},i}^{\frac{1}{2}}\left({\bf I}_{N_{{\rm e},i}}\!\otimes\!{\bf Q}_{\rm I}\right){\bf R}_{{\rm H},i}^{\frac{1}{2}} & {\bf R}_{{\rm H},i}^{\frac{1}{2}}\left({\bf I}_{N_{{\rm e},i}}\!\otimes\!{\bf Q}_{\rm I}\right)\hat {\bf h}_{{\rm e},i}\\
\hat {\bf h}_{{\rm e},i}^H\left({\bf I}_{N_{{\rm e},i}}\!\otimes\!{\bf Q}_{\rm I}\right){\bf R}_{{\rm H},i}^{\frac{1}{2}} & \hat {\bf h}_{{\rm e},i}^H\left({\bf I}_{N_{{\rm e},i}}\otimes{\bf Q}_{\rm I}\right)\hat {\bf h}_{{\rm e},i} \end{array}\!\right],\nonumber\\
{\boldsymbol\Lambda}_{{\rm H},i} &\triangleq \left[\begin{array}{cc}\mu_{{\rm H},i}{\bf I}_{N_{\rm T}N_{{\rm e},i}} & {\bf 0}\\
{\bf 0} & \frac{\sigma_{\rm e}^2}{2^R} - {\sigma_{\rm e}^2} - \mu_{{\rm H},i}\gamma_{{\rm e}}^2\end{array}\right]\nonumber,
\end{align}
${\boldsymbol\Theta}_i\left({\bf Q}_{\rm I},\mu_{{\rm H},i}\right)$ in \eqref{sec_sproc_lmi} can be decomposed as
\begin{align}
{\boldsymbol\Theta}_i\left({\bf Q}_{\rm I},\mu_{{\rm H},i}\right) & =
-\left[{\bf R}_{{\rm H},i}^{\frac{1}{2}}~~\hat {\bf h}_{{\rm e},i}\right]^H\left({\bf I}_{N_{{\rm e},i}}\!\otimes\!{\bf Q}_{\rm I}\right)\left[{\bf R}_{{\rm H},i}^{\frac{1}{2}}~~\hat {\bf h}_{{\rm e},i}\right]\nonumber\\
& \qquad + {\boldsymbol\Lambda}_{{\rm H},i} + \frac{\sigma_{\rm e}^2}{2^R\sigma_{\rm d}^2}\left[{\bf 0} ~~ {\bf h}\right]^H{\bf Q}_{\rm I}\left[{\bf 0} ~~ {\bf h}\right].
\end{align}
Similarly, denoting ${\boldsymbol\Sigma}_{{\rm g},k} \triangleq \left[\begin{array}{cc}{\bf R}_{{\rm g},k}^{\frac{1}{2}}{\bf Q}_{\rm I}{\bf R}_{{\rm g},k}^{\frac{1}{2}} & {\bf R}_{{\rm g},k}^{\frac{1}{2}}{\bf Q}_{\rm I}\hat {\bf g}_{k}\\
\hat {\bf g}_{k}^H{\bf Q}_{\rm I}{\bf R}_{{\rm g},k}^{\frac{1}{2}} & \hat {\bf g}_{k}^H{\bf Q}_{\rm I}\hat {\bf g}_{k}\end{array}\right]$ and ${\boldsymbol\Lambda}_{{\rm g},k} \triangleq \left[\begin{array}{cc}\mu_{{\rm g},k}{\bf I}_{N_{\rm T}} & {\bf 0}\\
{\bf 0} & \frac{\eta_k}{\xi_k} - \mu_{{\rm g},k}\gamma_{{\rm e}}^2\end{array}\right],$ we obtain ${\boldsymbol\Upsilon}_k\left({\bf Q}_{\rm I},\mu_{{\rm g},k}\right) = {\boldsymbol\Sigma}_{{\rm g},k} + {\boldsymbol\Lambda}_{{\rm g},k} = \left[{\bf R}_{{\rm g},k}^{\frac{1}{2}}~~\hat {\bf g}_k\right]^H{\bf Q}_{\rm I}\left[{\bf R}_{{\rm g},k}^{\frac{1}{2}}~~\hat {\bf g}_k\right] + {\boldsymbol\Lambda}_{{\rm g},k}$ from \eqref{eh_sproc_lmi}. Let us now define the Lagrangian dual function of problem \eqref{PrConstS}:
\begin{multline}
\mathcal{L}\left({\bf Q}_{\rm I}, {\bf Z}, {\bf V}_{{\rm H},i}, {\bf V}_{{\rm g},k}\right) \triangleq
{\rm tr}\left({\bf Q}_{\rm I}\right) - {\rm tr}\left({\bf ZQ}_{\rm I}\right)\\
+ \sum_{i=1}^L\sum_{n=1}^{N_{{\rm e},i}}{\rm tr}\left(\boldsymbol\Omega_i^{(n,n)}{\bf Q}_{\rm I}\right) - \sum_{i=1}^L{\rm tr}\left({\bf V}_{{\rm H},i}{\boldsymbol\Lambda}_{{\rm H},i}\right) - \sum_{k=1}^K{\rm tr}\left({\bf V}_{{\rm g},k}\right.\\
\left.\times{\boldsymbol\Lambda}_{{\rm g},k}\right)
- \sum_{k=1}^K{\rm tr}\left({\bf V}_{{\rm g},k}\left[{\bf R}_{{\rm g},k}^{\frac{1}{2}}~~\hat {\bf g}_k\right]^H{\bf Q}_{\rm I}\left[{\bf R}_{{\rm g},k}^{\frac{1}{2}}~~\hat {\bf g}_k\right]\right)\\
- \frac{\sigma_{\rm e}^2}{2^R\sigma_{\rm d}^2}\sum_{i=1}^L{\rm tr}\left({\bf V}_{{\rm H},i}\left[{\bf 0} ~~ {\bf h}\right]^H{\bf Q}_{\rm I}\left[{\bf 0} ~~ {\bf h}\right]\right),\label{lagg_minPS}
\end{multline}
where ${\bf Z}, {\bf V}_{{\rm H},i}, {\bf V}_{{\rm g},k}$ are the Lagrangian dual variables associated with ${\bf Q}_{\rm I}$, \eqref{PrConstS_c1}, and \eqref{PrConstS_c2}, respectively. For notational convenience, $\boldsymbol\Omega_i^{(n,n)} \in \mathbb{H}_+^{N_{\rm T}}$ is defined as the diagonal block sub-matrices of $\left[{\bf R}_{{\rm H},i}^{\frac{1}{2}}~~\hat {\bf h}_{{\rm e},i}\right]{\bf V}_{{\rm H},i}\left[{\bf R}_{{\rm H},i}^{\frac{1}{2}}~~\hat {\bf h}_{{\rm e},i}\right]^H$. In particular,
\begin{multline}
\left[{\bf R}_{{\rm H},i}^{\frac{1}{2}}~~\hat {\bf h}_{{\rm e},i}\right]{\bf V}_{{\rm H},i}\left[{\bf R}_{{\rm H},i}^{\frac{1}{2}}~~\hat {\bf h}_{{\rm e},i}\right]^H\\
= \begin{bmatrix}\boldsymbol\Omega_i^{(1,1)} & \cdots & \boldsymbol\Omega_i^{(1,{N_{{\rm e},i}})}\\ \vdots & \ddots & \vdots\\ \boldsymbol\Omega_i^{({N_{{\rm e},i}},1)} & \cdots & \boldsymbol\Omega_i^{({N_{{\rm e},i}},{N_{{\rm e},i}})}\end{bmatrix}.
\end{multline}
The KKT conditions relevant to the proof can be defined as
\begin{subequations}\label{kkt_minPS}
\begin{align}
\frac{\partial\mathcal{L}\left({\bf Q}_{\rm I}, {\bf Z}, {\bf V}_{{\rm H},i}, {\bf V}_{{\rm g},k}\right)}{\partial{\bf Q}_{\rm I}} & = {\bf 0},\label{kkt1_minPS}\\
{\bf Z}{\bf Q}_{\rm I} & = {\bf 0}, ~~ {\bf Q}_{\rm I} \succeq {\bf 0},\label{kkt2_minPS}\\
{\bf Z} \succeq {\bf 0} ~~ {\bf V}_{{\rm H},i} & \succeq {\bf 0}, \forall i, {\bf V}_{{\rm g},k} \succeq {\bf 0}, \forall k.\label{kkt3_minPS}
\end{align}
\end{subequations}
Next, we prove that ${\bf T} \triangleq\frac{\sigma_{\rm e}^2}
{2^R\sigma_{\rm d}^2}\sum_{i=1}^L {\bf V}_{{\rm H},i}$ is always positive definite by contradiction. Denoting $\tau_{\rm S} \triangleq \sum_{i=1}^L{\rm tr}\left({\bf V}_{{\rm H},i}\left[\begin{array}{cc}{\bf 0} & {\bf 0}\\
{\bf 0} & \frac{\sigma_{\rm e}^2}{2^R} - {\sigma_{\rm e}^2}\end{array}\right]\right) - \sum_{k=1}^K{\rm tr}
\left({\bf V}_{{\rm g},k}\left[\begin{array}{cc}{\bf 0} & {\bf 0}\\
{\bf 0} & -\frac{\eta_k}{\xi_k}\end{array}\right]\right)$ that includes the terms in the
Lagrangian dual function not involving the primal variables, the dual problem of \eqref{PrConstS} is given by 
\begin{subequations}\label{PrConst2DualS}
\begin{align}
\min_{{\bf Z}, \{{\bf V}_{{\rm H},i}\}, \{{\bf V}_{{\rm g},k}\}} ~ & ~ -\tau_{\rm S}\\
{\rm s. t.} ~ &~ {\bf Z} \succeq {\bf 0}, ~ {\bf V}_{{\rm H},i} \succeq {\bf 0}, {\bf V}_{{\rm g},k} \succeq {\bf 0}, {\forall i,k}.
\end{align}
\end{subequations}
Note that the primal problem \eqref{PrConstS} is convex and it can be easily verified that the problem satisfies Slater's condition \cite{boyd}. Thus the duality gap is zero. Now, in order to successfully transfer information to the legitimate destination, the transmit power ${\rm tr} \left({\bf Q}_{\rm I}\right)$, which is the objective function of the primal problem \eqref{PrConstS}, must be greater than zero. Therefore, the strict positivity on the objective of the dual problem \eqref{PrConst2DualS} must also hold.

Now, if ${\bf V}_{{\rm H},i} = {\bf 0}, \forall i,$ resulting in ${\bf T} = {\bf 0}$, then it can be easily observed that $\tau_{\rm S} \le 0$ for ${\bf V}_{{\rm g},k} \succeq {\bf 0}, \forall k$, which contradicts with the already established fact. Thus we claim that ${\bf T} \succ {\bf 0}$ must also hold.

According to KKT condition \eqref{kkt1_minPS}, we obtain
\begin{equation}
{\bf X}_{\rm S} - \left[{\bf 0} ~~ {\bf h}\right]{\bf T}\left[{\bf 0} ~~ {\bf h}\right]^H = {\bf Z},
\end{equation}
where ${\bf X}_{\rm S} \triangleq {\bf I}_{N_{\rm T}} + \sum_{i=1}^L\sum_{n=1}^{N_{{\rm e},i}}\boldsymbol\Omega_i^{(n,n)} - \sum_{k=1}^K\left[{\bf R}_{{\rm g},k}^{\frac{1}{2}}~~\hat {\bf g}_k\right]{\bf V}_{{\rm g},k}\left[{\bf R}_{{\rm g},k}^{\frac{1}{2}}~~\hat {\bf g}_k\right]^H$. From KKT condition \eqref{kkt2_minPS}, we have ${\bf Z}{\bf Q}_{\rm I} = {\bf 0}$. Furthermore, it can be verified that in order to meet the secrecy rate constraints, it must hold that ${\bf Q}_{\rm I} \ne {\bf 0}$, or equivalently, ${\rm rank}\left({\bf Q}_{\rm I}\right) \ge 1$. Then from \eqref{kkt2_minP}, it follows that ${\rm rank}\left({\bf Z}\right) \le N_{\rm T} - 1$.

Following similar reasoning as in Appendix~\ref{proof_thm_rank_bti}, it can be proved by contradiction that ${\bf X}_{\rm S} \succ {\bf 0}$.
Applying \cite[Lemma~5]{jrnl_secrecy}, ${\rm rank}\left({\bf Z}\right) \ge {\rm rank}\left({\bf X}_{\rm S}\right) - {\rm rank}\left(\left[{\bf 0} ~~ {\bf h}\right]{\bf T}\left[{\bf 0} ~~ {\bf h}\right]^H\right) = N_{\rm T} - 1$. Combining this argument with ${\rm rank}\left({\bf Z}\right) \le N_{\rm T} - 1$, it follows that ${\rm rank}\left({\bf Z}\right) = N_{\rm T} - 1$. Accordingly, from \eqref{kkt2_minP}, ${\rm rank}\left({\bf Q}_{\rm I}\right) = 1$. \hfill$\blacksquare$

\subsection{Proof of Theorem~\ref{thm_rank_ldi}}\label{proof_thm_rank_ldi}
Applying \emph{Lemma~\ref{lemm_nemirovski}} (Nemirovski lemma), constraints \eqref{PrConstD_c2}, \eqref{PrConstD_c3}, \eqref{PrConstD_c5}, and \eqref{PrConstD_c6} can be equivalently transformed into LMIs not involving ${\bf Q}_{\rm I}$. Therefore, we ignore these constraints in the proof of rank-one ${\bf Q}_{\rm I}$ as in Appendix~\ref{proof_thm_rank_bti}. Accordingly, the Lagrangian of \eqref{PrConstD} is given by
\begin{multline}
\mathcal{L}\left({\bf Q}_{\rm I}, {\bf Z}, \boldsymbol\lambda\right) \triangleq
{\rm tr}\left({\bf Q}_{\rm I}\right) - {\rm tr}\left({\bf ZQ}_{\rm I}\right) + \sum_{i=1}^L\sum_{n=1}^{N_{{\rm e},i}}\lambda_{b,i}{\rm tr}\left(\boldsymbol\Psi_i^{(n,n)}{\bf Q}_{\rm I}\right)\\
+ \sum_{i=1}^L\lambda_{b,i} \left(- \frac{\sigma_{\rm e}^2}{2^R\sigma_{\rm d}^2}{\bf h}^H{\bf Q}_{\rm I}{\bf h} - \frac{\sigma_{\rm e}^2}{2^R} + {\sigma_{\rm e}^2} + 2\sqrt{-\ln(p)}\right.\\
\left.\times \left(\bar\psi_i + \bar\omega_i\right)\right)
- \sum_{k=1}^K\lambda_{e,k}{\rm tr}\left({\bf R}_{{\rm g},k}{\bf Q}_{\rm I} + \hat {\bf g}_{k}\hat {\bf g}_{k}^H{\bf Q}_{\rm I}\right)\\
- \sum_{k=1}^K\lambda_{e,k}\left(- \frac{\eta_k}{\xi_k} - 2\sqrt{-\ln(q)}\left(\bar\nu_k + \bar\varphi_k\right)\right), \label{lagg_minD}
\end{multline}
where $\boldsymbol\lambda \triangleq \left[\lambda_{b,1},\dots, \lambda_{b,L}, \lambda_{e,1}, \dots, \lambda_{e,K}\right]^T$ is the vector of Lagrange multipliers associated with the constraints \eqref{PrConstD_c1} and \eqref{PrConstD_c4}, and ${\bf Z}$ is the Lagrangian dual variable associated with ${\bf Q}_{\rm I}$. The KKT conditions can be defined as
\begin{subequations}\label{kkt_minPD}
\begin{align}
\frac{\partial\mathcal{L}\left({\bf Q}_{\rm I}, {\bf Z}, \boldsymbol\lambda\right)}{\partial{\bf Q}_{\rm I}} & = {\bf 0},\label{kkt1_minPD}\\
{\bf Z}{\bf Q}_{\rm I} & = {\bf 0},~~
{\bf Q}_{\rm I} & \succeq {\bf 0}, ~~ {\bf Z} \succeq {\bf 0} ~~ \boldsymbol\lambda \ge 0.\label{kkt2_minPD}
\end{align}
\end{subequations}
Denoting $\tau_D \triangleq \sum_{i=1}^L\lambda_{b,i}\left(- \frac{\sigma_{\rm e}^2}{2^R} + {\sigma_{\rm e}^2}\right) + \sum_{k=1}^K\lambda_{e,k} \frac{\eta_k}{\xi_k}$ in \eqref{PrConst2Dual}, the dual problem can be defined as
\begin{subequations}\label{PrConst2DualD}
\begin{align}
\max_{{\bf Z}, \boldsymbol\lambda} ~~ & ~~ -\tau_D\\
{\rm s. t.} ~~ &~~ {\bf Q}_{\rm I},~{\bf Z} \succeq {\bf 0} ~~ \boldsymbol\lambda \ge 0.
\end{align}
\end{subequations}
Following similar arguments as in Appendix~\ref{proof_thm_rank_bti}, it can be shown that there exists at least one $\lambda_{b,i}>0$. The rest of the proof is identical to the corresponding part in Appendix~\ref{proof_thm_rank_bti}, and thus is omitted for brevity. \hfill$\blacksquare$

\bibliographystyle{IEEEtran}\footnotesize{
\bibliography{IEEEabrv,refdb}}%

\end{document}